\theoremstyle{definition}
\newtheorem{definition}{Definition}
\theoremstyle{remark}
\newtheorem{remark}[definition]{Remark}
\theoremstyle{plain}
\newtheorem{lemma}[definition]{Lemma}
\newtheorem{theorem}[definition]{Theorem}
\crefname{lemma}{Lemma}{Lemmas} 
\crefname{theorem}{Theorem}{Theorems} 
\crefname{definition}{Definition}{Definitions} 
\crefname{section}{Section}{Sections} 
\newcommand{\G}{\ensuremath{\Gamma}}
\newcommand{\g}{\ensuremath{\gamma}}
\newcommand{\e}{\ensuremath{\varepsilon}}
\newcommand{\eb}{\overline{\e}}
\newcommand{\eq}{\ensuremath{\mathbin{\doteq}}}
\newcommand{\fresh}{\ \#\ }
\newcommand{\fd}[1]{\ensuremath{\mathsf{FD}(#1)}}
\newcommand{\oft}[2]{#1\mathbin{:}#2}
\newcommand{\dimm}{\mathsf{dim}}
\newcommand{\subst}[3]{\ensuremath{#1 [#2 / #3]}}
\newcommand{\dsubst}[3]{\ensuremath{#1 \langle{#2}/{#3}\rangle}}
\newcommand{\tsubst}[2]{\ensuremath{{#1}{#2}}}
\newcommand{\tsext}[3]{\ensuremath{#1 [#2 / #3]}}
\newcommand{\arr}[2]{\ensuremath{#1 \to #2}}
\newcommand{\lam}[2]{\ensuremath{\lambda{#1}.{#2}}}
\newcommand{\app}[2]{\ensuremath{\mathsf{app}({#1},{#2})}}
\newcommand{\prd}[2]{\ensuremath{#1 \times #2}}
\newcommand{\pair}[2]{\ensuremath{\langle #1,#2\rangle}}
\newcommand{\fst}[1]{\ensuremath{\mathsf{fst}(#1)}}
\newcommand{\snd}[1]{\ensuremath{\mathsf{snd}(#1)}}
\newcommand{\bool}{\ensuremath{\mathsf{bool}}}
\newcommand{\notb}[1]{\ensuremath{\mathsf{not}_{#1}}}
\newcommand{\notf}[1]{\ensuremath{\mathsf{not}(#1)}}
\newcommand{\notel}[2]{\ensuremath{\mathsf{notel}_{#1}(#2)}}
\newcommand{\true}{\ensuremath{\mathsf{true}}}
\newcommand{\false}{\ensuremath{\mathsf{false}}}
\newcommand{\ifsym}{\ensuremath{\mathsf{if}}}
\newcommand{\ifb}[4]{\ensuremath{\ifsym_{#1}(#2;#3,#4)}}
\newcommand{\C}{\ensuremath{\mathbb{S}^1}}
\newcommand{\base}{\ensuremath{\mathsf{base}}}
\newcommand{\lp}[1]{\ensuremath{\mathsf{loop}_{#1}}}
\newcommand{\Celimsym}{\ensuremath{\mathbb{S}^1\mathsf{\text{-}elim}}}
\newcommand{\Celim}[4]{\ensuremath{\Celimsym_{#1}({#2};{#3},{#4})}}
\newcommand{\coesym}{\ensuremath{\mathsf{coe}}}
\newcommand{\coe}[4]{\ensuremath{\mathsf{coe}_{#1}^{#2 \rightsquigarrow #3}(#4)}}
\newcommand{\coegeneric}[1]{\coe{#1}{r}{r'}{M}}
\newcommand{\hcomsym}{\ensuremath{\mathsf{hcom}}}
\newcommand{\hcom}[6]{\ensuremath{%
\hcomsym_{#2}^{#1}(#3\rightsquigarrow #4,#5;#6)}}
\newcommand{\hcomgeneric}[2]{\hcom{#1}{#2}{r}{r'}{M}{y.N^0,y.N^1}}
\newcommand{\steps}{\ensuremath{\longmapsto}}
\newcommand{\evals}{\ensuremath{\Downarrow}}
\newcommand{\isval}[1]{\ensuremath{#1\ \mathsf{val}}}
\newcommand{\isvaltab}[1]{\ensuremath{#1 & \ \mathsf{val}}}
\newcommand{\wftm}[2][\Psi]{\ensuremath{{#2}\ \mathsf{tm}\ [#1]}}
\newcommand{\wfval}[2][\Psi]{\ensuremath{#2\ \mathsf{val}\ [#1]}}
\newcommand{\wfdim}[2][\Psi]{\ensuremath{#2\ \dimm\ [#1]}}
\newcommand{\msubsts}[3]{\ensuremath{{#2} : {#1} \to {#3}}}
\newcommand{\msubst}[1]{\msubsts{\Psi'}{#1}{\Psi}}
\newcommand{\td}[2]{\ensuremath{{#1}{#2}}}
\newcommand{\per}[2][\Psi]{\ensuremath{{#2}\ \mathsf{per}\ [#1]}}
\newcommand{\vinper}[4][\Psi]{\ensuremath{{#2} \approx^{#1}_{#4} {#3}}}
\newcommand{\vinperfour}[6][\Psi]{\ensuremath{{#2} \approx^{#1}_{#6} {#3}
  \approx^{#1}_{#6} {#4} \approx^{#1}_{#6} {#5}}}
\newcommand{\inpersym}{\sim}
\newcommand{\inper}[4][\Psi]{\ensuremath{{#2} \sim^{#1}_{#4} {#3}}}
\newcommand{\inpertab}[4][\Psi]{\ensuremath{{#2} & \sim^{#1}_{#4} {#3}}}
\newcommand{\inperfour}[6][\Psi]{\ensuremath{{#2} \sim^{#1}_{#6} {#3}
  \sim^{#1}_{#6} {#4} \sim^{#1}_{#6} {#5}}}
\newcommand{\eqper}[3][\Psi]{\ensuremath{{#2} \approx^{#1} {#3}}}
\newcommand{\eqperfour}[5][\Psi]{\ensuremath{{#2} \approx^{#1} {#3} \approx^{#1}
{#4} \approx^{#1} {#5}}}
\newcommand{\cpretype}[2][\Psi]{\ensuremath{{#2}\ \mathsf{pretype}\ [#1]}}
\newcommand{\ceqpretype}[3][\Psi]{\ensuremath{{#2}\eq {#3} \ \mathsf{pretype}\ [#1]}}
\newcommand{\wfctx}[2][\Psi]{\ensuremath{{#2}\ \mathsf{ctx}\ [#1]}}
\newcommand{\cwftype}[2][\Psi]{\ensuremath{{#2}\ \mathsf{type}\ [#1]}}
\newcommand{\ceqtype}[3][\Psi]{\ensuremath{{#2} \eq {#3}\ \mathsf{type}\ [#1]}}
\newcommand{\coftype}[3][\Psi]{\ensuremath{{#2} \in {#3}\ [#1]}}
\newcommand{\oftype}[4][\Psi]{\ensuremath{{#2} \gg {#3} \in {#4}\ [#1]}}
\newcommand{\eqtm}[5][\Psi]{\ensuremath{{#2} \gg {#3} \eq {#4} \in {#5}\ [#1]}}
\newcommand{\ceqtm}[4][\Psi]{\ensuremath{{#2} \eq {#3} \in {#4}\ [#1]}}
\newcommand{\ceqtmtab}[4][\Psi]{\ensuremath{{#2} \eq\ & {#3} \in {#4}\ [#1]}}
\title{Computational Higher Type Theory I:\\Abstract Cubical Realizability}
\author{Carlo Angiuli\thanks{\texttt{cangiuli@cs.cmu.edu}}\\Carnegie Mellon University
  \and Robert Harper\thanks{\texttt{rwh@cs.cmu.edu}}\\Carnegie Mellon University
  \and Todd Wilson\thanks{\texttt{twilson@csufresno.edu}}\\California State University Fresno}
\date{April, 2016}
\begin{document}

\maketitle{}

\begin{abstract}
  Brouwer's constructivist foundations of mathematics is based on an
  intuitively meaningful notion of computation shared by all
  mathematicians.  Martin-L\"{o}f's \emph{meaning explanations} for
  constructive type theory define the concept of a type in terms of
  computation.  Briefly, a type is a complete (closed) program that
  evaluates to a \emph{canonical type} whose members are complete
  programs that evaluate to \emph{canonical elements} of that type.
  The explanation is extended to incomplete (open) programs by
  \emph{functionality}: types and elements must respect equality in
  their free variables.  Equality is evidence-free---two types or
  elements are at most equal---and equal things are implicitly
  interchangeable in all contexts.

  \emph{Higher-dimensional type theory} extends type theory to account
  for \emph{identifications} of types and elements.  An identification
  witnesses that two types or elements are explicitly interchangeable
  in all contexts by an explicit transport, or coercion, operation.
  There must be sufficiently many identifications, which is ensured by
  imposing a generalized form of the \emph{Kan condition} from
  homotopy theory.  Here we provide a Martin-L\"{o}f-style meaning
  explanation of simple higher-dimensional type theory based on a
  programming language that includes Kan-like constructs witnessing
  the computational meaning of the higher structure of types.  The
  treatment includes an example of a higher inductive type (namely,
  the 1-dimensional sphere) and an example of Voevodsky's
  \emph{univalence} principle, which identifies equivalent types.

  The main result is a \emph{computational canonicity theorem} that
  validates the computational interpretation: a closed boolean
  expression must always evaluate to a boolean value, even in the
  presence of higher-dimensional structure.  This provides the first
  fully computational formulation of higher-dimensional type theory.
\end{abstract}

\section{Introduction}

The goal of this work is to develop a computation-based account of
higher-dimensional type theory for which canonicity at observable
types is true by construction.  Types are considered as descriptions
of the computational behavior of terms, rather than as formal syntax
to which meaning is attached separately.  Types are structured as
collections of terms of each finite dimension.  At dimension zero the
terms of a type are its ordinary members; at higher dimension terms
are lines between terms of the next lower dimension.  The terms of
each dimension satisfy coherence conditions ensuring that the terms
may be seen as abstract cubes.  Each line is to be interpreted as an
identification of two cubes in that it provides evidence for their
exchangeability in all contexts.  It is required that there be
sufficiently many lines that this interpretation is tenable.  For
example, lines must be reversible and closed under concatenation, so
that the identifications present the structure of a pre-groupoid.
Moreover, there must be further lines witnessing the unit, inverse,
and associativity laws of concatention, the structure of an
$\infty$-groupoid.

In this paper we give a ``meaning explanation'' of a computational
higher type theory in the style of Martin-L\"{o}f and of Constable and
Allen, \textit{et
  al.}~\citep{cmcp,martin1984intuitionistic,constableetalnuprl,allen2006innovations}.
Such an explanation starts with a dimension-stratified collection of
terms endowed with a deterministic operational semantics defining what
it means to evaluate closed terms of any dimension to canonical form.
The dimension of a term is the finite set of dimension names it contains; these
dimension names may be thought of as variables ranging over an abstract
interval, in which case terms may be thought of as tracing out lines in a type.
The end points, $0$ and $1$, of the interval may be substituted to obtain the
end points of such lines.  Dimension names may be substituted for one
another without restriction, allowing dimensions to be renamed, identified, or
duplicated.  The semantics of types is given by specifying, at each dimension,
when canonical elements are equal, when general elements are equal, and when
these definitions capture the structure of an $\infty$-groupoid, namely
when they are \emph{cubical} and satisfy the \emph{uniform Kan
condition}~\citep{bch}.

For the sake of clarity, we illustrate this method for a simple type
theory with higher inductive types, one line between types given by an
equivalence, and closed under function and product types.  The main
technical result is the following \emph{canonicity theorem} for closed
terms of boolean type:
\begin{center}
  \textit{A closed term of boolean type of dimension zero has a unique value,
  which is either true or false.}
\end{center}
To our knowledge this is the first result of this kind for higher-dimensional
type theory.

In a follow-on paper we intend to extend our results to consider
type-indexed families of types, and in particular a type of
identifications of the members of a type.  Consideration of families
requires no new semantic machinery, merely the introduction of more lines
between types at each dimension, whose semantics are
already accounted for in the basic setup.  Adding identification types
requires a mild generalization of the Kan operations, but otherwise
presents no new difficulties.

The main remaining question is whether this framework can be extended
to account for Voevodsky's univalence axiom~\citep{hottbook}.  There
is by now strong evidence that it can be given computational meaning
(chiefly, the recent work by~\citet{cohen2016cubical} and ongoing work
by~\citet{hubercanonicity} on proving canonicity for it), but to do so
may require further generalization of the Kan operations.
The treatment of negation as a type identification given here is a
special case of the much more general concept of glueing introduced
by~\citet{cohen2016cubical}.

\section{Overview}

The most basic concept is that of a \emph{dimension name}, which may be thought
of as a formal variable ranging over an abstract interval. A \emph{dimension
context} is a finite set $\Psi$ of dimension names. The judgment $\wfdim{r}$
states that $r$ is a well-formed dimension term relative to $\Psi$, that is, $r$
is either $0$, $1$, or a dimension name $x\in\Psi$. If $\wfdim[\Psi,x]{r'}$ and
$\wfdim{r}$, then $\wfdim{\dsubst{r'}{r}{x}}$ is the result of replacing
occurrences of $x$ in $r'$ by $r$ (that is, $r$ when $r' = x$ and $r'$
otherwise).
A \emph{dimension substitution}
$\msubsts{\Psi'}{\psi}{\Psi}$ is a mapping assigning to each dimension name in
$\Psi$ a dimension term well-formed in $\Psi'$.  Dimension substitutions provide
a structural interpretation of dimension names as variables ranging over an
abstract interval.  A length-$n$ dimension context may then be seen as an
abstract $n$-dimensional cube thought of as an $n$-fold product of intervals.

The collection of terms includes the standard forms of expression (including
type expressions) of type theory, as well as expressions containing dimension
terms, chiefly the Kan operations. The judgment $\wftm[\Psi]{E}$ means that $E$
is closed with respect to term variables and its dimension subterms are
well-formed in $\Psi$.  If $\wftm[\Psi]{E}$ and $\msubsts{\Psi'}{\psi}{\Psi}$,
then $\wftm[\Psi']{\td{E}{\psi}}$ is the term resulting from substituting
dimension names $x\in\Psi$ by $\psi(x)$ in $E$.%
\footnote{When dimension binders occur in $E$, dimension substitution is defined
only up to renaming of bound dimension names, and is understood to avoid capture.}
We call $\td{E}{\psi}$ a \emph{cubical aspect} of $E$ because it geometrically
represents a $\Psi'$-cube obtained by performing face, degeneracy, and diagonal
cubical operations to the $\Psi$-cube $E$.
Terms are equipped with a deterministic operational semantics given by the
judgments $\isval{E}$, stating that $E$ is a value, and $E\evals V$, stating
that $E$ evaluates to $V$.  The evaluation relation is oblivious to dimension
and hence is not indexed by $\Psi$. We use various capitalized meta-variables,
including $M$, $N$, $A$, $B$, to stand for terms.

The main judgment forms of computational higher type theory are
the \emph{exact equalities}:
\begin{enumerate}
\item Exact type equality: $\ceqtype[\Psi]{A}{B}$.
\item Exact term equality: $\ceqtm[\Psi]{M}{N}{A}$.
\end{enumerate}
Exact equality is extensional, rather than intensional.  Two special
forms of judgment are derived from these:
\begin{enumerate}
\item Type formation: $\cwftype[\Psi]{A}$, which means
  $\ceqtype[\Psi]{A}{A}$.
\item Term formation: $\coftype[\Psi]{M}{A}$, which means
  $\ceqtm[\Psi]{M}{M}{A}$.
\end{enumerate}

The type formation judgment states that $A$ is a type of dimension
specified by $\Psi$.  When $\Psi$ is empty, $A$ is a type in the
familiar sense.  When $\Psi$ is non-empty, say $\Psi=(\Psi',x)$ with
$x\notin\Psi'$, then $A$ is a \emph{type line} connecting end points
$\dsubst{A}{0}{x}$ and $\dsubst{A}{1}{x}$, both of which are types of dimension
$\Psi'$.  Similarly, the membership judgment states that $M$ is a \emph{line in
the type line} $A$.  When $\Psi$ is empty, this means that $M$ inhabits $A$ in
the usual sense.  When $\Psi$ is $(\Psi',x)$ and $A$ is \emph{homogeneous} in
$x$ (in that $x$ does not occur in $A$), the membership judgment means that $M$
is an ordinary $x$-line in $A$ between $\dsubst{M}{0}{x}$ and
$\dsubst{M}{1}{x}$.  When $A$ depends on $x$, $A$ is a \emph{heterogeneous type
line} underwriting the \emph{coercion}, or \emph{transport}, of members of
$\dsubst{A}{0}{x}$ to $\dsubst{A}{1}{x}$; a member $M$ of $A$ can be thought
of as a homogeneous $x$-line in $\dsubst{A}{1}{x}$ between $\dsubst{M}{1}{x}$
and the coercion in $x$ along $A$ of $\dsubst{M}{0}{x}$.

The meanings of these judgments are given in terms of several subsidiary
concepts.  First, we designate certain values of each dimension as naming
partial equivalence relations (PERs) on values of the same dimension, and
specify that two such values are related when they name the same PER.  We write
$\eqper[\Psi]{A}{B}$ to mean that values $A$ and $B$ name the same PER on
values, and write $\vinper[\Psi]{M}{N}{A}$ to mean that $M$ and $N$ are
equivalent values according to the PER named by the value $A$.

Second, we define when $A$ and $B$ name \emph{equal pretypes}, and, for a
pretype $A$, when $M$ and $N$ are \emph{equal terms in $A$}.  These concepts
extend PER equality and membership from values to closed terms by
evaluation, so that equal pretypes evaluate to values naming the same PER, and
equal members of a pretype evaluate to equivalent values in the corresponding
PER.  Their precise definition is somewhat subtle and includes conditions
ensuring that pretypes and members of pretypes have coherent cubical aspects in
a sense to be made precise in \cref{sec:meanings} below. In particular, all the
cubical aspects of a pretype are themselves pretypes---if $\cpretype{A}$, then
for all $\msubst{\psi}$, $\cpretype{\td{A}{\psi}}$.  Similarly, if
$\cpretype{A}$ and $\msubst{\psi}$ and $\coftype{M}{A}$, then
$\coftype[\Psi']{\td{M}{\psi}}{\td{A}{\psi}}$.


A pretype $A$ is \emph{cubical} when $\vinper{M}{N}{A_0}$ implies
$\ceqtm{M}{N}{\td{A}{\psi}}$ for all $\msubst{\psi}$, where
$\td{A}{\psi}\evals A_0$.  This condition states that the values of a
pretype must be full members of that pretype, and hence have
coherent aspects.  A pretype $A$ is
\emph{uniformly Kan} whenever it supports heterogeneous coercion as
described above, and is, moreover, closed under a homogeneous
composition operation.  Finally, a pretype is a \emph{type} if it is both
cubical and Kan.  The exact formulation of the Kan condition is
particular to our setting, but is broadly in line with the
formulations given by~\citet{bch,licata2014cubical,cohen2016cubical}.  The
general idea is to ensure that type lines can be operationalized as
coercions, and that there are sufficiently many lines to support their
interpretation as identifications.

With these definitions in hand, we then define an illustrative
collection of simple types.  Specifically, we consider two higher
inductive types, namely the booleans ($\bool$) and the circle ($\C$),
negation ($\notb{x}$) as a line between $\bool$ and itself given by
negation, and cartesian products ($\prd{A}{B}$) and function spaces
($\arr{A}{B}$).  We show that
the standard rules for these types are true under the definitions of
these types.  In particular, their terms satisfy the characteristic
exact equalities associated with these types.

The basic judgments of the type theory are extended to open terms
(those with free term variables) by \emph{judgments-in-context}, or
\emph{sequents}, of the form $\eqtm[\Psi]{\G}{M}{N}{A}$, where $\G$ is
a finite sequence $\oft{a_1}{A_1},\dots,\oft{a_n}{A_n}$ with
$n\geq 0$.  The meaning of such a judgment is given by
\emph{functionality}, which states that equal terms of the types $A_i$
are sent to equal terms of type $A$.  This is enough to
ensure that standard equational reasoning principles are valid for
open terms, in particular that equals may be silently replaced by
equals.

It is important to note that the entailment expressed by a sequent is
\emph{not} a derivability judgment in the sense of formal type theory,
which is concerned with \emph{formal proofs} of propositions (viewed
as elements of types), but rather expresses an intuitionistically true
entailment witnessed by a computation mapping evidence for the
hypotheses into evidence for the conclusion.  There is therefore no
reason to expect, much less demand, that such entailments are
decidable; rather, they are expressions of truths that must be
witnessed with computable evidence.


\pagebreak 
\subsection*{Acknowledgements}

Apart from the overarching influence of Martin-L\"{o}f and Constable,
the main antecedents to this work are the two-dimensional type theory
given by~\citet{lh2dtt}, the uniform Kan cubical model of homotopy
type theory given by~\citet{bch} and the subsequent cubical type
theories given by~\citet{cohen2016cubical}\footnote{Also discussed in
  a number of informal notes by Coquand circulated on
  \url{homotopytypetheory@googlegroups.com}.}  and
by~\citet{licata2014cubical}.  We are greatly indebted to Marc Bezem,
Evan Cavallo, Kuen-Bang Hou (Favonia), Simon Huber, Dan Licata, and Ed
Morehouse who have provided much helpful feedback on this work.

The first two authors gratefully acknowledge the support of the Air
Force Office of Scientific Research through MURI grant
FA9550-15-1-0053.  Any opinions, findings and conclusions or
recommendations expressed in this material are those of the authors
and do not necessarily reflect the views of the AFOSR.  The third
author gratefully acknowledges California State University, Fresno for
supporting his sabbatical semester, and Carnegie Mellon University for
making possible his visit.

\clearpage
\section{Programming language}
\label{sec:opsem}

The programming language itself has two sorts, dimensions and terms,
and binders for both sorts.  Terms are an ordinary untyped lambda calculus
with constructors; dimensions are either dimension
constants ($0$ or $1$) or one of countably many dimension names
($x,y,\dots$) behaving like nominal constants~\citep{pittsnominal}.
Dimension terms
occur at specific positions in some terms; for example, $\lp{r}$ is a term
for any dimension term $r$.  The operational semantics is
defined on terms that are closed with respect to term variables but may
contain free dimension names.

Dimension names represent generic elements of an abstract interval whose end
points are notated $0$ and $1$. While one may sensibly substitute any dimension
term for a dimension name, terms are \emph{not} to be understood solely in terms
of their dimensionally-closed instances (namely, their end points). Rather, a
term's dependence on dimension names is to be understood generically;
geometrically, one might imagine additional unnamed points in the interior of
the abstract interval.

The language features two terms that are specific to higher type
theory.  The first, called \emph{coercion}, has the form
$\coegeneric{x.A}$, where $x.A$ is a type line, $r$ is the
\emph{starting} dimension and $r'$ is the \emph{ending} dimension.
Coercion transports a term $M$ from $\dsubst{A}{r}{x}$ to
$\dsubst{A}{r'}{x}$ using the type line $x.A$ as a guide.
Coercion from $r$ to itself has no effect, up to exact
equality.  Coercion from $0$ to $1$ or vice versa is transport, which
applies one direction of the equivalence induced by the type line.
Coercion from $0$ or $1$ to a dimension name $y$ creates a $y$-line in 
$\dsubst{A}{y}{x}$, and coercion from $y$ to $0$ or $1$ 
yields a line between one end point of the input $y$-line and
the transport of the opposite end point.
Finally, coercion from one dimension name to another reorients the
line from one dimension to another.

The second, called \emph{homogeneous Kan composition}, has the form
$\hcomgeneric{r_1}{A}$, where $r_1$ is the \emph{extent}, $r$ is the
\emph{starting} dimension, and $r'$ is the \emph{ending} dimension.
The term $M$ is called the \emph{cap}, and the terms $N^0$ and $N^1$
form the \emph{tube} of the composition.%
\footnote{If $r_1=x$ and $x$ occurs in $N^\e$, then the tube sides are actually
$\dsubst{N^\e}{\e}{x}$ for $\e=0,1$, representing that $x=\e$ on the $N^\e$ side
of the composition problem. In the present description, we assume that $x$ does
not occur in $N^\e$; the precise typing rules for Kan composition are given in
\cref{def:kan}.}
This composition is well-typed when the starting side of each $N^\e$ coincides
(up to exact equality) with the $\e$ side of the cap. When $r_1$ is a dimension
name $x$, the composition results in an $x$-line, called the \emph{composite},
whose $\e$ sides coincide with the ending sides of each $N^\e$. The composite is
easily visualized when $r=0$ and $r'=1$:

\[
\renewcommand{\objectstyle}{\scriptstyle}
\renewcommand{\labelstyle}{\textstyle}
\xymatrix@=0.75em{
  {} \ar[d] \ar[r] & x \\ y
}
\qquad
\xymatrix@C=8em@R=3em{
  \bullet \ar[d]_{N^0} \ar[r]^{M} &
  \bullet \ar[d]^{N^1} \\
  \dsubst{N^0}{1}{y} \ar@{-->}[r]_{\hcomsym^x_A(0\rightsquigarrow 1)} &
  \dsubst{N^1}{1}{y}}
\]
The case of $r=1$ and $r'=0$ is symmetric, swapping the roles of the cap and the
composite.

When the starting dimension is $r=0$ (or, analogously, $r=1$) and the ending
dimension is $r'=y$, where $y$ does not occur in $M$, the Kan composition yields
the interior of the $x,y$-square depicted above, called the \emph{filler}. One
may think of this composition as sweeping out that square by sliding the cap from
$y=0$ to any point in the $y$ dimension, much in the manner of opening a window
shade. The filler is simultaneously an $x$-line identifying the two tube sides
with each other, and a $y$-line identifying the cap with the composite.  

When $r=y$ and $r'$ is $0$ or $1$, the composition may be visualized as closing
a window shade, starting in the ``middle'' and heading towards the roll at one
end or the other.
When both $r$ and $r'$ are dimension names, the result is harder to visualize,
and is best understood formally, as is also the case where $r=0$ and $r'=y$ but
$y$ does occur in $M$.

Finally, there are two cases in which the composition scenario trivializes.
When $r=r'=0$ or $r=r'=1$, the composition is the cap itself, intuitively
because the window shade does not move from its starting position at the cap.
When $r_1$ is $0$ (or $1$), rather than a dimension name, the composition is
simply $\dsubst{N^0}{r'}{y}$ (or $\dsubst{N^1}{r'}{y}$), because the composition
has no extent beyond that end point. These two cases are important because they
ensure, respectively, that the $y$ and $x$ end points of the $x,y$-filler are as
depicted above.

\subsection{Terms}

\[\begin{aligned}
M &:=
\arr{A}{B} \mid
\prd{A}{B} \mid
\bool \mid
\notb{r} \mid
\C \\&\mid
\lam{a}{M} \mid
\app{M}{N} \mid
\pair{M}{N} \mid
\fst{M} \mid
\snd{M} \\&\mid
\true \mid
\false \mid
\ifb{A}{M}{N_1}{N_2} \mid
\notel{r}{M} \\&\mid
\base \mid
\lp{r} \mid
\Celim{A}{M}{N_1}{x.N_2} \\&\mid
\coegeneric{x.A} \mid
\hcomgeneric{r_1}{A}
\end{aligned}\]

We use capital letters like $M$, $N$, and $A$ to denote terms, $r$,
$r'$, $r_1$ to denote dimension terms, $x$ to denote dimension names, $\e$
to denote dimension constants ($0$ or $1$), and $\eb$ to denote the
opposite dimension constant of $\e$.  We write $x.-$ for dimension
binders, $a.-$ for term binders, and $\fd{M}$ for the set of dimension
names free in $M$.  Dimension substitution
$\dsubst{M}{r}{x}$ and term substitution $\subst{M}{N}{a}$ are defined
in the usual way.  We write $\notf{M}$ as shorthand for the term
$\ifb{\bool}{M}{\false}{\true}$.

\begin{remark}
In a follow-on paper, we will generalize $\hcomsym$ to allow for
$n\geq 1$ pairs of tubes, as follows:
\[
\hcomsym_{A}^{r_1,\dots,r_n}(r\rightsquigarrow r',M;
y.N^0_1,y.N^1_1,\dots,y.N^0_n,y.N^1_n)
\]
This $\hcomsym$ operation (and a corresponding generalization of the Kan
conditions in \cref{sec:meanings}) is needed to define identification types.
\end{remark}

\subsection{Operational semantics}

The following describes a deterministic weak head reduction evaluation
strategy for closed terms in the form of a transition system with two
judgments:
\begin{enumerate}
\item $\isval{E}$, stating that $E$ is a \emph{value}, or
  \emph{canonical form}.
\item $E\steps E'$, stating that $E$ takes \emph{one step of
    evaluation} to $E'$.
\end{enumerate}
These judgments are defined so that if $\isval{E}$, then
$E\not\steps$, but the converse need not be the case.  As usual, we
write $E\steps^* E'$ to mean that $E$ transitions to $E'$ in zero
or more steps.  We say $E$ evaluates to $V$, written $E \evals V$, when
$E\steps^* V$ and $\isval{V}$.

Most of the evaluation rules are standard, and evaluate only principal arguments
of elimination forms. The principal arguments of $\hcomsym$ and $\coesym$ are
their type subscripts, whose head constructors determine how those terms
evaluate. In the present system, the only non-canonical type subscript of
interest is $\notb{\e}$.

Determinacy is a strong condition that implies that a term has at most
one value.
\begin{lemma}[Determinacy]
  If $M\steps M_1$ and $M\steps M_2$, then $M_1 = M_2$.
\end{lemma}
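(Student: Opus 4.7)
The plan is to proceed by rule induction on the derivation of $M \steps M_1$, with inner case analysis on the derivation of $M \steps M_2$. The essential observation is that a weak-head strategy fixes, for each non-value term, a unique ``active'' subterm position, so that at most one evaluation rule can apply; the proof then reduces to verifying that the contractum of each redex is syntactically forced by the form of $M$ and that each search case can appeal to the induction hypothesis on the unique principal subterm.

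In more detail, I would enumerate the possible head constructors of $M$. Introduction forms (such as $\lam{a}{M}$, $\pair{M}{N}$, $\true$, $\false$, $\base$, $\lp{r}$, $\notel{r}{M}$, and the type constructors) are themselves values, contradicting $M \steps M_1$, so these cases are vacuous. For elimination forms (such as $\app{M_0}{N}$, $\fst{M_0}$, $\ifb{A}{M_0}{N_1}{N_2}$, and $\Celim{A}{M_0}{N_1}{x.N_2}$), I would case-split on whether the principal argument $M_0$ is a value: if so, exactly one reduction rule fires and its contractum is uniquely determined by the head constructor of $M_0$; if not, only the search rule stepping $M_0$ applies, and the induction hypothesis gives $M_1 = M_2$. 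The forms $\coegeneric{x.A}$ and $\hcomgeneric{r_1}{A}$ are handled analogously, with the type subscript $A$ playing the role of the principal argument and the type-directed rules for $\bool$, $\notb{\e}$, $\C$, $\arr{A}{B}$, and $\prd{A}{B}$ taking the place of the usual $\beta$-rules.

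The main obstacle, and the only place where the argument is not purely routine, is ensuring that the various rules for $\coesym$ and $\hcomsym$ do not overlap once the dimension-level degenerate cases are taken into account. For example, when $r_1$ is $0$ or $1$ the composition is meant to reduce directly to a tube face, independently of $A$, so the search rule that would otherwise evaluate $A$ must be disabled in that situation. The standard remedy is to guard each search rule with explicit side conditions---``$A$ is not a value'' together with the appropriate constraints on $r$, $r'$, $r_1$---so that the degenerate cases and the type-directed cases cover disjoint syntactic patterns. Once those side conditions are stated, checking that only one rule fires for each form of $M$ is a mechanical enumeration, and the induction goes through for $\coesym$ and $\hcomsym$ in exactly the same shape as for the other eliminators.
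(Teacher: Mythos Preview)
The paper does not actually prove this lemma; it is stated without proof, presumably because determinacy is evident by inspection once the rules are laid out. Your plan by rule induction on $M \steps M_1$ with case analysis on $M \steps M_2$ is the standard argument and would succeed.

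One correction to your diagnosis of the ``main obstacle'': there is no type-independent degenerate rule for $\hcomsym$ in this system. The reductions $\hcom{\e}{A}{r}{r'}{M}{\dots} \steps \dsubst{N^\e}{r'}{y}$ and $\hcom{x}{A}{r}{r}{M}{\dots} \steps M$ are stated only for $A = \bool$ and $A = \C$, not for arbitrary $A$; at $\arr{A}{B}$, $\prd{A}{B}$, and $\notb{w}$ the $\hcomsym$ takes its type-directed step regardless of $r_1$, $r$, $r'$. Consequently the search rule for $\hcomsym$ (and likewise for $\coesym$) is already disjoint from the type-directed rules simply because its premise $A \steps A'$ forces $A$ to be a non-value, while every type-directed rule fires only when $A$ is a canonical type. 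No additional side conditions on $r$, $r'$, $r_1$ are needed to make the system deterministic. The only genuinely delicate case is $\coesym$ at $\notb{-}$, where determinacy follows from the exhaustive and mutually exclusive case split on whether the bound dimension coincides with the dimension in $\notb{-}$ and on the shapes of $r$, $r'$; but this is already built into the rules as stated.
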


Stability states that evaluation does not introduce any new dimension names.
\begin{lemma}[Stability]
  If $M\steps M'$, then $\fd{M'}\subseteq\fd{M}$.
\end{lemma}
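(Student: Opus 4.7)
The plan is to prove the lemma by structural induction on the derivation of $M \steps M'$, dividing the evaluation rules into three broad classes: the congruence (``search'') rules that step a principal subterm of an elimination form or a type subscript; the $\beta$-like principal reductions that fire once all relevant subterms are values; and the specific computation rules for $\coesym$ and $\hcomsym$. In all three classes, the goal is to check that every dimension name appearing in the reduct was already present in the redex.

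For the congruence cases the reduct replaces some principal subterm $M_0$ with $M_0'$ while all sibling subterms remain intact. Since $\fd{\cdot}$ is obtained as a union over subterms, the inductive hypothesis $\fd{M_0'} \subseteq \fd{M_0}$ propagates immediately to the enclosing term. For the $\beta$-like rules (application of a $\lambda$, $\fst{}$/$\snd{}$ of a pair, $\ifsym$ on $\true$/$\false$, $\Celimsym$ on $\base$, and the reductions triggered when a type subscript such as $\notb{\e}$ evaluates), the reduct is obtained by an ordinary term substitution $\subst{M_0}{N}{a}$ of existing subterms of the redex. This case reduces to the auxiliary observation that $\fd{\subst{M_0}{N}{a}} \subseteq \fd{M_0} \cup \fd{N}$, proved by a trivial induction on $M_0$ and subsumed under the same statement we are proving.

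The substantive work is in the $\coesym$ and $\hcomsym$ cases. Here the reduct may involve a dimension substitution $\dsubst{N_0}{r}{y}$ into a subterm (as in the $r_1 = 0$ and $r_1 = 1$ collapses of $\hcomsym$, which yield $\dsubst{N^0}{r'}{y}$ or $\dsubst{N^1}{r'}{y}$), the projection of an end point via $\dsubst{\cdot}{\e}{x}$ (as in the ``reversal'' form of coercion into a dimension constant), or a degenerate collapse such as $\coe{x.A}{r}{r}{M} \steps M$. For each such reduct one inspects the recipe used and appeals to the general fact $\fd{\dsubst{N_0}{r}{y}} \subseteq (\fd{N_0} \setminus \{y\}) \cup \fd{r}$, observing that every dimension term $r$ mentioned in the reduct already appears as a dimension argument of the surrounding $\coesym$ or $\hcomsym$ in the redex, and hence lies in $\fd{M}$.

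The main obstacle is not any single difficult step but the bookkeeping burden of treating each $\coesym$ and $\hcomsym$ reduction in turn, together with the reductions for composite type subscripts like $\notb{\e}$. In every case the check is mechanical, since the reduction rules are designed precisely to reuse only the dimension names already supplied to them; no fresh dimension name is ever invented on the right-hand side of a transition.
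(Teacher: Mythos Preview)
Your argument is correct: the lemma follows by a routine induction on the transition relation, using the expected auxiliary facts about free dimension names under term and dimension substitution. The paper itself states Stability without proof (just as it does for Determinacy), so there is no substantive comparison to make; your outline is exactly the kind of mechanical case analysis the authors are implicitly relying on.
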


\paragraph{Types}

\[
\infer
  {\notb{\e} \steps \bool}
  {}
\]
\[
\infer
  {\isval{\arr{A}{B}}}
  {}
\qquad
\infer
  {\isval{\prd{A}{B}}}
  {}
\qquad
\infer
  {\isval{\bool}}
  {}
\qquad
\infer
  {\isval{\notb{x}}}
  {}
\qquad
\infer
  {\isval{\C}}
  {}
\]

\paragraph{Hcom/coe}

\[
\infer
  {\coe{x.A}{r}{r'}{M} \steps \coe{x.A'}{r}{r'}{M}}
  {A\steps A'}
\]

\[
\infer
  {\hcomgeneric{r_1}{A} \steps \hcomgeneric{r_1}{A'}}
  {A\steps A'}
\]

\paragraph{Function types}

\[
\infer
  {\app{M}{N} \steps \app{M'}{N}}
  {M \steps M'}
\qquad
\infer
  {\app{\lam{a}{M}}{N} \steps \subst{M}{N}{a}}
  {}
\qquad
\infer
  {\isval{\lam{a}{M}}}
  {}
\]

\[
\infer
  {\hcomgeneric{r_1}{\arr{A}{B}} \steps
   \lam{a}{\hcom{r_1}{B}{r}{r'}{\app{M}{a}}{y.\app{N^0}{a},y.\app{N^1}{a}}}}
  {}
\]

\[
\infer
  {\coe{x.\arr{A}{B}}{r}{r'}{M} \steps
   \lam{a}{\coe{x.B}{r}{r'}{\app{M}{\coe{x.A}{r'}{r}{a}}}}}
  {}
\]

\paragraph{Product types}

\[
\infer
  {\fst{M} \steps \fst{M'}}
  {M \steps M'}
\qquad
\infer
  {\snd{M} \steps \snd{M'}}
  {M \steps M'}
\qquad
\infer
  {\isval{\pair{M}{N}}}
  {}
\]

\[
\infer
  {\fst{\pair{M}{N}} \steps M}
  {}
\qquad
\infer
  {\snd{\pair{M}{N}} \steps N}
  {}
\]

\[
\infer
  {\begin{array}{c}
   \hcomgeneric{r_1}{\prd{A}{B}} \\
   \steps \\
   \pair{\hcom{r_1}{A}{r}{r'}{\fst{M}}{y.\fst{N^0},y.\fst{N^1}}}{\hcom{r_1}{B}{r}{r'}{\snd{M}}{y.\snd{N^0},y.\snd{N^1}}}
   \end{array}}
  {}
\]~

\[
\infer
  {\coe{x.\prd{A}{B}}{r}{r'}{M} \steps
   \pair{\coe{x.A}{r}{r'}{\fst{M}}}
        {\coe{x.B}{r}{r'}{\snd{M}}}}
  {}
\]

\paragraph{Booleans}

\[
\infer
  {\hcomgeneric{\e}{\bool} \steps \dsubst{N^\e}{r'}{y}}
  {}
\qquad
\infer
  {\hcomgeneric{x}{\bool} \steps M}
  {r = r'}
\]

\[
\infer
  {\isval{\true}}
  {}
\qquad
\infer
  {\isval{\false}}
  {}
\qquad
\infer
  {\isval{\hcomgeneric{x}{\bool}}}
  {r \neq r'}
\]

\[
\infer
  {\ifb{A}{M}{T}{F} \steps \ifb{A}{M'}{T}{F}}
  {M \steps M'}
\quad
\infer
  {\ifb{A}{\true}{T}{F} \steps T}
  {}
\quad
\infer
  {\ifb{A}{\false}{T}{F} \steps F}
  {}
\]

\[
\infer
  {\begin{array}{c}
   \ifb{A}{\hcomgeneric{x}{\bool}}{T}{F} \\
   \steps \\
   \hcom{x}{A}{r}{r'}{\ifb{A}{M}{T}{F}}
    {y.\ifb{A}{N^0}{T}{F},y.\ifb{A}{N^1}{T}{F}}
   \end{array}}
  {r \neq r'}
\]~

\[
\infer
  {\coe{x.\bool}{r}{r'}{M} \steps M}
  {}
\]

\paragraph{Circle}

\[
\infer
  {\hcomgeneric{\e}{\C} \steps \dsubst{N^\e}{r'}{y}}
  {}
\qquad
\infer
  {\hcomgeneric{x}{\C} \steps M}
  {r = r'}
\]

\[
\infer
  {\lp{\e} \steps \base}
  {}
\qquad
\infer
  {\isval{\base}}
  {}
\qquad
\infer
  {\isval{\lp{x}}}
  {}
\qquad
\infer
  {\isval{\hcomgeneric{x}{\C}}}
  {r \neq r'}
\]

\[
\infer
  {\Celim{A}{M}{P}{z.L} \steps \Celim{A}{M'}{P}{z.L}}
  {M \steps M'}
\]

\[
\infer
  {\Celim{A}{\base}{P}{z.L} \steps P}
  {}
\qquad
\infer
  {\Celim{A}{\lp{w}}{P}{z.L} \steps \dsubst{L}{w}{z}}
  {}
\]

\[
\infer
  {\begin{array}{c}
   \Celim{A}{\hcomgeneric{x}{\C}}{P}{z.L} \\
   \steps \\
   \hcom{x}{A}{r}{r'}{\Celim{A}{M}{P}{z.L}}
    {y.\Celim{A}{N^0}{P}{z.L},y.\Celim{A}{N^1}{P}{z.L}}
   \end{array}}
  {r \neq r'}
\]~

\[
\infer
  {\coe{x.\C}{r}{r'}{M} \steps M}
  {}
\]

\paragraph{Not}

\[
\infer
  {\isval{\notel{x}{M}}}
  {}
\qquad
\infer
  {\notel{0}{M} \steps \notf{M}}
  {}
\qquad
\infer
  {\notel{1}{M} \steps M}
  {}
\]

\[
\infer
  {\coe{x.\notb{x}}{\e}{\eb}{M} \steps \notf{M}}
  {}
\qquad
\infer
  {\coe{x.\notb{x}}{\e}{\e}{M} \steps M}
  {}
\]

\[
\infer
  {\coe{x.\notb{x}}{0}{x}{M} \steps \notel{x}{\notf{M}}}
  {}
\qquad
\infer
  {\coe{x.\notb{x}}{1}{x}{M} \steps \notel{x}{M}}
  {}
\]

\[
\infer
  {\coe{x.\notb{x}}{x}{r}{M} \steps \coe{x.\notb{x}}{x}{r}{M'}}
  {M\steps M'}
\qquad
\infer
  {\coe{x.\notb{x}}{x}{r}{\notel{x}{M}} \steps \notel{r}{M}}
  {}
\qquad
\infer
  {\coe{x.\notb{y}}{r}{r'}{M} \steps M}
  {x \neq y}
\]

\[
\infer
  {\begin{array}{c}
   \hcomgeneric{r_1}{\notb{w}} \\
   \steps \\
   \notel{w}{\hcom{r_1}{\bool}{r}{r'}{\coe{x.\notb{x}}{w}{1}{M}}{y.\coe{x.\notb{x}}{w}{1}{N^0},y.\coe{x.\notb{x}}{w}{1}{N^1}}}
   \end{array}}
  {}
\]~


\newpage
\section{Meaning explanations}
\label{sec:meanings}

\begin{definition}\label{def:wftm}
We say $\wftm{M}$ when $M$ is a term with no free term variables, and
$\fd{M}\subseteq\Psi$.
\end{definition}

\begin{remark}\label{def:wfval}
  We write $\wfval{M}$ when $\wftm{M}$ and $\isval{M}$. Being a value
  does not depend on the choice of $\Psi$, so whenever
  $\wfval{M}$ and $\fd{M}\subseteq\Psi'$, we also have
  $\wfval[\Psi']{M}$.
\end{remark}

\begin{definition}
A total dimension substitution $\msubst{\psi}$ assigns to each dimension name in
$\Psi$ either $0$, $1$, or a dimension name in $\Psi'$. It follows that if
$\wftm{M}$ then $\wftm[\Psi']{\td{M}{\psi}}$.
\end{definition}

Some $\wfval{A}$ are taken to name types; to these, we associate partial
equivalence relations over values in $\Psi$. PERs are a convenient way of
describing sets equipped with an equivalence relation; elements of the
corresponding set are the values that are related to themselves.

The presuppositions of a judgment are the facts that must be true before one
can even sensibly state that judgment. For example, in \cref{def:eqper} below
we define a judgment on $A$ and $B$, presupposing that $A$ and $B$ are
associated to PERs, which asserts that those PERs are equal; this condition is
not meaningful unless these PERs exist.

\begin{definition}\label{def:per}
We say $\per{A}$, presupposing $\wfval{A}$, when we have associated to $\Psi$
and $A$ a symmetric and transitive relation $\vinper--{A}$ on terms $M$ such
that $\wfval{M}$.
\end{definition}

\begin{remark}\label{def:inper}
We write $\inper{M}{N}{A}$ when $\per{A}$, $M\evals M_0$, $N\evals N_0$, and
$\vinper{M_0}{N_0}{A}$.
\end{remark}

\begin{definition}\label{def:eqper}
We say $\eqper{A}{B}$, presupposing $\per{A}$ and $\per{B}$, when
for all $\wfval{M}$ and $\wfval{N}$,
$\vinper{M}{N}{A}$ if and only if $\vinper{M}{N}{B}$.
\end{definition}

\begin{remark}
The above definition of $\eqper{A}{B}$ yields an extensional notion of (pre)type
equality; it is also possible to define $\eqper{A}{B}$ inductively on the
structure of $A$ and $B$ in order to obtain an intensional (pre)type equality.
In either case it is essential that if $\per{A}$ then $\eqper{A}{A}$.
\end{remark}

Approximately, a term $A$ is a pretype in $\Psi$ when $\td{A}{\psi}$
evaluates to the name of a PER in $\Psi'$ for every $\msubst{\psi}$.
A term $M$ is an
element of a (pre)type when every $\td{M}{\psi}$ evaluates to an
element of the corresponding PER.  We also demand that pretypes and
their elements have \emph{coherent aspects}, a technical condition
implying that dimension substitutions can be taken simultaneously or
sequentially, before or after evaluating a term, without affecting the
outcome, up to PER equality.
(In our postfix notation for dimension substitutions, $\td{A}{\psi_1\psi_2}$
means $\td{(\td{A}{\psi_1})}{\psi_2}$.)

\begin{definition}\label{def:ceqpretype}
We say $\ceqpretype{A}{B}$, presupposing $\wftm{A}$ and $\wftm{B}$, when
for any $\msubsts{\Psi_1}{\psi_1}{\Psi}$ and
$\msubsts{\Psi_2}{\psi_2}{\Psi_1}$,
\begin{enumerate}
\item
$\td{A}{\psi_1}\evals A_1$, 
$\td{A_1}{\psi_2}\evals A_2$, 
$\td{A}{\psi_1\psi_2}\evals A_{12}$, 
$\per[\Psi_2]{A_2}$, $\per[\Psi_2]{A_{12}}$, 
\item
$\td{B}{\psi_1}\evals B_1$, 
$\td{B_1}{\psi_2}\evals B_2$, 
$\td{B}{\psi_1\psi_2}\evals B_{12}$, 
$\per[\Psi_2]{B_2}$, $\per[\Psi_2]{B_{12}}$, and
\item
$\eqperfour[\Psi_2]{A_2}{A_{12}}{B_2}{B_{12}}$.
\end{enumerate}
\end{definition}

\begin{remark}\label{def:cpretype}
We write $\cpretype{A}$ when $\ceqpretype{A}{A}$.
\end{remark}

\begin{definition}\label{def:ceqtm}
We say $\ceqtm{M}{N}{B}$, presupposing $\cpretype{B}$,
$\wftm{M}$, and $\wftm{N}$, when for any
$\msubsts{\Psi_1}{\psi_1}{\Psi}$ and $\msubsts{\Psi_2}{\psi_2}{\Psi_1}$,
\begin{enumerate}
\item
$\td{M}{\psi_1}\evals M_1$, 
$\td{M_1}{\psi_2}\evals M_2$, 
$\td{M}{\psi_1\psi_2}\evals M_{12}$, 
\item
$\td{N}{\psi_1}\evals N_1$, 
$\td{N_1}{\psi_2}\evals N_2$, 
$\td{N}{\psi_1\psi_2}\evals N_{12}$, and
\item
$\vinperfour[\Psi_2]{M_2}{M_{12}}{N_2}{N_{12}}{B_{12}}$,
where $\td{B}{\psi_1\psi_2}\evals B_{12}$.
\end{enumerate}
\end{definition}

A valid term context $\G$ in $\Psi$ is either the empty context $\cdot$, or a
sequence $\oft{a_1}{A_1},\dots,\oft{a_n}{A_n}$ of distinct term variables $a_i$
each paired with a pretype $A_i$ in $\Psi$.

\begin{definition}\label{def:wfctx}
We say $\wfctx{\cdot}$ always, and $\wfctx{\G,\oft aA}$ whenever
$\wfctx{\G}$ and $\cpretype{A}$.
\end{definition}

\begin{definition}\label{def:eqtm}
We say $\eqtm{\G}{M}{M'}{B}$, presupposing $\wfctx{\G}$ and $\cpretype{B}$, when
\begin{enumerate}
\item 
$\G = \cdot$ and $\ceqtm{M}{M'}{B}$, or
\item 
$\G = (\G',\oft aA)$ and for any $\msubst{\psi}$ and
$\eqtm[\Psi']{\td{\G'}{\psi}}{N}{N'}{\td{A}{\psi}}$, that
$\eqtm[\Psi']{\td{\G'}{\psi}}{\subst{\td{M}{\psi}}{N}{a}}{\subst{\td{M'}{\psi}}{N'}{a}}{\td{B}{\psi}}$.
\end{enumerate}
\end{definition}

In the notation $\eqtm{\G}{M}{M'}{B}$, one should read the dimension index
$[\Psi]$ as extending across the entire sequent, as it specifies the starting
dimension at which to consider $\G$.
To make sense of the second clause of \cref{def:eqtm},
notice that for any $\msubst{\psi}$,
if $\cpretype{A}$, then $\cpretype[\Psi']{\td{A}{\psi}}$;
and if $\wfctx{\G}$, then $\wfctx[\Psi']{\td{\G}{\psi}}$,
where $\td{\G}{\psi}$ applies $\psi$ to every pretype in $\G$.

\begin{remark}\label{def:oftype}
We write $\coftype{M}{A}$ when $\cpretype{A}$ and $\ceqtm{M}{M}{A}$, and
$\oftype{\G}{M}{A}$ when $\cpretype{A}$ and $\eqtm{\G}{M}{M}{A}$.
\end{remark}

\begin{remark}
The equality judgments $\ceqtm{M}{N}{A}$ and $\eqtm{\G}{M}{N}{A}$ are symmetric
and transitive. Therefore,
if $\ceqtm{M}{N}{A}$ then $\coftype{M}{A}$ and $\coftype{N}{A}$, and
if $\eqtm{\G}{M}{N}{A}$ then $\oftype{\G}{M}{A}$ and $\oftype{\G}{N}{A}$.
\end{remark}

If no terms in our programming language contained dimension subterms, then for
all $M$, we would have $M=\td{M}{\psi}$. The above meaning explanations would
therefore collapse into:
$\cpretype{A}$ whenever $A\evals A_0$ and $\per{A_0}$;
and $\ceqtm{M}{N}{A}$ whenever $\inper{M}{N}{A_0}$ where $A\evals A_0$.
These are precisely the ordinary meaning explanations for computational type
theory.

Finally, a type is a pretype which is both \emph{cubical} (meaning that its PERs
are functorially indexed by the cube category) and \emph{Kan} (meaning that all
its instances validate the $\hcomsym$ and $\coesym$ rules).

\begin{definition}\label{def:cubical}
We say $\cpretype{A}$ is \emph{cubical} if for any $\msubst{\psi}$ and
$\vinper[\Psi']{M}{N}{A_0}$ (where $\td{A}{\psi}\evals A_0$) then
$\ceqtm[\Psi']{M}{N}{\td{A}{\psi}}$.
\end{definition}

\begin{definition}\label{def:kan}
We say $\cpretype{A}$ is \emph{Kan} if the following four conditions hold:
\begin{enumerate}
\item 
For any $\msubsts{(\Psi',x)}{\psi}{\Psi}$, if
\begin{enumerate}
\item \ceqtm[\Psi',x]{M}{O}{\td{A}{\psi}},
\item \ceqtm[\Psi',y]{\dsubst{N^\e}{\e}{x}}{\dsubst{P^\e}{\e}{x}}{\dsubst{\td{A}{\psi}}{\e}{x}} for
$\e=0,1$, and
\item \ceqtm[\Psi']{\dsubst{\dsubst{N^\e}{r}{y}}{\e}{x}}{\dsubst{M}{\e}{x}}{\dsubst{\td{A}{\psi}}{\e}{x}} for $\e=0,1$,
\end{enumerate}
then $\ceqtm[\Psi',x]{\hcomgeneric{x}{\td{A}{\psi}}}{\hcom{x}{\td{A}{\psi}}{r}{r'}{O}{y.P^0,y.P^1}}{\td{A}{\psi}}$.

\item 
For any $\msubsts{(\Psi',x)}{\psi}{\Psi}$, if
\begin{enumerate}
\item \coftype[\Psi',x]{M}{\td{A}{\psi}},
\item \coftype[\Psi',y]{\dsubst{N^\e}{\e}{x}}{\dsubst{\td{A}{\psi}}{\e}{x}} for
$\e=0,1$, and
\item \ceqtm[\Psi']{\dsubst{\dsubst{N^\e}{r}{y}}{\e}{x}}{\dsubst{M}{\e}{x}}{\dsubst{\td{A}{\psi}}{\e}{x}} for $\e=0,1$,
\end{enumerate}
then $\ceqtm[\Psi',x]{\hcom{x}{\td{A}{\psi}}{r}{r}{M}{y.N^0,y.N^1}}{M}{\td{A}{\psi}}$.

\item 
For any $\msubst{\psi}$, if
\begin{enumerate}
\item \coftype[\Psi']{M}{\td{A}{\psi}},
\item \coftype[\Psi',y]{N^\e}{\td{A}{\psi}}, and
\item
\ceqtm[\Psi']{\dsubst{N^\e}{r}{y}}{M}{\td{A}{\psi}},
\end{enumerate}
then $\ceqtm[\Psi']{\hcomgeneric{\e}{\td{A}{\psi}}}{\dsubst{N^\e}{r'}{y}}{\td{A}{\psi}}$.

\item 
For any $\msubsts{(\Psi',x)}{\psi}{\Psi}$, if
$\ceqtm[\Psi']{M}{N}{\dsubst{\td{A}{\psi}}{r}{x}}$, then
$\ceqtm[\Psi']{{\coe{x.\td{A}{\psi}}{r}{r'}{M}}}{{\coe{x.\td{A}{\psi}}{r}{r'}{N}}}{\dsubst{\td{A}{\psi}}{r'}{x}}$.
\end{enumerate}
\end{definition}

\begin{remark}\label{def:tubesubst}
  We always substitute $\e$ for $x$ in the tube face premises of each
  of these conditions, reflecting that $x=0$ and $x=1$ in the
  respective side of the tube.
\end{remark}

\begin{definition}\label{def:cwftype}
We say $\cwftype{A}$, presupposing $\cpretype{A}$, if $A$ is cubical and Kan.
\end{definition}

\subsection{Basic lemmas}

We prove some basic results about our core judgments before proceeding.

\begin{lemma}
For any $\msubsts{\Psi'}{\psi}{\Psi}$,
\begin{enumerate}
\item if $\cwftype{A}$ then $\cwftype[\Psi']{\td{A}{\psi}}$;
\item if $\ceqtm{M}{N}{A}$ then
$\ceqtm[\Psi']{\td{M}{\psi}}{\td{N}{\psi}}{\td{A}{\psi}}$; and
\item if $\eqtm{\G}{M}{N}{A}$, then
$\eqtm[\Psi']{\td{\G}{\psi}}{\td{M}{\psi}}{\td{N}{\psi}}{\td{A}{\psi}}$.
\end{enumerate}
\end{lemma}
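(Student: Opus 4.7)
My plan is to prove all three parts by the same structural observation: each of the judgments $\cpretype{-}$, $\cwftype{-}$, $\ceqtm{-}{-}{-}$, and (recursively) $\eqtm{\G}{-}{-}{-}$ is defined by universal quantification over dimension substitutions $\msubst{\psi_1}$ out of $\Psi$ (and sometimes further $\psi_2$ out of $\Psi_1$). The key auxiliary fact I would invoke is functoriality of dimension substitution on terms: $\td{(\td{M}{\psi})}{\psi'} = \td{M}{\psi\psi'}$, where $\psi\psi'$ is the composite substitution, and likewise for types and for the pointwise action on contexts $\G$. Once this is in hand, each claim reduces to instantiating the hypothesis at a composite substitution starting from $\Psi$.

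For part (1), I would first unfold $\cwftype{A}$ into $\cpretype{A}$ together with cubicality and the four Kan conditions. To show $\cpretype[\Psi']{\td{A}{\psi}}$, I take arbitrary $\msubsts{\Psi_1'}{\psi_1'}{\Psi'}$ and $\msubsts{\Psi_2'}{\psi_2'}{\Psi_1'}$ and rewrite the required evaluations and PER conditions using functoriality, so that everything is stated in terms of $\td{A}{\psi\psi_1'}$, $\td{A}{\psi\psi_1'\psi_2'}$, etc.; these are then exactly what $\cpretype{A}$ gives when instantiated at $\psi\psi_1'$ and $\psi_2'$. For cubicality, any $\msubsts{\Psi''}{\psi'}{\Psi'}$ similarly becomes $\psi\psi' : \Psi\to\Psi''$ and cubicality of $A$ applies. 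For each of the four clauses of \cref{def:kan}, the substitution $\msubsts{(\Psi'',x)}{\psi'}{\Psi'}$ (or $\msubsts{\Psi''}{\psi'}{\Psi'}$) is composed with $\psi$ to yield a substitution out of $\Psi$, and the hypotheses about $M$, $N^0$, $N^1$ in $\Psi''$ are already stated at $\Psi''$, so the Kan conditions for $A$ apply directly and we conclude by functoriality on the resulting $\hcomsym$/$\coesym$ terms.

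Part (2) is strictly simpler: to establish $\ceqtm[\Psi']{\td{M}{\psi}}{\td{N}{\psi}}{\td{A}{\psi}}$, I take $\msubsts{\Psi_1'}{\psi_1'}{\Psi'}$ and $\msubsts{\Psi_2'}{\psi_2'}{\Psi_1'}$, rewrite each $\td{(\td{M}{\psi})}{\psi_1'}$ and $\td{(\td{M}{\psi})}{\psi_1'\psi_2'}$ as $\td{M}{\psi\psi_1'}$ and $\td{M}{\psi\psi_1'\psi_2'}$ (and analogously for $N$ and $A$), and invoke $\ceqtm{M}{N}{A}$ at $\psi\psi_1'$ and $\psi_2'$. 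Note that we must also check that the presupposition $\cpretype[\Psi']{\td{A}{\psi}}$ holds, which is handled by the $\cpretype{-}$-part of the argument in (1).

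Part (3) goes by induction on $\G$. The empty case is immediate from (2). For $\G = (\G', \oft{a}{B})$, unfolding \cref{def:eqtm} requires me to consider arbitrary $\msubsts{\Psi''}{\psi'}{\Psi'}$ and $\eqtm[\Psi'']{\td{(\td{\G'}{\psi})}{\psi'}}{O}{O'}{\td{(\td{B}{\psi})}{\psi'}}$; rewriting via functoriality turns the latter into $\eqtm[\Psi'']{\td{\G'}{\psi\psi'}}{O}{O'}{\td{B}{\psi\psi'}}$, at which point the original hypothesis $\eqtm{\G}{M}{M'}{B}$ applied at $\psi\psi'$ delivers the desired conclusion (again after rewriting the $\subst{-}{-}{a}$ terms via functoriality). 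I expect the main obstacle, if any, to be bureaucratic rather than conceptual: making sure the equations $\td{(\td{-}{\psi})}{\psi'} = \td{-}{\psi\psi'}$ are set up cleanly (including on contexts and under term-substitution for $a$), and verifying the presuppositions of each judgment track correctly along $\psi$; none of this should require new ideas beyond functoriality of dimension substitution.
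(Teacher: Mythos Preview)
Your proposal is correct and matches the paper's approach exactly: both argue that each judgment is closed under dimension substitution because its definition quantifies over all dimension substitutions out of $\Psi$, so instantiating at the composite $\psi\psi'$ suffices (using functoriality of substitution on terms and contexts), and both handle part~(3) by case analysis on $\G$ with the nonempty case unfolding \cref{def:eqtm} and instantiating the original hypothesis at $\psi\psi'$. You are somewhat more explicit than the paper about checking the Kan and cubical clauses in part~(1), but the underlying argument is the same.
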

\begin{proof}
We have already observed that if $\cpretype{A}$ then
$\cpretype[\Psi']{\td{A}{\psi}}$. A type is a Kan cubical pretype; that being a
type is closed under dimension substitution follows from the fact that the
Kan and cubical conditions are as well. Exact equality is closed under dimension
substitution also essentially because its definition quantifies over all
dimension substitutions.

The proof of the third proposition uses induction on the length
of $\G$. If $\G$ is empty, then the result follows immediately from
the previous one. If $\G = (\G',\oft aA)$, then we know
$\eqtm{\G',\oft{a}{A}}{M}{M'}{B}$, and want to show
$\eqtm[\Psi']{\td{\G}{\psi},\oft{a}{\td{A}{\psi}}}
{\td{M}{\psi}}{\td{M'}{\psi}}{\td{B}{\psi}}$.
Expanding definitions, this means we must show that for any
$\msubsts{\Psi''}{\psi'}{\Psi'}$ and
$\eqtm[\Psi'']{\td{\G}{\psi\psi'}}{N}{N'}{\td{A}{\psi\psi'}}$, we have
$\eqtm[\Psi'']{\td{\G}{\psi\psi'}} {\subst{\td{M}{\psi\psi'}}{N}{a}}
{\subst{\td{M'}{\psi\psi'}}{N'}{a}} {\td{B}{\psi\psi'}}$.
But this follows directly from the definition of
$\eqtm{\G',\oft{a}{A}}{M}{M'}{B}$.
\end{proof}

\begin{lemma}\label{lem:ceqpretype-ceqtm}
If $\ceqpretype{A}{B}$ and $\ceqtm{M}{N}{A}$ then $\ceqtm{M}{N}{B}$.
\end{lemma}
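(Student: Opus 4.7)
The proof is a routine unfolding of \cref{def:ceqpretype} and \cref{def:ceqtm}, using the fact that both hypotheses quantify over the same structure of iterated dimension substitutions. Here is the plan.

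First I would fix arbitrary $\msubsts{\Psi_1}{\psi_1}{\Psi}$ and $\msubsts{\Psi_2}{\psi_2}{\Psi_1}$ and show the three clauses of \cref{def:ceqtm} for $\ceqtm{M}{N}{B}$. The clauses pertaining to the evaluation of $M$ and $N$, namely $\td{M}{\psi_1}\evals M_1$, $\td{M_1}{\psi_2}\evals M_2$, $\td{M}{\psi_1\psi_2}\evals M_{12}$, and similarly for $N$, are independent of the ambient type; they are given verbatim by instantiating the hypothesis $\ceqtm{M}{N}{A}$ at the same $\psi_1,\psi_2$, so I would just read them off.

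The third clause of the goal requires $\vinperfour[\Psi_2]{M_2}{M_{12}}{N_2}{N_{12}}{B_{12}}$, where $\td{B}{\psi_1\psi_2}\evals B_{12}$. Here is the one nonroutine step: I would apply $\ceqpretype{A}{B}$ to the very same $\psi_1,\psi_2$. This provides a value $B_{12}$ with $\td{B}{\psi_1\psi_2}\evals B_{12}$, a value $A_{12}$ with $\td{A}{\psi_1\psi_2}\evals A_{12}$, and the chain of PER equalities that in particular yields $\eqper[\Psi_2]{A_{12}}{B_{12}}$. From the hypothesis $\ceqtm{M}{N}{A}$, instantiated at the same $\psi_1,\psi_2$, I obtain $\vinperfour[\Psi_2]{M_2}{M_{12}}{N_2}{N_{12}}{A_{12}}$. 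By \cref{def:eqper}, being related under $A_{12}$ and being related under $B_{12}$ coincide on values, so the fourfold PER equivalence transports from $A_{12}$ to $B_{12}$, giving the required clause.

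There is no real obstacle here, only a bookkeeping concern: one must be sure that the $A_{12}$ used implicitly inside the hypothesis $\ceqtm{M}{N}{A}$ is the \emph{same} value as the $A_{12}$ appearing in the $\ceqpretype{A}{B}$ instance, so that the two PERs named $A_{12}$ really coincide. This is immediate because evaluation is deterministic by the Determinacy lemma, so $\td{A}{\psi_1\psi_2}$ has a unique value. Once this is observed the implication follows. The symmetric statement $\ceqtm{M}{N}{B} \Rightarrow \ceqtm{M}{N}{A}$ would follow by the same argument, exchanging the roles of $A$ and $B$, but is not needed here.
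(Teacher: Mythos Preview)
Your proposal is correct and follows essentially the same approach as the paper: fix $\psi_1,\psi_2$, extract $\eqper[\Psi_2]{A_{12}}{B_{12}}$ from $\ceqpretype{A}{B}$, extract $\vinperfour[\Psi_2]{M_2}{M_{12}}{N_2}{N_{12}}{A_{12}}$ from $\ceqtm{M}{N}{A}$, and transport the fourfold relation along the PER equality. Your explicit remark about determinacy ensuring the two occurrences of $A_{12}$ coincide is a nice clarification that the paper leaves implicit.
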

\begin{proof}
For any $\msubsts{\Psi_1}{\psi_1}{\Psi}$ and $\msubsts{\Psi_2}{\psi_2}{\Psi_1}$,
by the first hypothesis we have that
$\td{A}{\psi_1\psi_2}\evals A_{12}$, $\td{B}{\psi_1\psi_2}\evals B_{12}$, and
$\eqper[\Psi_2]{A_{12}}{B_{12}}$;
by the second hypothesis, we have that
$\vinperfour[\Psi_2]{M_2}{M_{12}}{N_2}{N_{12}}{A_{12}}$.
But this implies
$\vinperfour[\Psi_2]{M_2}{M_{12}}{N_2}{N_{12}}{B_{12}}$.
\end{proof}

The definition of the open judgment $\eqtm{\G}{M}{M'}{B}$ involves, for each
$\oft{a_i}{A_i}$ in $\G$, substituting into $M,M'$ a dimension substitution
$\msubst{\psi}$ and a pair of equal terms
$\eqtm[\Psi']{\td{\G}{\psi}}{N_i}{N_i'}{\td{A_i}{\psi}}$.
We prove this is equivalent to performing a single dimension substitution $\psi$
and a pair of simultaneous term substitutions $\g,\g'$ for all of $\G$,
whose components $\g(a_i),\g'(a_i)$ are equal in each $\td{A_i}{\psi}$.
We write $()$ for an empty simultaneous term substitution, and
$\tsext{\g}{N}{a}$ for the extension of a substitution $\g$ sending $a$ to $N$.
Then we say $\ceqtm{\g}{\g'}{\G}$ when $\g$ and $\g'$ are substitutions for all
of $\G$ whose components are equal: 

\begin{definition}\label{def:ceq-tsubst}
We say
\begin{enumerate}
\item
$\ceqtm{()}{()}{\cdot}$ always.
\item
$\ceqtm{\tsext{\g}{N}{a}}{\tsext{\g'}{N'}{a}}{(\G,\oft aA)}$,
presupposing $\wfctx{\G}$ and $\cpretype{A}$, when $\ceqtm{\g}{\g'}{\G}$ and
$\ceqtm{N}{N'}{A}$.
\end{enumerate}
\end{definition}

\begin{lemma}\label{lem:eqtm-tsubst}
  The open equation $\eqtm{\G}{M}{M'}{B}$ is true iff for any
  $\msubsts{\Psi'}{\psi}{\Psi}$ and any
  $\ceqtm[\Psi']{\g}{\g'}{\td{\G}{\psi}}$, we have
  $\ceqtm[\Psi']{\tsubst{\td{M}{\psi}}{\g}}{\tsubst{\td{M'}{\psi}}{\g'}}{\td{A}{\psi}}$.
\end{lemma}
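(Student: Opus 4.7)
The plan is to prove both directions simultaneously by induction on the length of $\G$, exploiting the fact that the clauses of \cref{def:eqtm} mirror those of \cref{def:ceq-tsubst}. In the base case $\G = \cdot$, the left-hand side $\eqtm{\cdot}{M}{M'}{B}$ unfolds directly to $\ceqtm{M}{M'}{B}$, while on the right the only substitution for $\cdot$ is the empty one, so the right-hand side asserts that for every $\msubst{\psi}$, $\ceqtm[\Psi']{\td{M}{\psi}}{\td{M'}{\psi}}{\td{B}{\psi}}$. The closure lemma for exact equality under dimension substitution (the first lemma of this subsection) gives the forward direction, and instantiating at the identity dimension substitution gives the backward direction.

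For the inductive step $\G = (\G', \oft aA)$, I would exploit that a simultaneous substitution for $\G',\oft aA$ uniquely decomposes as $\tsext{\g_0}{N}{a}$, with equality determined componentwise by \cref{def:ceq-tsubst}. In the forward direction, given $\eqtm{\G',\oft aA}{M}{M'}{B}$, fix $\msubst{\psi}$ and $\ceqtm[\Psi']{\tsext{\g_0}{N}{a}}{\tsext{\g_0'}{N'}{a}}{(\td{\G'}{\psi},\oft{a}{\td{A}{\psi}})}$, obtaining $\ceqtm[\Psi']{\g_0}{\g_0'}{\td{\G'}{\psi}}$ and $\ceqtm[\Psi']{N}{N'}{\td{A}{\psi}}$. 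Since $N$ and $N'$ have no free term variables, this last equation lifts to $\eqtm[\Psi']{\td{\G'}{\psi}}{N}{N'}{\td{A}{\psi}}$: apply the inductive hypothesis on $\G'$ in reverse, noting that simultaneous substitution acts as the identity on closed terms, so the required claim reduces via closure under dimension substitution to the hypothesis itself. Feed $\psi$ and this open equation into the main assumption on $M,M'$ to get $\eqtm[\Psi']{\td{\G'}{\psi}}{\subst{\td{M}{\psi}}{N}{a}}{\subst{\td{M'}{\psi}}{N'}{a}}{\td{B}{\psi}}$, then apply the inductive hypothesis on $\G'$ in the forward direction at $\id[\Psi']$ with $\g_0, \g_0'$. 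Commuting single-variable with simultaneous substitution (valid because $N$ is closed) rewrites $\tsubst{\subst{\td{M}{\psi}}{N}{a}}{\g_0}$ as $\tsubst{\td{M}{\psi}}{\tsext{\g_0}{N}{a}}$, completing this direction.

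For the backward direction, assume the simultaneous form and fix $\msubst{\psi}$ together with $\eqtm[\Psi']{\td{\G'}{\psi}}{N}{N'}{\td{A}{\psi}}$, aiming at $\eqtm[\Psi']{\td{\G'}{\psi}}{\subst{\td{M}{\psi}}{N}{a}}{\subst{\td{M'}{\psi}}{N'}{a}}{\td{B}{\psi}}$. Applying the inductive hypothesis to this target in reverse, it suffices to show that for every $\msubsts{\Psi''}{\psi''}{\Psi'}$ and $\ceqtm[\Psi'']{\g}{\g'}{\td{\G'}{\psi\psi''}}$, the simultaneously-substituted forms agree in $\td{B}{\psi\psi''}$. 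Applying the inductive hypothesis forward to the assumption on $N,N'$ supplies $\ceqtm[\Psi'']{\tsubst{\td{N}{\psi''}}{\g}}{\tsubst{\td{N'}{\psi''}}{\g'}}{\td{A}{\psi\psi''}}$, so the extensions $\tsext{\g}{\tsubst{\td{N}{\psi''}}{\g}}{a}$ and $\tsext{\g'}{\tsubst{\td{N'}{\psi''}}{\g'}}{a}$ form equal simultaneous substitutions for $\td{(\G',\oft aA)}{\psi\psi''}$. Feeding $\psi\psi''$ and these extensions into the main assumption on $M,M'$ and using the same substitution-commutation identity rewrites the result into the required form.

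The main obstacle is purely notational: tracking the composition $\psi\psi''$ of two layers of dimension substitution alongside simultaneous term substitutions, and verifying that single-variable capture-avoiding substitution commutes with simultaneous substitution whenever the substituted term has no free term variables. All of the mathematical content reduces to the observation that \cref{def:eqtm} and \cref{def:ceq-tsubst} are set up so that the inductive hypothesis on $\G'$ bridges the clause for the empty context with the clause for the context extension.
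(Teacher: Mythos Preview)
Your proposal is correct and follows essentially the same approach as the paper: simultaneous induction on the length of $\G$, decomposing the extended substitution, using the reverse inductive hypothesis to promote the closed equality $\ceqtm[\Psi']{N}{N'}{\td{A}{\psi}}$ to an open one, and invoking the forward inductive hypothesis together with the substitution-commutation identity $\tsubst{\subst{\td{M}{\psi}}{N}{a}}{\g_0} = \tsubst{\td{M}{\psi}}{\tsext{\g_0}{N}{a}}$. If anything, you are slightly more explicit than the paper in the backward direction, correctly threading the additional dimension substitution $\psi''$ through the argument where the paper's phrasing leaves this implicit.
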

\begin{proof}
  Simultaneously, by induction on the length of $\G$.  When $\G$ is
  empty, the result is immediate.  Otherwise let $\G=\G_1,\oft aA$.

  Suppose that $\eqtm{\G_1,\oft aA}{M}{M'}{B}$, and consider any
  $\msubsts{\Psi'}{\psi}{\Psi}$ and
  $\ceqtm[\Psi']{\g}{\g'}{\td{\G_1}{\psi},\oft{a}{\td{A}{\psi}}}$.
  Then $\g=\tsext{\g_1}{N}{a}$, $\g'=\tsext{\g_1'}{N'}{a}$,
  $\ceqtm[\Psi']{\g_1}{\g_1'}{\td{\G_1}{\psi}}$, and
  $\ceqtm[\Psi']{N}{N'}{\td{A}{\psi}}$.
  Because $N$ and $N'$ are closed, and all dimension substitution instances of 
  $\ceqtm[\Psi']{N}{N'}{\td{A}{\psi}}$ are true, by the reverse induction
  hypothesis we have $\eqtm[\Psi']{\td{\G_1}{\psi}}{N}{N'}{\td{A}{\psi}}$.
  By instantiating the definition of our hypothesis at $N,N'$ we get
  $\eqtm[\Psi']{\td{\G_1}{\psi}}{\subst{\td{M}{\psi}}{N}{a}}{\subst{\td{M'}{\psi}}{N'}{a}}{\td{B}{\psi}}$.
  But by the forward induction hypothesis, this gives us
  $\ceqtm[\Psi']{\tsubst{\subst{\td{M}{\psi}}{N}{a}}{\g_1}}{\tsubst{\subst{\td{M'}{\psi}}{N'}{a}}{\g_1'}}{\td{B}{\psi}}$,
  which is to say
  $\ceqtm[\Psi']{\tsubst{\td{M}{\psi}}{\g}}{\tsubst{\td{M'}{\psi}}{\g'}}{\td{B}{\psi}}$,
  as required.

  Conversely, suppose that for all $\msubsts{\Psi'}{\psi}{\Psi}$, if
  $\ceqtm[\Psi']{\g}{\g'}{\td{\G}{\psi}}$, then
  $\ceqtm[\Psi']{\tsubst{\td{M}{\psi}}{\g}}{\tsubst{\td{M'}{\psi}}{\g'}}{\td{B}{\psi}}$.
  To show $\eqtm[\Psi]{\G}{M}{M'}{B}$, suppose that
  $\msubsts{\Psi'}{\psi}{\Psi}$ and
  $\eqtm[\Psi']{\td{\G_1}{\psi}}{N}{N'}{\td{A}{\psi}}$; it suffices to
  show that
  $\eqtm[\Psi']{\td{\G_1}{\psi}}{\subst{\td{M}{\psi}}{N}{a}}{\subst{\td{M'}{\psi}}{N'}{a}}{\td{B}{\psi}}$.
  By the forward induction hypothesis we have that for all
  $\ceqtm[\Psi']{\g_1}{\g_1'}{\td{\G_1}{\psi}}$,
  $\ceqtm[\Psi']{\tsubst{N}{\g_1}}{\tsubst{N'}{\g_1'}}{\td{A}{\psi}}$.
  Let $\g=\g_1[\tsubst{N}{\g_1}/a]$ and $\g'=\g_1'[\tsubst{N'}{\g_1'}/a]$, so
  that $\ceqtm[\Psi']{\g}{\g'}{\td{\G}{\psi}}$. By assumption we have
  $\ceqtm[\Psi']{\tsubst{\td{M}{\psi}}{\g}}{\tsubst{\td{M'}{\psi}}{\g'}}{\td{B}{\psi}}$.
  But $\tsubst{\td{M}{\psi}}{\g}$ is
  $\tsubst{\subst{\td{M}{\psi}}{N}{a}}{\g_1}$ and
  $\tsubst{\td{M'}{\psi}}{\g'}$ is
  $\tsubst{\subst{\td{M'}{\psi}}{N'}{a}}{\g_1'}$,
  so by the reverse induction hypothesis we have
  $\eqtm[\Psi']{\td{\G_1}{\psi}}{\subst{\td{M}{\psi}}{N}{a}}{\subst{\td{M'}{\psi}}{N'}{a}}{\td{B}{\psi}}$.
\end{proof}


\newpage
\section{Types}
\label{sec:types}

In this section, we will define various types by defining their PERs, verifying
they are pretypes, proving their introduction, elimination, and computation
rules, and then proving that they are cubical and Kan. A handful of lemmas will
be useful throughout this process:

\begin{lemma}[Head expansion]
If $\ceqtm{M'}{N}{A}$ and for all $\msubst{\psi}$,
$\td{M}{\psi} \steps^* \td{M'}{\psi}$, then $\ceqtm{M}{N}{A}$.
\end{lemma}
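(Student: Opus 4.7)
The plan is to unfold \cref{def:ceqtm} directly and transfer each evaluation and PER clause from $\ceqtm{M'}{N}{A}$ to $\ceqtm{M}{N}{A}$, using the step hypothesis together with determinacy of evaluation to identify the relevant values.

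Fix arbitrary $\msubsts{\Psi_1}{\psi_1}{\Psi}$ and $\msubsts{\Psi_2}{\psi_2}{\Psi_1}$. From $\ceqtm{M'}{N}{A}$ we obtain values $M'_1, M'_2, M'_{12}, N_1, N_2, N_{12}, A_{12}$ such that $\td{M'}{\psi_1}\evals M'_1$, $\td{M'_1}{\psi_2}\evals M'_2$, $\td{M'}{\psi_1\psi_2}\evals M'_{12}$, the analogous evaluations hold for $N$, $\td{A}{\psi_1\psi_2}\evals A_{12}$, and $\vinperfour[\Psi_2]{M'_2}{M'_{12}}{N_2}{N_{12}}{A_{12}}$. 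The clauses for $N$ and $A$ carry over verbatim.

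For $M$, I would instantiate the step hypothesis at $\psi = \psi_1$ to get $\td{M}{\psi_1} \steps^* \td{M'}{\psi_1} \steps^* M'_1$, so by definition $\td{M}{\psi_1}\evals M'_1$; hence I may take $M_1 := M'_1$. Since $M_1$ and $M'_1$ are syntactically equal, $\td{M_1}{\psi_2} = \td{M'_1}{\psi_2}\evals M'_2$, so $M_2 := M'_2$ works. Instantiating the step hypothesis at $\psi = \psi_1\psi_2$ similarly gives $\td{M}{\psi_1\psi_2}\evals M'_{12}$, and I take $M_{12} := M'_{12}$. The required PER relation $\vinperfour[\Psi_2]{M_2}{M_{12}}{N_2}{N_{12}}{A_{12}}$ is then literally the one provided by the hypothesis $\ceqtm{M'}{N}{A}$.

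There is no real obstacle here; the only mild subtlety is that the step hypothesis must be applied at \emph{two different} substitutions, namely $\psi_1$ and the composite $\psi_1\psi_2$, rather than just at $\psi_1$, in order to recover the third evaluation clause $\td{M}{\psi_1\psi_2}\evals M_{12}$. This is precisely why the hypothesis is stated as ``for all $\msubst{\psi}$'' rather than only for the identity: coherent aspects of $\ceqtm{-}{-}{-}$ demand reducibility under every aspect, and determinacy of $\steps$ together with uniqueness of values then pins the two evaluations of $\td{M}{\psi_1\psi_2}$ (the direct one and the sequential one through $M_1$) to the corresponding values of $M'$.
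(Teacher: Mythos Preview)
Your proof is correct and follows essentially the same approach as the paper: unfold \cref{def:ceqtm}, apply the step hypothesis at $\psi_1$ and at $\psi_1\psi_2$ to see that $M_1 = M'_1$, $M_2 = M'_2$, and $M_{12} = M'_{12}$, and conclude that the required PER chain is inherited verbatim from $\ceqtm{M'}{N}{A}$. Your appeal to determinacy is harmless but unnecessary here, since the definition of $\evals$ only asks for the existence of a value reached by $\steps^*$, and the step hypothesis already extends the given reduction of $\td{M'}{\psi}$ to one of $\td{M}{\psi}$.
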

\begin{proof}
For any $\msubsts{\Psi_1}{\psi_1}{\Psi}$ and $\msubsts{\Psi_2}{\psi_2}{\Psi_1}$,
we know $\vinperfour[\Psi_2]{M'_2}{M'_{12}}{N_2}{N_{12}}{A_{12}}$. Therefore it
suffices to show $\vinper[\Psi_2]{M'_{12}}{M_{12}}{A_{12}}$ and
$\vinper[\Psi_2]{M'_2}{M_2}{A_{12}}$.
The former is true because
$\td{M}{\psi_1\psi_2} \steps^* \td{M'}{\psi_1\psi_2} \evals M'_{12}$ and
$\vinper[\Psi_2]{M'_{12}}{M'_{12}}{A_{12}}$. The latter is true because
$\td{M}{\psi_1} \steps^* \td{M'}{\psi_1} \evals M'_1$,
$\td{M'_1}{\psi_2} \evals M'_2$, and
$\vinper[\Psi_2]{M'_2}{M'_2}{A_{12}}$.
\end{proof}

A special case of this lemma is that if $\coftype{M'}{A}$ then $\coftype{M}{A}$.

\begin{lemma}\label{lem:coftype-ceqtm}
If $\coftype{M}{A}$, $\coftype{N}{A}$, and for all $\msubst{\psi}$,
$\inper[\Psi']{\td{M}{\psi}}{\td{N}{\psi}}{A'}$ where $\td{A}{\psi}\evals A'$,
then $\ceqtm{M}{N}{A}$.
\end{lemma}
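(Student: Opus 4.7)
The plan is to unfold the definition of $\ceqtm{M}{N}{A}$ directly and assemble the required four-way PER relation from the three ingredients at hand: self-equality of $M$, self-equality of $N$, and the pointwise cross-hypothesis.

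First I would fix arbitrary $\msubsts{\Psi_1}{\psi_1}{\Psi}$ and $\msubsts{\Psi_2}{\psi_2}{\Psi_1}$ and name the relevant values by evaluation. From $\coftype{M}{A}$, which unfolds to $\ceqtm{M}{M}{A}$, I get that $\td{M}{\psi_1}\evals M_1$, $\td{M_1}{\psi_2}\evals M_2$, $\td{M}{\psi_1\psi_2}\evals M_{12}$, and $\td{A}{\psi_1\psi_2}\evals A_{12}$ with $\per[\Psi_2]{A_{12}}$ and $\vinper[\Psi_2]{M_2}{M_{12}}{A_{12}}$. From $\coftype{N}{A}$ the analogous evaluations for $N$ hold, and I get $\vinper[\Psi_2]{N_2}{N_{12}}{A_{12}}$. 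So the evaluation clauses (1) and (2) of $\ceqtm{M}{N}{A}$ are already in hand.

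The remaining task is clause (3): $\vinperfour[\Psi_2]{M_2}{M_{12}}{N_2}{N_{12}}{A_{12}}$, i.e.\ $M_2 \sim M_{12}$, $M_{12} \sim N_2$, and $N_2 \sim N_{12}$ (all in $A_{12}$ over $\Psi_2$). I already have the outer two. For the middle link, instantiate the cross-hypothesis at $\psi\eqdef\psi_1\psi_2$: this gives values $M_0,N_0,A_0$ with $\td{M}{\psi_1\psi_2}\evals M_0$, $\td{N}{\psi_1\psi_2}\evals N_0$, $\td{A}{\psi_1\psi_2}\evals A_0$, and $\vinper[\Psi_2]{M_0}{N_0}{A_0}$. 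By determinacy of evaluation (the earlier lemma), $M_0=M_{12}$, $N_0=N_{12}$, and $A_0=A_{12}$, so $\vinper[\Psi_2]{M_{12}}{N_{12}}{A_{12}}$. Chaining with $\vinper[\Psi_2]{N_2}{N_{12}}{A_{12}}$ via symmetry and transitivity of the PER associated to $A_{12}$ (which holds since $\per[\Psi_2]{A_{12}}$), I conclude $\vinper[\Psi_2]{M_{12}}{N_2}{A_{12}}$, completing the four-way chain.

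There is no real obstacle here: every piece is an instance of definitions already stated, and the only care required is bookkeeping—matching the evaluations produced by the three hypotheses against one another via determinacy so that a single value $M_{12}$, $N_{12}$, and $A_{12}$ serves in all three. Once the notation is aligned, the conclusion follows purely by symmetry and transitivity of the PER named by $A_{12}$.
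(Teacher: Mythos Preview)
Your proof is correct and follows essentially the same approach as the paper: both extract the coherent-aspect relations $M_2\approx M_{12}$ and $N_2\approx N_{12}$ from the two $\coftype{}{}{}$ hypotheses, instantiate the cross-hypothesis at $\psi=\psi_1\psi_2$ to link $M_{12}$ with $N_{12}$, and then close the four-way chain by PER symmetry and transitivity. The paper's version is terser---it phrases things with $\inpersym$ rather than $\approx$ and leaves the determinacy step implicit---but the content is the same.
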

\begin{proof}
For all $\msubsts{\Psi_1}{\psi_1}{\Psi}$ and $\msubsts{\Psi_2}{\psi_2}{\Psi_1}$,
by $\coftype{M}{A}$ we have $\td{M}{\psi_1}\evals M_1$ and
$\inper[\Psi_2]{\td{M_1}{\psi_2}}{\td{M}{\psi_1\psi_2}}{A_{12}}$, and
by $\coftype{N}{A}$ we have $\td{N}{\psi_1}\evals N_1$ and
$\inper[\Psi_2]{\td{N_1}{\psi_2}}{\td{N}{\psi_1\psi_2}}{A_{12}}$. Therefore it
suffices to show
$\inper[\Psi_2]{\td{M}{\psi_1\psi_2}}{\td{N}{\psi_1\psi_2}}{A_{12}}$, which
follows from our assumption at $\psi = \psi_1\psi_2$.
\end{proof}

\subsection{Booleans}

We will define $\bool$ as a higher inductive type, meaning that we freely add
Kan composites as higher cells, rather than specifying that all its higher cells
are exactly $\true$ or $\false$. We do this to demonstrate the
robustness of our canonicity theorem and our treatment of $\notb{x}$, but in
practice it may be convenient to have (instead or in addition) a type of
``strict booleans.''

We define the relation $\vinper[-]{-}{-}{\bool}$ as the least relation closed
under:
\begin{enumerate}
\item $\vinper{\true}{\true}{\bool}$,
\item $\vinper{\false}{\false}{\bool}$, and
\item $\vinper[\Psi,x]{\hcomgeneric{x}{\bool}}{\hcom{x}{\bool}{r}{r'}{O}{y.P^0,y.P^1}}{\bool}$ whenever $r\neq r'$,
\begin{enumerate}
\item \ceqtm[\Psi,x]{M}{O}{\bool},
\item \ceqtm[\Psi,y]{\dsubst{N^\e}{\e}{x}}{\dsubst{P^\e}{\e}{x}}{\bool} for
$\e=0,1$, and
\item \ceqtm[\Psi]{\dsubst{\dsubst{N^\e}{r}{y}}{\e}{x}}{\dsubst{M}{\e}{x}}{\bool}
for $\e=0,1$.
\end{enumerate}
\end{enumerate}
Note that this relation is symmetric because the first two premises of the third
case ensure that
\ceqtm[\Psi]{\dsubst{M}{\e}{x}}{\dsubst{O}{\e}{x}}{\bool}
and
\ceqtm[\Psi]{\dsubst{\dsubst{N^\e}{r}{y}}{\e}{x}}{\dsubst{\dsubst{P^\e}{r}{y}}{\e}{x}}{\bool},
so
\ceqtm[\Psi]{\dsubst{\dsubst{P^\e}{r}{y}}{\e}{x}}{\dsubst{O}{\e}{x}}{\bool}.

The self-references in this definition can be seen by unrolling the definition
of (for example) $\ceqtm[\Psi,x]{M}{O}{\bool}$, which is:
for any $\msubsts{\Psi_1}{\psi_1}{(\Psi,x)}$ and
$\msubsts{\Psi_2}{\psi_2}{\Psi_1}$, 
\begin{enumerate}
\item
$\td{M}{\psi_1}\evals M_1$, 
$\td{M_1}{\psi_2}\evals M_2$, 
$\td{M}{\psi_1\psi_2}\evals M_{12}$,
\item
$\td{O}{\psi_1}\evals O_1$, 
$\td{O_1}{\psi_2}\evals O_2$, 
$\td{O}{\psi_1\psi_2}\evals O_{12}$, such that
\item
$\vinperfour[\Psi_2]{M_2}{M_{12}}{O_2}{O_{12}}{\bool}$.
\end{enumerate}

\paragraph{Pretype}
$\cpretype{\bool}$.

For all $\psi_1,\psi_2$,
$\td{\bool}{\psi_1}\evals\bool$, 
$\td{\bool}{\psi_2}\evals\bool$, 
$\td{\bool}{\psi_1\psi_2}\evals\bool$, 
and $\per[\Psi_2]{\bool}$.

\paragraph{Introduction}
$\coftype{\true}{\bool}$ and $\coftype{\false}{\bool}$.

For all $\msubsts{\Psi_1}{\psi_1}{\Psi}$
and $\msubsts{\Psi_2}{\psi_2}{\Psi_1}$,
$\td{\true}{\psi_1}\evals\true$,
$\td{\true}{\psi_2}\evals\true$,
$\td{\true}{\psi_1\psi_2}\evals\true$,
and $\vinper[\Psi_2]{\true}{\true}{\bool}$.
The $\false$ case is analogous.

\paragraph{Elimination}
If $\ceqtm{M}{N}{\bool}$, $\cwftype{A}$, $\coftype{T}{A}$, and
$\coftype{F}{A}$, then $\ceqtm{\ifb{A}{M}{T}{F}}{\ifb{A}{N}{T}{F}}{A}$.

Our proof of the elimination rule (that $\ifsym$ respects $\eq$ up to $\eq$)
requires us to know that $\ifsym$ respects $\inpersym$ up to $\inpersym$.
We first prove that the elimination rule holds for those booleans on whose
aspects $\ifsym$ respects $\inpersym$. We then use this fact to prove that
$\ifsym$ \emph{always} respects $\inpersym$, and therefore that the elimination
rule holds for all booleans.

\begin{definition}
For $\ceqtm{M}{N}{\bool}$, $\cwftype{A}$, $\coftype{T}{A}$, and
$\coftype{F}{A}$, we say that \emph{$\ifsym$ is coherent on values} if
for any $\msubsts{\Psi_1}{\psi_1}{\Psi}$ and $\msubsts{\Psi_2}{\psi_2}{\Psi_1}$,
\[\begin{aligned}
\inpertab[\Psi_2]
{\ifb{\td{A}{\psi_1\psi_2}}{M_{12}}{\td{T}{\psi_1\psi_2}}{\td{F}{\psi_1\psi_2}}}
{\ifb{\td{A}{\psi_1\psi_2}}{M_2}{\td{T}{\psi_1\psi_2}}{\td{F}{\psi_1\psi_2}}}
{A_{12}} \\
\inper[\Psi_2]{}{}{A_{12}}
\inpertab[\Psi_2]
{\ifb{\td{A}{\psi_1\psi_2}}{N_{12}}{\td{T}{\psi_1\psi_2}}{\td{F}{\psi_1\psi_2}}}
{\ifb{\td{A}{\psi_1\psi_2}}{N_2}{\td{T}{\psi_1\psi_2}}{\td{F}{\psi_1\psi_2}}}
{A_{12}}
\end{aligned}\]
where $M_{12},M_2,N_{12},N_2$ are the coherent aspects of $M,N$.
\end{definition}

\begin{lemma}\label{lem:if-ceqtm}
If $\ceqtm{M}{N}{\bool}$, $\cwftype{A}$, $\coftype{T}{A}$, $\coftype{F}{A}$, 
and $\ifsym$ is coherent on values for these parameters, then
$\ceqtm{\ifb{A}{M}{T}{F}}{\ifb{A}{N}{T}{F}}{A}$.
\end{lemma}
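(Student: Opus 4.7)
The plan is to verify the three-part definition of $\ceqtm{\ifb{A}{M}{T}{F}}{\ifb{A}{N}{T}{F}}{A}$ directly. Fix any $\msubsts{\Psi_1}{\psi_1}{\Psi}$ and $\msubsts{\Psi_2}{\psi_2}{\Psi_1}$. From $\ceqtm{M}{N}{\bool}$, I obtain coherent aspect-values $M_1,M_2,M_{12}$ of $M$ and $N_1,N_2,N_{12}$ of $N$ with $\vinperfour[\Psi_2]{M_2}{M_{12}}{N_2}{N_{12}}{\bool}$. The stepping rule for $\ifsym$ that reduces the scrutinee then ensures that each of the six aspects of the two if-expressions evaluates; by determinism, the value $V^M_{12}$ of the $\psi_1\psi_2$-aspect of $\ifb{A}{M}{T}{F}$ agrees with the value of $\ifb{\td{A}{\psi_1\psi_2}}{M_{12}}{\td{T}{\psi_1\psi_2}}{\td{F}{\psi_1\psi_2}}$, and similarly for $N$.

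The four-way relation in $A_{12}$ factors into two subclaims. First, I would show $\vinper[\Psi_2]{V^M_{12}}{V^N_{12}}{A_{12}}$ by case analysis on $\vinper[\Psi_2]{M_{12}}{N_{12}}{\bool}$: in the $\true$ (resp.\ $\false$) case, both if-aspects further reduce to $\td{T}{\psi_1\psi_2}$ (resp.\ $\td{F}{\psi_1\psi_2}$), so both values equal $T_{12}$ (resp.\ $F_{12}$); in the hcom case, both if-aspects reduce by the if-on-hcom rule to Kan composites in $\td{A}{\psi_1\psi_2}$ whose caps and sides are $\ifsym$-expressions applied to the caps and sides of the original $\bool$-hcoms. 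The bool-hcom PER relation furnishes the $\ceqtm$-preconditions on those caps and sides, which---once lifted to $A$-typed if-expressions---feed into Kan clause~1 of $\cwftype{A}$ to conclude the $\ceqtm$-relation (and hence the needed $\vinper$-relation) between the two $A$-hcoms.

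Second, I would show $\vinper[\Psi_2]{V^M_{12}}{V^M_2}{A_{12}}$ (and symmetrically for $N$) by combining a short case analysis with the coherence-on-values hypothesis. Let $W^M$ denote the value of $\ifb{\td{A}{\psi_1\psi_2}}{M_2}{\td{T}{\psi_1\psi_2}}{\td{F}{\psi_1\psi_2}}$. By examining the possible forms of $M_1$, one obtains $\vinper[\Psi_2]{V^M_2}{W^M}{A_{12}}$---using $\coftype{T}{A}$ or $\coftype{F}{A}$ in the canonical cases, and cubicality of $A$ together with head expansion in the hcom case. Coherence-on-values directly yields $\vinper[\Psi_2]{V^M_{12}}{W^M}{A_{12}}$, and transitivity closes this subclaim; symmetry and transitivity of the PER at $A_{12}$ then assemble the full four-way relation, as required.

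The main obstacle is the hcom subcase of the first subclaim: lifting the $\ceqtm$-preconditions from $\bool$ to $A$ seemingly requires invoking the present lemma on the caps and sides themselves, at the higher-dimensional context $[\Psi_2,x]$. This apparent circularity hinges on the coherence-on-values hypothesis being available for those sub-arguments, which is not directly guaranteed by the present statement---I expect the actual proof either exploits structure already encoded in the bool-hcom PER relation to discharge coherence there for free, or relies on the conditional form of the lemma in concert with a companion result (to be proved next) asserting that $\ifsym$ is always coherent on values.
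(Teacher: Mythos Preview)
Your high-level plan and your identification of the hcom case as the crux are both correct, but there is a genuine missing idea: the paper resolves the circularity by \emph{induction on the inductively defined PER} $\vinper[\Psi_1]{M_1}{M_1}{\bool}$, not by appeal to a companion result. After fixing $\psi_1,\psi_2$ and stepping the scrutinee to its value $M_1$, the paper inducts on the derivation of $\vinper[\Psi_1]{M_1}{M_1}{\bool}$. In the hcom case, the premises of that derivation record that the cap $M'$ and tube faces $\dsubst{N^\e}{\e}{x}$ satisfy $\ceqtm$-conditions in $\bool$; since $\vinper{-}{-}{\bool}$ is the \emph{least} relation closed under its rules, these premises lie at strictly earlier stages, so the inductive hypothesis applies to them and yields precisely the $A$-typed $\ceqtm$-statements on $\ifb{\td A{\psi_1}}{M'}{\ldots}{\ldots}$ and $\ifb{\td A{\psi_1}}{N^\e}{\ldots}{\ldots}$ needed to feed the first Kan condition and conclude $\coftype[\Psi_1]{H}{\td A{\psi_1}}$. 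Your second speculation is backwards: the companion lemma (that $\ifsym$ is always coherent on values) is proved \emph{after} and \emph{using} the present lemma, so it cannot discharge the hypothesis here.

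The same gap bites your subclaim~2, which you do not flag. To show $\vinper[\Psi_2]{V^M_2}{W^M}{A_{12}}$ when $M_1$ is an hcom, you must first know that the $A$-hcom $H$ produced by the $\ifsym$-on-hcom step is well-typed in $\td A{\psi_1}$, so that it evaluates and its aspects cohere; but establishing $\coftype[\Psi_1]{H}{\td A{\psi_1}}$ requires exactly the lifted $\ceqtm$-conditions on $\ifsym$ of the cap and faces, hence the same inductive hypothesis. Cubicality of $A$ and head expansion alone do not suffice. Moreover, even with $\coftype[\Psi_1]{H}{\td A{\psi_1}}$ in hand, the paper still performs a further three-way case split on how $\td{M_1}{\psi_2}$ evaluates (extent becomes $\e$; $r=r'$ after $\psi_2$; or still a value), invoking respectively the third, second, and first Kan conditions of $A$, the coherent-on-values hypothesis in the first two sub-cases, and the inductive hypothesis again in the third, to bridge $H_2$ and $I_{12}$. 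Your sketch does not anticipate this additional structure, and your decomposition (casing at $\Psi_2$ for subclaim~1 but at $\Psi_1$ for subclaim~2) makes the argument harder to organize than the paper's choice of inducting once at $\Psi_1$.
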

\begin{proof}
Here we work through the proof for the unary case (if $\coftype{M}{\bool}$ then
$\coftype{\ifb{A}{M}{T}{F}}{A}$); the binary case follows by repeating the
argument for $N$. Let $I = \ifb{A}{M}{T}{F}$. We need to show that for all
$\msubsts{\Psi_1}{\psi_1}{\Psi}$ and
$\msubsts{\Psi_2}{\psi_2}{\Psi_1}$,
$\td{I}{\psi_1} \evals I_1$, $\td{I_1}{\psi_2}\evals I_2$,
$\td{I}{\psi_1\psi_2} \evals I_{12}$, and
$\vinper[\Psi_2]{I_2}{I_{12}}{A_{12}}$
where $\td{A}{\psi_1\psi_2} \evals A_{12}$.

Expanding $\coftype{M}{\bool}$, we get
$\td{M}{\psi_1}\evals M_1$,
$\td{M_1}{\psi_2}\evals M_2$,
$\td{M}{\psi_1\psi_2}\evals M_{12}$,
$\vinper[\Psi_1]{M_1}{M_1}{\bool}$, and 
$\vinper[\Psi_2]{M_2}{M_{12}}{\bool}$.
By the operational semantics for $\ifsym$,
$\ifb{\td{A}{\psi_1}}{\td{M}{\psi_1}}{\td{T}{\psi_1}}{\td{F}{\psi_1}} \steps^*
\ifb{\td{A}{\psi_1}}{M_1}{\td{T}{\psi_1}}{\td{F}{\psi_1}}$.
This term's next step depends on $M_1$, which we determine by induction on
$\vinper[\Psi_1]{M_1}{M_1}{\bool}$:

\begin{enumerate}
\item $\vinper[\Psi_1]{\true}{\true}{\bool}$.

Then $M_1 = \true$, and 
$\ifb{\td{A}{\psi_1}}{\true}{\td{T}{\psi_1}}{\td{F}{\psi_1}} \steps
\td{T}{\psi_1}$. By $\coftype{T}{A}$ we have
$\td{T}{\psi_1}\evals T_1$,
$\td{T_1}{\psi_2}\evals T_2$,
$\td{T}{\psi_1\psi_2}\evals T_{12}$, and
$\vinper[\Psi_2]{T_2}{T_{12}}{A_{12}}$
where $\td{A}{\psi_1\psi_2} \evals A_{12}$, so $I_1 = T_1$ and $I_2 = T_2$.
To determine $I_{12}$, notice that
$\ifb{\td{A}{\psi_1\psi_2}}{\td{M}{\psi_1\psi_2}}{\td{T}{\psi_1\psi_2}}{\td{F}{\psi_1\psi_2}}
\steps^*
\ifb{\td{A}{\psi_1\psi_2}}{M_{12}}{\td{T}{\psi_1\psi_2}}{\td{F}{\psi_1\psi_2}}$.
Since $\td{\true}{\psi_2} \evals M_2 = \true$ and
$\vinper[\Psi_2]{M_2}{M_{12}}{\bool}$, $M_{12} = \true$ also. Then
$\ifb{\td{A}{\psi_1\psi_2}}{\true}{\td{T}{\psi_1\psi_2}}{\td{F}{\psi_1\psi_2}}
\steps \td{T}{\psi_1\psi_2} \evals T_{12}$. Therefore $I_{12} = T_{12}$ and
$\vinper[\Psi_2]{I_2}{I_{12}}{A_{12}}$.

\item $\vinper[\Psi_1]{\false}{\false}{\bool}$.

Same as previous case but with $I_1 = F_1$, $I_2 = F_2$, and $I_{12} = F_{12}$.

\item $\vinper[\Psi_1',x]{\hcom{x}{\bool}{r}{r'}{M'}{y.N^0,y.N^1}}{\hcom{x}{\bool}{r}{r'}{M'}{y.N^0,y.N^1}}{\bool}$ where 
$\Psi_1 = (\Psi_1',x)$,
$r\neq r'$,
$\coftype[\Psi_1',x]{M'}{\bool}$,
$\coftype[\Psi_1',y]{\dsubst{N^\e}{\e}{x}}{\bool}$ for $\e=0,1$, and
$\ceqtm[\Psi_1']{\dsubst{\dsubst{N^\e}{r}{y}}{\e}{x}}{\dsubst{M'}{\e}{x}}{\bool}$ for $\e=0,1$.

Then
$\ifb{\td{A}{\psi_1}}{\hcom{x}{\bool}{r}{r'}{M'}{y.N^0,y.N^1}}{\td{T}{\psi_1}}{\td{F}{\psi_1}} \steps H$ where
\[
H := \hcom{x}{\td{A}{\psi_1}}{r}{r'}
{\ifb{\td{A}{\psi_1}}{M'}{\td{T}{\psi_1}}{\td{F}{\psi_1}}}
{y.\ifb{\td{A}{\psi_1}}{N^0}{\td{T}{\psi_1}}{\td{F}{\psi_1}},
y.\ifb{\td{A}{\psi_1}}{N^1}{\td{T}{\psi_1}}{\td{F}{\psi_1}}}.
\]
We will show $\coftype[\Psi_1]{H}{\td{A}{\psi_1}}$, which implies
$\td{I}{\psi_1}\evals I_1$,
$\td{I_1}{\psi_2}\evals I_2$,
$\td{H}{\psi_2}\evals H_2$, and
$\vinper[\Psi_2]{H_2}{I_2}{A_{12}}$.
(It does not give us information about $I_{12}$, because $\td{I}{\psi_1\psi_2}$
might not step to $H$.)

Since $\cwftype{A}$, we know $A$ is Kan;
by the first Kan condition, it suffices to show 
\begin{enumerate}
\item $\coftype[\Psi_1',x]{\ifb{\td{A}{\psi_1}}{M'}{\td{T}{\psi_1}}{\td{F}{\psi_1}}}{\td{A}{\psi_1}}$,
\item $\coftype[\Psi_1',y]{\ifb{\dsubst{\td{A}{\psi_1}}{\e}{x}}{\dsubst{N^\e}{\e}{x}}{\dsubst{\td{T}{\psi_1}}{\e}{x}}{\dsubst{\td{F}{\psi_1}}{\e}{x}}}{\dsubst{\td{A}{\psi_1}}{\e}{x}}$ for $\e=0,1$,
and
\item $\ceqtm[\Psi_1',y]{\ifb{\dsubst{\td{A}{\psi_1}}{\e}{x}}{\dsubst{\dsubst{N^\e}{r}{y}}{\e}{x}}{\dsubst{\td{T}{\psi_1}}{\e}{x}}{\dsubst{\td{F}{\psi_1}}{\e}{x}}}
{\ifb{\dsubst{\td{A}{\psi_1}}{\e}{x}}{\dsubst{M'}{\e}{x}}{\dsubst{\td{T}{\psi_1}}{\e}{x}}{\dsubst{\td{F}{\psi_1}}{\e}{x}}}
{\dsubst{\td{A}{\psi_1}}{\e}{x}}$ for $\e=0,1$.
\end{enumerate}
All three can be obtained by applying the inductive hypothesis to the
typing and equality information we extracted from this case of
$\vinper[\Psi_1',x]{M_1}{M_1}{\bool}$. (Note that $x$ can occur in
$\td{T}{\psi_1}$ and $\td{F}{\psi_1}$, but $y$ cannot; and that we need the
binary version of the inductive hypothesis in order to derive the adjacency
condition.)

Taking stock, we now know $I_1$ and $I_2$, and must show
$\vinper[\Psi_2]{I_2}{I_{12}}{A_{12}}$ where $\td{I}{\psi_1\psi_2} \evals
I_{12}$. Since $\vinper[\Psi_2]{I_2}{H_2}{A_{12}}$, it suffices to show
$\vinper[\Psi_2]{H_2}{I_{12}}{A_{12}}$. But $\td{I}{\psi_1\psi_2} \steps^*
\ifb{\td{A}{\psi_1\psi_2}}{M_{12}}{\td{T}{\psi_1\psi_2}}{\td{F}{\psi_1\psi_2}}$
where $\td{M}{\psi_1\psi_2}\evals M_{12}$, so $I_{12}$ is determined by
$M_{12}$, which is in turn determined by $\td{M_1}{\psi_2}\evals M_2$ because
$\vinper[\Psi_2]{M_2}{M_{12}}{\bool}$. Therefore we proceed by considering the
three possible ways $\td{M_1}{\psi_2} = 
\hcom{\td{x}{\psi_2}}{\bool}{\td{r}{\psi_2}}{\td{r'}{\psi_2}}{\td{M'}{\psi_2}}{y.\td{N^0}{\psi_2},y.\td{N^1}{\psi_2}}$
can evaluate.

\begin{enumerate}
\item 
$\td{M_1}{\psi_2} \steps \dsubst{\td{N^\e}{\psi_2}}{\td{r'}{\psi_2}}{y}$
because $\td{x}{\psi_2} = \e$.

Since $\td{x}{\psi_2} = \e$,
$\dsubst{\td{\dsubst{N^\e}{\e}{x}}{\psi_2}}{\td{r'}{\psi_2}}{y} =
\dsubst{\td{N^\e}{\psi_2}}{\td{r'}{\psi_2}}{y}$. Thus by the typing premise of
$\vinper[\Psi_1',x]{M_1}{M_1}{\bool}$,
$\dsubst{\td{N^\e}{\psi_2}}{\td{r'}{\psi_2}}{y} \evals N^\e_2 = M_2$.
By the third Kan condition of $A$, $\td{x}{\psi_2} = \e$, and $y
\fresh \td{T}{\psi_1\psi_2},\td{F}{\psi_1\psi_2}$, we have
\[
\ceqtm[\Psi_2]{\td{H}{\psi_2}}{
\ifb{\td{A}{\psi_1\psi_2}}{\dsubst{\td{N^\e}{\psi_2}}{\td{r'}{\psi_2}}{y}}{\td{T}{\psi_1\psi_2}}{\td{F}{\psi_1\psi_2}}}{\td{A}{\psi_1\psi_2}}
\]
and so
\[\begin{aligned}
\inpertab[\Psi_2]{H_2}
{\ifb{\td{A}{\psi_1\psi_2}}{\dsubst{\td{N^\e}{\psi_2}}{\td{r'}{\psi_2}}{y}}{\td{T}{\psi_1\psi_2}}{\td{F}{\psi_1\psi_2}}}{A_{12}} \\
\inpertab[\Psi_2]{}
{\ifb{\td{A}{\psi_1\psi_2}}{N^\e_2}{\td{T}{\psi_1\psi_2}}{\td{F}{\psi_1\psi_2}}}{A_{12}} \\
\inpertab[\Psi_2]{}
{\ifb{\td{A}{\psi_1\psi_2}}{M_{12}}{\td{T}{\psi_1\psi_2}}{\td{F}{\psi_1\psi_2}}}{A_{12}} \\
\inpertab[\Psi_2]{}{\td{I}{\psi_1\psi_2}}{A_{12}} \\
\end{aligned}\]
where the middle step uses that $\ifsym$ is coherent on values
(for $\vinper[\Psi_2]{N^\e_2 = M_2}{M_{12}}{\bool}$).

\item
$\td{M_1}{\psi_2} \steps \td{M'}{\psi_2}$
because $\td{x}{\psi_2} = x'$ and $\td{r}{\psi_2} = \td{r'}{\psi_2}$.

By the typing premise of 
$\vinper[\Psi_1',x]{M_1}{M_1}{\bool}$, $\td{M'}{\psi_2}\evals M'_2 =
M_2$. By the second Kan condition of $A$ and $\td{r}{\psi_2} =
\td{r'}{\psi_2}$, 
\[
\ceqtm[\Psi_2]{\td{H}{\psi_2}}{\ifb{\td{A}{\psi_1}}{\td{M'}{\psi_2}}{\td{T}{\psi_1\psi_2}}{\td{F}{\psi_1\psi_2}}}{\td{A}{\psi_1\psi_2}}
\]
and so we conclude $\vinper[\Psi_2]{H_2}{I_{12}}{A_{12}}$ as in the previous
case, again relying on the assumption that $\ifsym$ is coherent on values.

\item
$\isval{\td{M_1}{\psi_2}}$
because $\td{x}{\psi_2} = x'$ and $\td{r}{\psi_2} \neq \td{r'}{\psi_2}$.

Then $\Psi_2 = (\Psi_2',x')$, $\td{M_1}{\psi_2} = M_2$, and by
$\vinper[\Psi_2]{\td{M_1}{\psi_2}}{M_{12}}{\bool}$, we know
\[
M_{12} = \hcom{x'}{\bool}{\td{r}{\psi_2}}{\td{r'}{\psi_2}}{O}{y.P^0,y.P^1}
\]
where
$\ceqtm[\Psi_2',x']{\td{M'}{\psi_2}}{O}{\bool}$ and
$\ceqtm[\Psi_2',y]{\dsubst{\td{N^\e}{\psi_2}}{\e}{x'}}{\dsubst{P^\e}{\e}{x'}}{\bool}$
for $\e=0,1$.
Then $\td{I}{\psi_1\psi_2} \steps^*$
\[
\hcom{x'}{\td{A}{\psi_1\psi_2}}{\td{r}{\psi_2}}{\td{r'}{\psi_2}}
{\ifb{\td{A}{\psi_1\psi_2}}{O}{\td{T}{\psi_1\psi_2}}{\td{F}{\psi_1\psi_2}}}
{y.\ifb{\td{A}{\psi_1\psi_2}}{P^0}{\td{T}{\psi_1\psi_2}}{\td{F}{\psi_1\psi_2}},
\dots}
\]
Call this term $H'$. By the inductive hypothesis applied to
$\ceqtm[\Psi_2',x']{\td{M'}{\psi_2}}{O}{\bool}$,
\[
\ceqtm[\Psi_2',x']
{\ifb{\td{A}{\psi_1\psi_2}}{\td{M'}{\psi_2}}{\td{T}{\psi_1\psi_2}}{\td{F}{\psi_1\psi_2}}}
{\ifb{\td{A}{\psi_1\psi_2}}{O}{\td{T}{\psi_1\psi_2}}{\td{F}{\psi_1\psi_2}}}
{\td{A}{\psi_1\psi_2}}
\]
and similarly for the other components of the $\hcomsym$. By the first Kan
condition of $A$, these equations imply
$\ceqtm[\Psi_2]{H'}{\td{H}{\psi_2}}{\td{A}{\psi_1\psi_2}}$,
so in particular $\inper[\Psi_2]{H'}{\td{H}{\psi_2}}{A_{12}}$ and thus
$\vinper[\Psi_2]{I_{12}}{H_2}{A_{12}}$.
\qedhere
\end{enumerate}
\end{enumerate}
\end{proof}

\begin{lemma}\label{lem:if-vinper}
If $\vinper{M}{N}{\bool}$, $\cwftype{A}$, $\coftype{T}{A}$, and
$\coftype{F}{A}$,
then $\inper{\ifb{A}{M}{T}{F}}{\ifb{A}{N}{T}{F}}{A_0}$ where $A\evals A_0$.
\end{lemma}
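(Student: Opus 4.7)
The plan is to induct on the derivation of $\vinper{M}{N}{\bool}$, which by the inductive definition must conclude via one of the three clauses. In the two base cases, $M = N = \true$ or $M = N = \false$, so each side reduces in one step to $T$ or $F$ respectively, and these in turn evaluate to some value $T_0$ or $F_0$. Instantiating $\coftype{T}{A}$ (or $\coftype{F}{A}$) at the identity dimension substitutions gives $\vinper{T_0}{T_0}{A_0}$ (using determinacy to identify the value of $A$), so the conclusion $\inper{\ifb{A}{M}{T}{F}}{\ifb{A}{N}{T}{F}}{A_0}$ follows immediately.

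The third case is where the work lies. Here $\Psi = (\Psi', x)$, $r \neq r'$, and
\[M = \hcom{x}{\bool}{r}{r'}{M'}{y.N^0, y.N^1}, \qquad N = \hcom{x}{\bool}{r}{r'}{O}{y.P^0, y.P^1},\]
with $\ceqtm$-premises on caps, tubes, and adjacency. Using the operational rule that pushes $\ifsym$ through a non-canonical $\hcomsym$ over $\bool$, both $\ifb{A}{M}{T}{F}$ and $\ifb{A}{N}{T}{F}$ step to $\hcomsym$-terms $H_M$ and $H_N$ over $A$ whose caps and tubes are the $\ifsym$-eliminations of the corresponding caps and tubes of $M$ and $N$. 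It therefore suffices to prove $\ceqtm[\Psi]{H_M}{H_N}{A}$: specializing at identity dimension substitutions shows $H_M$ and $H_N$ evaluate to values lying in the $A_0$-PER, and these are the very values to which the original $\ifsym$-expressions evaluate.

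To establish $\ceqtm[\Psi]{H_M}{H_N}{A}$, I would invoke Kan condition 1 of $A$ at the identity dimension substitution. Its required premises are $\ceqtm$-equalities such as $\ceqtm[\Psi',x]{\ifb{A}{M'}{T}{F}}{\ifb{A}{O}{T}{F}}{A}$, together with the analogous statements on tubes and adjacency. Each is the conclusion of Lemma~\ref{lem:if-ceqtm} applied to the matching case-3 $\ceqtm$-premise on $\bool$, provided the ``$\ifsym$ is coherent on values'' hypothesis of that lemma holds at the subcomponent in question. This is precisely where the induction hypothesis of the present lemma enters: coherence on values for, for example, $M'$ and $O$ is the statement that $\ifsym$ applied to their value aspects yields elements related in the $A_0$-PER, and those aspect values are related by $\vinper{-}{-}{\bool}$ via strictly smaller derivations than the case-3 derivation at hand, so the IH supplies exactly what is needed. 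The main obstacle is orchestrating this simultaneous use of the IH and Lemma~\ref{lem:if-ceqtm} uniformly over all aspects generated by the ambient dimension substitutions; well-foundedness follows from the least-fixed-point character of the definition of $\vinper{-}{-}{\bool}$.
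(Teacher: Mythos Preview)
Your proposal is correct and follows essentially the same route as the paper's proof: induction on $\vinper{M}{N}{\bool}$, direct reduction in the $\true$/$\false$ cases, and in the $\hcomsym$ case stepping both sides to $\hcomsym$s over $A$, reducing the goal to a $\ceqtm$ equality via Kan condition~1, and discharging its premises by applying \cref{lem:if-ceqtm} with the inductive hypothesis supplying the ``coherent on values'' side condition. Your remark about well-foundedness coming from the least-fixed-point nature of $\vinper{-}{-}{\bool}$ is exactly the justification the paper leaves implicit.
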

\begin{proof}
By induction on $\vinper{M}{N}{\bool}$.
\begin{enumerate}
\item $\vinper{\true}{\true}{\bool}$.

Then $\ifb{A}{\true}{T}{F} \steps T \evals T_0$, and $\vinper{T_0}{T_0}{A_0}$.

\item $\vinper{\false}{\false}{\bool}$.

Then $\ifb{A}{\false}{T}{F} \steps F \evals F_0$, and $\vinper{F_0}{F_0}{A_0}$.

\item
$\vinper[\Psi',x]{\hcom{x}{\bool}{r}{r'}{M'}{y.N^0,y.N^1}}{\hcom{x}{\bool}{r}{r'}{O}{y.P^0,y.P^1}}{\bool}$
where $\Psi=(\Psi',x)$, $r\neq r'$,
$\ceqtm[\Psi',x]{M'}{O}{\bool}$,
$\ceqtm[\Psi',y]{\dsubst{N^\e}{\e}{x}}{\dsubst{P^\e}{\e}{x}}{\bool}$ for
$\e=0,1$, and
$\ceqtm[\Psi']{\dsubst{\dsubst{N^\e}{r}{y}}{\e}{x}}{\dsubst{M'}{\e}{x}}{\bool}$
for $\e=0,1$.

Then $\ifb{A}{M}{T}{F} \steps
\hcom{x}{A}{r}{r'}{\ifb{A}{M'}{T}{F}}
{y.\ifb{A}{N^0}{T}{F},y.\ifb{A}{N^1}{T}{F}}$
and similarly for $\ifb{A}{N}{T}{F}$. To show the resulting $\hcomsym$s are
$\inper{-}{-}{A_0}$, it suffices to show that they are 
$\ceqtm{-}{-}{A}$. We appeal to the first Kan condition of $A$, which applies
when
\begin{enumerate}
\item $\ceqtm[\Psi',x]{\ifb{A}{M'}{T}{F}}{\ifb{A}{O}{T}{F}}{A}$,
\item
$\ceqtm[\Psi',y]
{\dsubst{(\ifb{A}{N^\e}{T}{F})}{\e}{x}}
{\dsubst{(\ifb{A}{P^\e}{T}{F})}{\e}{x}}
{\dsubst{A}{\e}{x}}$ for $\e=0,1$, and
\item
$\ceqtm[\Psi']
{\dsubst{(\ifb{A}{\dsubst{N^\e}{r}{y}}{T}{F})}{\e}{x}}
{\dsubst{(\ifb{A}{M'}{T}{F})}{\e}{x}}
{\dsubst{A}{\e}{x}}$ for $\e=0,1$.
\end{enumerate}
We establish these equalities by appealing to \cref{lem:if-ceqtm} at
$\ceqtm[\Psi',x]{M'}{O}{\bool}$, etc., whose
$\inper[\Psi_2]
{\ifb{\td{A}{\psi_1\psi_2}}{-}{\td{T}{\psi_1\psi_2}}{\td{F}{\psi_1\psi_2}}}
{\ifb{\td{A}{\psi_1\psi_2}}{-}{\td{T}{\psi_1\psi_2}}{\td{F}{\psi_1\psi_2}}}
{A_{12}}$ hypothesis we establish with the current lemma's
inductive hypothesis.
\qedhere
\end{enumerate}
\end{proof}

Finally, the elimination rule for $\bool$ follows directly from
\cref{lem:if-ceqtm,lem:if-vinper}, because \cref{lem:if-vinper} implies that
$\ifsym$ is always coherent on values.

\paragraph{Computation}
If $\cwftype{A}$, $\coftype{T}{A}$, and $\coftype{F}{A}$, then
$\ceqtm{\ifb{A}{\true}{T}{F}}{T}{A}$ and
$\ceqtm{\ifb{A}{\false}{T}{F}}{F}{A}$.

For all $\psi$, 
$\ifb{\td{A}{\psi}}{\true}{\td{T}{\psi}}{\td{F}{\psi}} \steps \td{T}{\psi}$, so
the former follows from head expansion and $\ceqtm{T}{T}{A}$. The latter case is
analogous.

\paragraph{Kan}
Show $\cpretype{\bool}$ is Kan.

We will once again prove only the unary version of the first condition, in order
to lessen the notational burden; the binary version follows by the same
argument. Show that for any $\Psi'$, if
\begin{enumerate}
\item \coftype[\Psi',x]{M}{\bool},
\item \coftype[\Psi',y]{\dsubst{N^\e}{\e}{x}}{\bool} for $\e=0,1$, and
\item \ceqtm[\Psi']{\dsubst{\dsubst{N^\e}{r}{y}}{\e}{x}}{\dsubst{M}{\e}{x}}{\bool} for $\e=0,1$,
\end{enumerate}
then $\coftype[\Psi',x]{\hcomgeneric{x}{\bool}}{\bool}$.
That is, for any $\msubsts{\Psi_1}{\psi_1}{(\Psi',x)}$ and
$\msubsts{\Psi_2}{\psi_2}{\Psi_1}$, $\td{\hcomsym}{\psi_1}\evals H_1$,
$\td{H_1}{\psi_2}\evals H_2$, $\td{\hcomsym}{\psi_1\psi_2}\evals H_{12}$, and
$\vinper[\Psi_2]{H_{12}}{H_2}{\bool}$. We prove this by case-analyzing how
$x,r,r'$ are affected by $\psi_1$ and $\psi_1\psi_2$.

\begin{enumerate}
\item $\td{x}{\psi_1}=\e$. (Therefore $\td{x}{\psi_1\psi_2}=\e$ also.)
Then 
\[\begin{aligned}
\hcom{\e}{\bool}{\td{r}{\psi_1}}{\td{r'}{\psi_1}}{\td{M}{\psi_1}}{y.\td{N^0}{\psi_1},y.\td{N^1}{\psi_1}}
&\steps \dsubst{\td{N^\e}{\psi_1}}{\td{r'}{\psi_1}}{y} \\
\hcom{\e}{\bool}{\td{r}{\psi_1\psi_2}}{\td{r'}{\psi_1\psi_2}}{\td{M}{\psi_1\psi_2}}{y.\td{N^0}{\psi_1\psi_2},y.\td{N^1}{\psi_1\psi_2}}
&\steps \dsubst{\td{N^\e}{\psi_1\psi_2}}{\td{r'}{\psi_1\psi_2}}{y}
\end{aligned}\]
By $\coftype[\Psi',y]{\dsubst{N^\e}{\e}{x}}{\bool}$ and
$\td{\dsubst{N^\e}{\e}{x}}{\psi_1} = \td{N^\e}{\psi_1}$,
we know that
$\dsubst{\td{N^\e}{\psi_1}}{\td{r'}{\psi_1}}{y} \evals N^\e_1$,
$\td{N^\e_1}{\psi_2} \evals N^\e_2$,
$\dsubst{\td{N^\e}{\psi_1\psi_2}}{\td{r'}{\psi_1\psi_2}}{y} 
= \td{\dsubst{\td{N^\e}{\psi_1}}{\td{r'}{\psi_1}}{y}}{\psi_2}
\evals N^\e_{12}$,
and $\vinper[\Psi_2]{N^\e_{12}}{N^\e_2}{\bool}$.

\item $\td{x}{\psi_1}=x'$, $\td{r}{\psi_1}=\td{r'}{\psi_1}$, and
$\td{x}{\psi_1\psi_2}=\e$. Then
\[\begin{aligned}
\hcom{x'}{\bool}{\td{r}{\psi_1}}{\td{r'}{\psi_1}}{\td{M}{\psi_1}}{y.\td{N^0}{\psi_1},y.\td{N^1}{\psi_1}}
&\steps \td{M}{\psi_1} \\
\hcom{\e}{\bool}{\td{r}{\psi_1\psi_2}}{\td{r'}{\psi_1\psi_2}}{\td{M}{\psi_1\psi_2}}{y.\td{N^0}{\psi_1\psi_2},y.\td{N^1}{\psi_1\psi_2}}
&\steps \dsubst{\td{N^\e}{\psi_1\psi_2}}{\td{r'}{\psi_1\psi_2}}{y}
\end{aligned}\]
By $\coftype[\Psi',x]{M}{\bool}$ we know $\td{M}{\psi_1}\evals M_1$,
$\td{M_1}{\psi_2}\evals M_2$, $\td{M}{\psi_1\psi_2}\evals M_{12}$, and
$\vinper[\Psi_2]{M_{12}}{M_2}{\bool}$, so by transitivity it suffices to show
$\inper[\Psi_2]{\td{M}{\psi_1\psi_2}}{\dsubst{\td{N^\e}{\psi_1\psi_2}}{\td{r'}{\psi_1\psi_2}}{y}}{\bool}$.
By $\td{x}{\psi_1\psi_2}=\e$, we have
$\td{M}{\psi_1\psi_2} = \td{\dsubst{M}{\e}{x}}{\psi_1\psi_2}$.
By $\td{x}{\psi_1\psi_2}=\e$ and $\td{r}{\psi_1}=\td{r'}{\psi_1}$, we have
$\dsubst{\td{N^\e}{\psi_1\psi_2}}{\td{r'}{\psi_1\psi_2}}{y} =
 \dsubst{\td{N^\e}{\psi_1\psi_2}}{\td{r}{\psi_1\psi_2}}{y} =
 \dsubst{\td{\dsubst{N^\e}{\e}{x}}{\psi_1\psi_2}}{\td{r}{\psi_1\psi_2}}{y} =
 \td{\dsubst{\dsubst{N^\e}{r}{y}}{\e}{x}}{\psi_1\psi_2}$.
Therefore our desired equation follows directly from
$\ceqtm[\Psi']{\dsubst{\dsubst{N^\e}{r}{y}}{\e}{x}}{\dsubst{M}{\e}{x}}{\bool}$
under $\psi_1\psi_2$.

\item $\td{x}{\psi_1}=x'$, $\td{r}{\psi_1}=\td{r'}{\psi_1}$, and
$\td{x}{\psi_1\psi_2}=x''$. (Therefore
$\td{r}{\psi_1\psi_2}=\td{r'}{\psi_1\psi_2}$.) Then
\[\begin{aligned}
\hcom{x'}{\bool}{\td{r}{\psi_1}}{\td{r'}{\psi_1}}{\td{M}{\psi_1}}{y.\td{N^0}{\psi_1},y.\td{N^1}{\psi_1}}
&\steps \td{M}{\psi_1} \\
\hcom{x''}{\bool}{\td{r}{\psi_1\psi_2}}{\td{r'}{\psi_1\psi_2}}{\td{M}{\psi_1\psi_2}}{y.\td{N^0}{\psi_1\psi_2},y.\td{N^1}{\psi_1\psi_2}}
&\steps \td{M}{\psi_1\psi_2}
\end{aligned}\]
By $\coftype[\Psi',x]{M}{\bool}$ we know $\td{M}{\psi_1}\evals M_1$,
$\td{M_1}{\psi_2}\evals M_2$, $\td{M}{\psi_1\psi_2}\evals M_{12}$, and
$\vinper[\Psi_2]{M_{12}}{M_2}{\bool}$.

\item $\td{x}{\psi_1}=x'$, $\td{r}{\psi_1}\neq\td{r'}{\psi_1}$, and
$\td{x}{\psi_1\psi_2}=\e$. Then
\[\begin{aligned}
\isvaltab{\hcom{x'}{\bool}{\td{r}{\psi_1}}{\td{r'}{\psi_1}}{\td{M}{\psi_1}}{y.\td{N^0}{\psi_1},y.\td{N^1}{\psi_1}}} \\
\hcom{\e}{\bool}{\td{r}{\psi_1\psi_2}}{\td{r'}{\psi_1\psi_2}}{\td{M}{\psi_1\psi_2}}{y.\td{N^0}{\psi_1\psi_2},y.\td{N^1}{\psi_1\psi_2}}
&\steps \dsubst{\td{N^\e}{\psi_1\psi_2}}{\td{r'}{\psi_1\psi_2}}{y}
\end{aligned}\]
In this case $H_1 = \td{\hcomsym}{\psi_1}$, so
$\td{H_1}{\psi_2} = \td{\hcomsym}{\psi_1\psi_2}$ and we must show
$\dsubst{\td{N^\e}{\psi_1\psi_2}}{\td{r'}{\psi_1\psi_2}}{y} \evals N^\e_{12}$
and $\vinper[\Psi_2]{N^\e_{12}}{N^\e_{12}}{\bool}$.
This follows from
$\coftype[\Psi',y]{\dsubst{N^\e}{\e}{x}}{\bool}$ and
$\dsubst{\td{N^\e}{\psi_1\psi_2}}{\td{r'}{\psi_1\psi_2}}{y} =
 \dsubst{\td{\dsubst{N^\e}{\e}{x}}{\psi_1\psi_2}}{\td{r'}{\psi_1\psi_2}}{y}$.

\item $\td{x}{\psi_1}=x'$, $\td{r}{\psi_1}\neq\td{r'}{\psi_1}$,
$\td{x}{\psi_1\psi_2}=x''$, and $\td{r}{\psi_1\psi_2} = \td{r'}{\psi_1\psi_2}$.
Then
\[\begin{aligned}
\isvaltab{\hcom{x'}{\bool}{\td{r}{\psi_1}}{\td{r'}{\psi_1}}{\td{M}{\psi_1}}{y.\td{N^0}{\psi_1},y.\td{N^1}{\psi_1}}} \\
\hcom{x''}{\bool}{\td{r}{\psi_1\psi_2}}{\td{r'}{\psi_1\psi_2}}{\td{M}{\psi_1\psi_2}}{y.\td{N^0}{\psi_1\psi_2},y.\td{N^1}{\psi_1\psi_2}}
&\steps \td{M}{\psi_1\psi_2}
\end{aligned}\]
Once again, $H_1 = \td{\hcomsym}{\psi_1}$ and $\td{H_1}{\psi_2} = 
\td{\hcomsym}{\psi_1\psi_2}$, so we must show
$\td{M}{\psi_1\psi_2} \evals M_{12}$ and
$\vinper[\Psi_2]{M_{12}}{M_{12}}{\bool}$, which follows from
$\coftype[\Psi',x]{M}{\bool}$.

\item $\td{x}{\psi_1}=x'$, $\td{r}{\psi_1}\neq\td{r'}{\psi_1}$,
$\td{x}{\psi_1\psi_2}=x''$, and $\td{r}{\psi_1\psi_2}\neq\td{r'}{\psi_1\psi_2}$.
Then
\[\begin{aligned}
\isvaltab{\hcom{x'}{\bool}{\td{r}{\psi_1}}{\td{r'}{\psi_1}}{\td{M}{\psi_1}}{y.\td{N^0}{\psi_1},y.\td{N^1}{\psi_1}}} \\
\isvaltab{\hcom{x''}{\bool}{\td{r}{\psi_1\psi_2}}{\td{r'}{\psi_1\psi_2}}{\td{M}{\psi_1\psi_2}}{y.\td{N^0}{\psi_1\psi_2},y.\td{N^1}{\psi_1\psi_2}}}
\end{aligned}\]
Because $\td{H_1}{\psi_2} = \td{\hcomsym}{\psi_1\psi_2} = H_{12}$, we must show
that $\vinper[\Psi_2]{H_{12}}{H_{12}}{\bool}$. Let $\Psi_2 = \Psi_2',x''$. Then
$\vinper[\Psi_2',x'']{H_{12}}{H_{12}}{\bool}$ because
$\td{r}{\psi_1\psi_2}\neq\td{r'}{\psi_1\psi_2}$ and
$\coftype[\Psi_2',x'']{\td{M}{\psi_1\psi_2}}{\bool}$; by
$\td{\dsubst{N^\e}{\e}{x}}{\psi_1\psi_2} = 
 \dsubst{\td{N^\e}{\psi_1\psi_2}}{\e}{x''}$ we have that
$\coftype[\Psi_2',y]{\dsubst{\td{N}{\psi_1\psi_2}}{\e}{x''}}{\bool}$; and by
$\td{\dsubst{M}{\e}{x}}{\psi_1\psi_2} = 
 \dsubst{\td{M}{\psi_1\psi_2}}{\e}{x''}$ and 
$\td{\dsubst{\dsubst{N^\e}{r}{y}}{\e}{x}}{\psi_1\psi_2} = 
 \dsubst{\dsubst{\td{N^\e}{\psi_1\psi_2}}{\td{r}{\psi_1\psi_2}}{y}}{\e}{x''}$
 we have that
$\ceqtm[\Psi_2']{\dsubst{\dsubst{\td{N^\e}{\psi_1\psi_2}}{\td{r}{\psi_1\psi_2}}{y}}{\e}{x''}}{\dsubst{\td{M}{\psi_1\psi_2}}{\e}{x''}}{\bool}$.
\end{enumerate}

The second Kan condition asserts that when $r = r'$, the Kan
composition is equal to its ``cap'': for any $\Psi'$, if
\begin{enumerate}
\item \coftype[\Psi',x]{M}{\bool},
\item \coftype[\Psi',y]{\dsubst{N^\e}{\e}{x}}{\bool} for
$\e=0,1$, and
\item \ceqtm[\Psi']{\dsubst{\dsubst{N^\e}{r}{y}}{\e}{x}}{\dsubst{M}{\e}{x}}{\bool} for $\e=0,1$,
\end{enumerate}
then $\ceqtm[\Psi',x]{\hcom{x}{\bool}{r}{r}{M}{y.N^0,y.N^1}}{M}{\bool}$.
Recall that establishing such an equation requires showing that both sides have
coherent aspects, and moreover, those aspects are $\vinper[\Psi_2]{-}{-}{\bool}$
to the aspects of the other side. The first Kan condition establishes that the
left-hand side has coherent aspects, and the first hypothesis of this theorem
establishes the same for $M$, so it suffices to show that the two sides under
$\psi_1\psi_2$ are $\inper[\Psi_2]{-}{-}{\bool}$.

\begin{enumerate}
\item $\td{x}{\psi_1\psi_2}=\e$. Then
\[
\hcom{\e}{\bool}{\td{r}{\psi_1\psi_2}}{\td{r}{\psi_1\psi_2}}{\td{M}{\psi_1\psi_2}}{y.\td{N^0}{\psi_1\psi_2},y.\td{N^1}{\psi_1\psi_2}}
\steps \dsubst{\td{N^\e}{\psi_1\psi_2}}{\td{r}{\psi_1\psi_2}}{y}
\]
Since
$\dsubst{\td{N^\e}{\psi_1\psi_2}}{\td{r}{\psi_1\psi_2}}{y}
= \dsubst{\td{\dsubst{N^\e}{\e}{x}}{\psi_1\psi_2}}{\td{\dsubst{r}{\e}{x}}{\psi_1\psi_2}}{y}
= \td{\dsubst{\dsubst{N^\e}{r}{y}}{\e}{x}}{\psi_1\psi_2}$
and $\td{M}{\psi_1\psi_2} = \td{\dsubst{M}{\e}{x}}{\psi_1\psi_2}$, 
the adjacency assumption yields the desired equation:
$\inper[\Psi_2]
{\dsubst{\td{N^\e}{\psi_1\psi_2}}{\td{r}{\psi_1\psi_2}}{y}}
{\td{M}{\psi_1\psi_2}}{\bool}$.

\item $\td{x}{\psi_1\psi_2}=x'$. Then
\[
\hcom{x'}{\bool}{\td{r}{\psi_1\psi_2}}{\td{r}{\psi_1\psi_2}}{\td{M}{\psi_1\psi_2}}{y.\td{N^0}{\psi_1\psi_2},y.\td{N^1}{\psi_1\psi_2}}
\steps \td{M}{\psi_1\psi_2}
\]
and what we want to show, 
$\inper[\Psi_2]{\td{M}{\psi_1\psi_2}}{\td{M}{\psi_1\psi_2}}{\bool}$, follows
immediately from $\coftype[\Psi',x]{M}{\bool}$.
\end{enumerate}

The third Kan condition asserts that when $r_1 = \e$, the Kan
composition is equal to the $\e$ ``tube face'': for any $\Psi'$, if
\begin{enumerate}
\item \coftype[\Psi']{M}{\bool},
\item \coftype[\Psi',y]{N^\e}{\bool}, and
\item
\ceqtm[\Psi']{\dsubst{N^\e}{r}{y}}{M}{\bool},
\end{enumerate}
then $\ceqtm[\Psi']{\hcomgeneric{\e}{\bool}}{\dsubst{N^\e}{r'}{y}}{\bool}$.
This term always steps to a tube face, so we appeal to head expansion: for all
$\msubsts{\Psi''}{\psi}{\Psi'}$, 
\[
\hcom{\e}{\bool}{\td{r}{\psi}}{\td{r'}{\psi}}{\td{M}{\psi}}{y.\td{N^0}{\psi},y.\td{N^1}{\psi}}
\steps \dsubst{\td{N^\e}{\psi}}{\td{r'}{\psi}}{y}
= \td{\dsubst{N^\e}{r'}{y}}{\psi}
\]
and $\coftype[\Psi']{\dsubst{N^\e}{r'}{y}}{\bool}$, so
$\ceqtm[\Psi']{\hcomgeneric{\e}{\bool}}{\dsubst{N^\e}{r'}{y}}{\bool}$.

The fourth Kan condition asserts that one can coerce across the
type $\bool$: for any $\Psi'$, if
$\ceqtm[\Psi']{M}{N}{\bool}$, then
$\ceqtm[\Psi']{{\coe{x.\bool}{r}{r'}{M}}}{{\coe{x.\bool}{r}{r'}{N}}}{\bool}$.
But for any $\msubsts{\Psi''}{\psi}{\Psi'}$,
$\coe{x.\bool}{\td{r}{\psi}}{\td{r'}{\psi}}{\td{M}{\psi}}
\steps \td{M}{\psi}$ so by head expansion,
$\coe{x.\bool}{r}{r'}{M} \eq M
\eq N \eq \coe{x.\bool}{r}{r'}{N}$.

\paragraph{Cubical}
Show for any $\Psi'$ and $\vinper[\Psi']{M}{N}{\bool}$,
$\ceqtm[\Psi']{M}{N}{\bool}$.

We consider each case of $\vinper[\Psi']{M}{N}{\bool}$.
For $\true$ and $\false$, this follows from the introduction rules already
proven. For $\hcomgeneric{x}{\bool}$, this follows from the first Kan condition
of $\bool$, again already proven.

\subsection{Circle}

Our definition of $\C$ is very similar to that of $\bool$, because we defined
$\bool$ as a higher inductive type (with no path constructors). We omit proofs
that proceed identically to those for $\bool$.

We define the relation $\vinper[-]{-}{-}{\C}$ as the least relation closed
under:
\begin{enumerate}
\item $\vinper{\base}{\base}{\C}$,
\item $\vinper[\Psi,x]{\lp{x}}{\lp{x}}{\C}$, and
\item $\vinper[\Psi,x]{\hcomgeneric{x}{\C}}{\hcom{x}{\C}{r}{r'}{O}{y.P^0,y.P^1}}{\C}$ whenever $r\neq r'$,
\begin{enumerate}
\item \ceqtm[\Psi,x]{M}{O}{\C},
\item \ceqtm[\Psi,y]{\dsubst{N^\e}{\e}{x}}{\dsubst{P^\e}{\e}{x}}{\C} for
$\e=0,1$, and
\item \ceqtm[\Psi]{\dsubst{\dsubst{N^\e}{r}{y}}{\e}{x}}{\dsubst{M}{\e}{x}}{\C}
for $\e=0,1$.
\end{enumerate}
\end{enumerate}

\paragraph{Pretype}
$\cpretype{\C}$.

For all $\psi_1,\psi_2$,
$\td{\C}{\psi_1}\evals\C$, 
$\td{\C}{\psi_2}\evals\C$, 
$\td{\C}{\psi_1\psi_2}\evals\C$, 
and $\per[\Psi_2]{\C}$.

\paragraph{Introduction}
$\coftype{\base}{\C}$, $\coftype{\lp{r}}{\C}$, and $\ceqtm{\lp{\e}}{\base}{\C}$.

\begin{enumerate}
\item
For all $\msubsts{\Psi_1}{\psi_1}{\Psi}$
and $\msubsts{\Psi_2}{\psi_2}{\Psi_1}$,
$\td{\base}{\psi_1}\evals\base$,
$\td{\base}{\psi_2}\evals\base$,
$\td{\base}{\psi_1\psi_2}\evals\base$,
and $\vinper[\Psi_2]{\base}{\base}{\C}$.

\item
For all $\msubsts{\Psi_1}{\psi_1}{\Psi}$
and $\msubsts{\Psi_2}{\psi_2}{\Psi_1}$,
we case on $\td{r}{\psi_1}$ and $\td{r}{\psi_1\psi_2}$:
\begin{enumerate}
\item $\td{r}{\psi_1} = \e$. (Therefore $\td{r}{\psi_1\psi_2} = \e$ also.)

Then $\lp{\e}\evals\base$,
$\td{\base}{\psi_2}\evals\base$,
$\lp{\e}\evals\base$,
and $\vinper[\Psi_2]{\base}{\base}{\C}$.

\item $\td{r}{\psi_1} = x$ and $\td{x}{\psi_2} = \e$.

Then $\lp{x}\evals\lp{x}$,
$\lp{\e}\evals\base$,
$\lp{\e}\evals\base$,
and $\vinper[\Psi_2]{\base}{\base}{\C}$.

\item $\td{r}{\psi_1} = x$ and $\td{x}{\psi_2} = x'$.
(Therefore $\Psi_2 = (\Psi_2',x')$.)

Then $\lp{x}\evals\lp{x}$,
$\lp{x'}\evals\lp{x'}$,
$\lp{x'}\evals\lp{x'}$,
and $\vinper[\Psi_2',x']{\lp{x'}}{\lp{x'}}{\C}$.
\end{enumerate}

\item By head expansion and the first introduction rule, since for all $\psi$,
$\lp{\td{\e}{\psi}}\steps\td{\base}{\psi}$.
\end{enumerate}

\paragraph{Elimination}
If $\ceqtm{M}{N}{\C}$,
$\cwftype{A}$,
$\coftype{P}{A}$,
$\coftype[\Psi,x]{L}{A}$,
and $\ceqtm{\dsubst{L}{\e}{x}}{P}{A}$ for $\e=0,1$,
then $\ceqtm{\Celim{A}{M}{P}{x.L}}{\Celim{A}{N}{P}{x.L}}{A}$.

We use essentially the same proof as for the elimination rule for booleans;
see \cref{lem:if-ceqtm,lem:if-vinper} for full details.

\begin{lemma}\label{lem:Celim-ceqtm}
If $\ceqtm{M}{N}{\C}$, $\cwftype{A}$, $\coftype{P}{A}$,
$\coftype[\Psi,x]{L}{A}$, $\ceqtm{\dsubst{L}{\e}{x}}{P}{A}$ for $\e=0,1$,
and $\Celimsym$ is coherent on values for these parameters, then
$\ceqtm{\Celim{A}{M}{P}{x.L}}{\Celim{A}{N}{P}{x.L}}{A}$.
\end{lemma}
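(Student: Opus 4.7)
The plan is to adapt the proof of \cref{lem:if-ceqtm} with minimal modifications, treating $\lp{w}$ cases analogously to $\true/\false$ and the $\hcomsym$-over-$\C$ case analogously to the $\hcomsym$-over-$\bool$ case. As before, I would prove only the unary version (if $\coftype{M}{\C}$ then $\coftype{\Celim{A}{M}{P}{x.L}}{A}$) explicitly; the binary case follows by repeating the argument simultaneously for $M$ and $N$. Setting $E = \Celim{A}{M}{P}{x.L}$, for each $\msubsts{\Psi_1}{\psi_1}{\Psi}$ and $\msubsts{\Psi_2}{\psi_2}{\Psi_1}$, I would extract $\td{M}{\psi_1}\evals M_1$ from $\coftype{M}{\C}$, reduce $\td{E}{\psi_1}$ to $\Celim{\td{A}{\psi_1}}{M_1}{\td{P}{\psi_1}}{x.\td{L}{\psi_1}}$ by the operational semantics, and then case analyze on $\vinper[\Psi_1]{M_1}{M_1}{\C}$.

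In the $\base$ case, $\td{E}{\psi_1}$ steps further to $\td{P}{\psi_1}$, and since $\vinper[\Psi_2]{M_2}{M_{12}}{\C}$ forces $M_2 = M_{12} = \base$, the term $\td{E}{\psi_1\psi_2}$ also steps to $\td{P}{\psi_1\psi_2}$, so the required coherence follows directly from $\coftype{P}{A}$. In the loop case $M_1 = \lp{w}$ (where $w$ is a dimension name in $\Psi_1$), $\td{E}{\psi_1}$ steps to $\dsubst{\td{L}{\psi_1}}{w}{x}$. Depending on $\td{w}{\psi_2}$, either $M_2 = \base$ (and hence $M_{12} = \base$), in which case we invoke $\ceqtm{\dsubst{L}{\e}{x}}{P}{A}$ to relate $\dsubst{\td{L}{\psi_1\psi_2}}{\e}{x}$ to $\td{P}{\psi_1\psi_2}$; or $\td{w}{\psi_2}$ is a dimension name $w'$ and $M_2 = \lp{w'}$, in which case we use that $\td{\dsubst{L}{w}{x}}{\psi_1\psi_2} = \dsubst{\td{L}{\psi_1\psi_2}}{w'}{x}$ and conclude from $\coftype[\Psi,x]{L}{A}$.

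The main work is the $\hcomsym$ case $M_1 = \hcom{x}{\C}{r}{r'}{M'}{y.N^0,y.N^1}$. Here $\td{E}{\psi_1}$ reduces to a homogeneous composition $H$ in $\td{A}{\psi_1}$ whose cap and tube sides are recursive applications of $\Celimsym$. I would first establish $\coftype[\Psi_1]{H}{\td{A}{\psi_1}}$ by the first Kan condition of $A$, with its typing and adjacency premises discharged by the binary inductive hypothesis applied to the premises extracted from $\vinper[\Psi_1,x]{M_1}{M_1}{\C}$ (the binary version is needed for the adjacency equations, exactly as in \cref{lem:if-ceqtm}). Then, paralleling \cref{lem:if-ceqtm}, I would case on how $\td{M_1}{\psi_2}$ steps: to a tube face when $\td{x}{\psi_2} = \e$ (concluded via the third Kan condition of $A$), to its cap when $\td{x}{\psi_2}$ remains a dimension name but $\td{r}{\psi_2} = \td{r'}{\psi_2}$ (the second Kan condition of $A$), or to a value when $\td{x}{\psi_2}$ is a dimension name and $\td{r}{\psi_2} \neq \td{r'}{\psi_2}$ (applying the binary inductive hypothesis to each subterm of $M_{12}$ and then the first Kan condition of $A$). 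Both the tube-face and cap subcases require the coherent-on-values assumption about $\Celimsym$ to bridge from $\inper[\Psi_2]{-}{-}{\C}$ of the underlying circle terms to $\inper[\Psi_2]{-}{-}{A_{12}}$ of their $\Celimsym$-images.

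The main obstacle is, as in the boolean version, the bookkeeping in the $\hcomsym$ case, where one must carefully match the three possible evaluation outcomes of $\td{M_1}{\psi_2}$ against the composite $H$ after applying $\psi_2$. The loop case adds only mild new bookkeeping, because $\lp{-}$ carries no Kan substructure of its own; in particular, no new semantic ingredient beyond those used in \cref{lem:if-ceqtm} is needed.
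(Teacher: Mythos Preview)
Your proposal is correct and follows essentially the same approach as the paper's proof, which explicitly defers the $\hcomsym$ and $\base$ cases to the argument of \cref{lem:if-ceqtm} and only works out the $\lp{y}$ case in detail, splitting on whether $\td{y}{\psi_2}$ is a constant $\e$ (using $\ceqtm{\dsubst{L}{\e}{x}}{P}{A}$) or a name (using $\coftype[\Psi,x]{L}{A}$), exactly as you outline. One tiny notational slip: in the loop subcase you wrote $\td{\dsubst{L}{w}{x}}{\psi_1\psi_2}$ where you mean $\td{(\dsubst{\td{L}{\psi_1}}{w}{x})}{\psi_2}$, since $w$ lives in $\Psi_1$; the substance of the argument is unaffected.
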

\begin{proof}
Here we work through the proof for the unary case.
Let $E = \Celim{A}{M}{P}{x.L}$. We need to show that $E$ has confluent aspects
for all $\msubsts{\Psi_1}{\psi_1}{\Psi}$ and
$\msubsts{\Psi_2}{\psi_2}{\Psi_1}$.
We know
$\Celim{\td{A}{\psi_1}}{\td{M}{\psi_1}}{\td{P}{\psi_1}}{x.\td{L}{\psi_1}}
\steps^* \Celim{\td{A}{\psi_1}}{M_1}{\td{P}{\psi_1}}{x.\td{L}{\psi_1}}$
where $\td{M}{\psi_1} \evals M_1$. This term's next step depends on $M_1$, which
we determine by induction on $\vinper[\Psi_1]{M_1}{M_1}{\C}$. 
The $\hcomsym$ case is identical to that of \cref{lem:if-ceqtm}; the $\base$
case follows the pattern of the $\true$ case. Hence we consider only the
$\lp{y}$ case.

Then $M_1 = \lp{y}$, and 
$\Celim{\td{A}{\psi_1}}{M_1}{\td{P}{\psi_1}}{x.\td{L}{\psi_1}} \steps
\dsubst{\td{L}{\psi_1}}{y}{x}$.
By $\coftype[\Psi,x]{L}{A}$,
$\dsubst{\td{L}{\psi_1}}{y}{x} \evals L_1$ and $\td{L_1}{\psi_2} \evals L_2$,
and thus $E_1 = L_1$ and $E_2 = L_2$. To determine $E_{12}$ we case on
$\td{y}{\psi_2}$:
\begin{enumerate}
\item If $\td{y}{\psi_2} = \e$ then
$M_{12} = \base$, and
$\Celim{\td{A}{\psi_1\psi_2}}{\base}{\td{P}{\psi_1\psi_2}}{x.\td{L}{\psi_1\psi_2}}
\steps \td{P}{\psi_1\psi_2}$.
We obtain
$\inper[\Psi_2]{\td{P}{\psi_1\psi_2}}{\td{L_1}{\psi_2}}{A_{12}}$ because by
$\coftype[\Psi,x]{L}{A}$ we know
$\inper[\Psi_2]
{\td{L_1}{\psi_2}}
{\td{\dsubst{\td{L}{\psi_1}}{y}{x}}{\psi_2} = 
\td{\dsubst{L}{\e}{x}}{\psi_1\psi_2}}{A_{12}}$, and by
$\ceqtm{\dsubst{L}{\e}{x}}{P}{A}$ we know
$\inper[\Psi_2]{\td{P}{\psi_1\psi_2}}
{\td{\dsubst{L}{\e}{x}}{\psi_1\psi_2}}{A_{12}}$.

\item If $\td{y}{\psi_2} = y'$ then
$M_{12} = \lp{y'}$, and
$\Celim{\td{A}{\psi_1\psi_2}}{\lp{y'}}{\td{P}{\psi_1\psi_2}}{x.\td{L}{\psi_1\psi_2}}
\steps \dsubst{\td{L}{\psi_1\psi_2}}{y'}{x}$.
By $\coftype[\Psi,x]{L}{A}$ we know
$\inper[\Psi_2]
{\td{L_1}{\psi_2}}
{\td{\dsubst{\td{L}{\psi_1}}{y}{x}}{\psi_2} = 
\dsubst{\td{L}{\psi_1\psi_2}}{y'}{x}}{A_{12}}$ as needed.
\qedhere
\end{enumerate}
\end{proof}

\begin{lemma}\label{lem:Celim-vinper}
If $\vinper{M}{N}{\C}$, $\cwftype{A}$, 
$\coftype{P}{A}$,
$\coftype[\Psi,x]{L}{A}$,
and $\ceqtm{\dsubst{L}{\e}{x}}{P}{A}$ for $\e=0,1$, then
$\inper{\Celim{A}{M}{P}{x.L}}{\Celim{A}{N}{P}{x.L}}{A_0}$ where $A\evals A_0$.
\end{lemma}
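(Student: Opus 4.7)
The plan is to mirror the structure of \cref{lem:if-vinper}: I induct on the derivation of $\vinper{M}{N}{\C}$, whose three generating clauses yield witnesses of the forms $\base$, $\lp{y}$ (for some $y \in \Psi$), and $\hcomsym$, and in each case analyze the weak head reduction of $\Celim{A}{M}{P}{x.L}$ and $\Celim{A}{N}{P}{x.L}$ to exhibit a common value inhabiting the PER of $A_0$.

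The $\base$ clause is immediate: both eliminations step to $P$, which by $\coftype{P}{A}$ evaluates to some $P_0$ with $\vinper{P_0}{P_0}{A_0}$. For $\lp{y}$, both sides step to $\dsubst{L}{y}{x}$; applying the dimension substitution sending $x$ to $y$ (and fixing $\Psi$) to the hypothesis $\coftype[\Psi,x]{L}{A}$ yields $\coftype{\dsubst{L}{y}{x}}{A}$, using that $A$ is homogeneous in $x$ (since $\cwftype{A}$ presupposes $\wftm{A}$, so $x \notin \fd{A}$). Hence $\dsubst{L}{y}{x}$ evaluates to a value in the PER of $A_0$.

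The substantive case is the $\hcomsym$ clause, which I handle exactly as in the corresponding case of \cref{lem:if-vinper}. Both $\Celim$ terms weak-head-reduce, by the $\Celim$-on-$\hcomsym$-on-$\C$ rule (whose side condition $r \neq r'$ is part of the $\hcomsym$ clause), to $\hcomsym$-in-$A$ terms whose cap and tube sides are themselves $\Celim$s applied to the cap and tube sides of the original $\hcomsym$s. To conclude $\inper{-}{-}{A_0}$ between the two resulting $\hcomsym$s it suffices to prove $\ceqtm{-}{-}{A}$ between them, which I obtain from the first Kan condition of $A$ (available because $\cwftype{A}$); the required cap, tube, and adjacency equations all follow from \cref{lem:Celim-ceqtm} applied to the compatibility and adjacency hypotheses extracted from the $\hcomsym$ clause of $\vinper{M}{N}{\C}$.

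The one delicate point, familiar from the boolean development, is that each invocation of \cref{lem:Celim-ceqtm} in the $\hcomsym$ case requires $\Celimsym$ to be coherent on values at its structurally smaller $\vinper$-witnesses, and this coherence is exactly what \cref{lem:Celim-vinper} is proving. The induction hypothesis of the present lemma supplies the missing coherence precisely where \cref{lem:Celim-ceqtm} needs it; once the lemma is established, coherence on values holds unconditionally and \cref{lem:Celim-ceqtm} then yields the unconditional elimination rule for $\C$.
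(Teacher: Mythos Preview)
Your proposal is correct and follows essentially the same approach as the paper: induction on $\vinper{M}{N}{\C}$, with the $\base$ and $\lp{y}$ cases handled directly by the operational semantics and the typing hypotheses, and the $\hcomsym$ case reduced to the first Kan condition of $A$ via \cref{lem:Celim-ceqtm}, whose coherence-on-values premise is supplied by the induction hypothesis. Your write-up is in fact more explicit than the paper's, which simply refers back to the boolean development for the $\base$ and $\hcomsym$ cases.
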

\begin{proof}
By induction on $\vinper{M}{N}{\C}$. The $\hcomsym$ case is identical to that
of \cref{lem:if-vinper} and requires \cref{lem:Celim-ceqtm}; the $\base$ case
follows the pattern of the $\true$ case. We consider only the $\lp{y}$ case.

If $M = N = \lp{y}$ then
$\Celim{A}{\lp{y}}{P}{x.L} \steps \dsubst{L}{y}{x} \evals L_0$, and
$\vinper{L_0}{L_0}{A_0}$.
\end{proof}

Finally, the elimination rule for $\C$ follows directly from
\cref{lem:Celim-ceqtm,lem:Celim-vinper}, because \cref{lem:Celim-vinper} implies
that $\Celimsym$ is always coherent on values.

\paragraph{Computation}
If $\cwftype{A}$,
$\coftype{P}{A}$,
$\coftype[\Psi,x]{L}{A}$,
and $\ceqtm{\dsubst{L}{\e}{x}}{P}{A}$ for $\e=0,1$,
then $\ceqtm{\Celim{A}{\base}{P}{x.L}}{P}{A}$
and $\ceqtm{\Celim{A}{\lp{r}}{P}{x.L}}{\dsubst{L}{r}{x}}{A}$.

For all $\psi$, 
$\Celim{\td{A}{\psi}}{\base}{\td{P}{\psi}}{x.\td{L}{\psi}} \steps \td{P}{\psi}$,
so the first computation rule follows from head expansion and $\ceqtm{P}{P}{A}$.

The second computation rule requires a case analysis of how 
$E := \Celim{A}{\lp{r}}{P}{x.L}$ and $\lp{r}$
evaluate under $\msubsts{\Psi_1}{\psi_1}{\Psi}$ and
$\msubsts{\Psi_2}{\psi_2}{\Psi_1}$.
Notice that $\td{E}{\psi_1} \steps^* \Celim{A}{M_1}{P}{x.L}$
where $\lp{\td{r}{\psi_1}}\evals M_1$. Hence we case on $\td{r}{\psi_1}$:
\begin{enumerate}
\item $\td{r}{\psi_1} = \e$.
Then $M_1 = \base$, so $\td{E}{\psi_1} \steps^* \td{P}{\psi_1} \evals P_1$ and
$\td{P_1}{\psi_2}\evals P_2$ by $\coftype{P}{A}$.
But $\inper[\Psi_2]{\td{M_1}{\psi_2} = \base}{\td{M}{\psi_1\psi_2}}{\C}$, so
$\td{M}{\psi_1\psi_2} \evals \base$ and $\td{E}{\psi_1\psi_2} \steps^*
\td{P}{\psi_1\psi_2} \evals P_{12}$ where
$\vinper[\Psi_2]{P_{12}}{P_2}{A_{12}}$.

\item $\td{r}{\psi_1} = w$.
Then $M_1 = \lp{w}$, so $\td{E}{\psi_1} \steps^* \dsubst{\td{L}{\psi_1}}{w}{x}
\evals L_1$.
We know $\inper[\Psi_2]{\td{L_1}{\psi_2}}
{\td{\dsubst{\td{L}{\psi_1}}{w}{x}}{\psi_2}}{A_{12}}$, and want to show
$\inper[\Psi_2]{\td{L_1}{\psi_2}}
{\td{E}{\psi_1\psi_2}}{A_{12}}$.
We proceed by casing on $\td{r}{\psi_1\psi_2}$:
\begin{enumerate}
\item $\td{r}{\psi_1\psi_2} = \e$.
Then by $\inper[\Psi_2]{\lp{\td{w}{\psi_2}}}{\td{M}{\psi_1\psi_2}}{\C}$,
$\td{M}{\psi_1\psi_2}\evals\base$ and 
$\td{E}{\psi_1\psi_2} \steps^* \td{P}{\psi_1\psi_2}$.
But $\td{\dsubst{\td{L}{\psi_1}}{w}{x}}{\psi_2}
= \td{\dsubst{L}{\e}{x}}{\psi_1\psi_2}$ and
$\inper[\Psi_2]{\td{\dsubst{L}{\e}{x}}{\psi_1\psi_2}}{\td{P}{\psi_1\psi_2}}{A_{12}}$
by the hypothesis $\ceqtm{\dsubst{L}{\e}{x}}{P}{A}$, so the result follows by
transitivity.

\item $\td{r}{\psi_1\psi_2} = w'$.
By $\inper[\Psi_2]{\lp{\td{w}{\psi_2}}}{\td{M}{\psi_1\psi_2}}{\C}$,
$\td{M}{\psi_1\psi_2}\evals\lp{w'}$ and 
$\td{E}{\psi_1\psi_2} \steps^* \dsubst{\td{L}{\psi_1\psi_2}}{w'}{x}$.
But $\td{\dsubst{\td{L}{\psi_1}}{w}{x}}{\psi_2}
= \dsubst{\td{L}{\psi_1\psi_2}}{w'}{x}$
so the result again follows by transitivity.
\end{enumerate}
\end{enumerate}

\paragraph{Kan}
Show $\cpretype{\C}$ is Kan.

This proof is identical to the proof that $\cpretype{\bool}$ is Kan, because the
relevant portions of the operational semantics and the definition of
$\vinper[-]{-}{-}{\C}$ are identical.

\paragraph{Cubical}
Show for any $\Psi'$ and $\vinper[\Psi']{M}{N}{\C}$,
$\ceqtm[\Psi']{M}{N}{\C}$.

We consider each case of $\vinper[\Psi']{M}{N}{\C}$.
For $\base$ and $\lp{x}$, this follows from the introduction rules already
proven. For $\hcomgeneric{x}{\C}$, this follows from the first Kan condition for
$\C$, again already proven.

\subsection{Products}

When $\cwftype{A}$ and $\cwftype{B}$ we define $\per{\prd{A}{B}}$ as follows:
\[ \vinper{\pair{M}{N}}{\pair{M'}{N'}}{\prd{A}{B}} \]
when $\ceqtm{M}{M'}{A}$ and $\ceqtm{N}{N'}{B}$.

\paragraph{Pretype}
If $\cwftype{A}$ and $\cwftype{B}$ then $\cpretype{\prd{A}{B}}$.

For any $\msubsts{\Psi_1}{\psi_1}{\Psi}$ and $\msubsts{\Psi_2}{\psi_2}{\Psi_1}$,
$\wfval[\Psi_1]{\prd{\td{A}{\psi_1}}{\td{B}{\psi_1}}}$ and
$\wfval[\Psi_2]{\prd{\td{A}{\psi_1\psi_2}}{\td{B}{\psi_1\psi_2}}}$. Since
$\cwftype[\Psi_2]{\td{A}{\psi_1\psi_2}}$ and
$\cwftype[\Psi_2]{\td{B}{\psi_1\psi_2}}$, we have
$\per[\Psi_2]{\prd{\td{A}{\psi_1\psi_2}}{\td{B}{\psi_1\psi_2}}}$.

\paragraph{Introduction}
If $\cwftype{A}$, $\cwftype{B}$,
$\ceqtm{M}{M'}{A}$, and $\ceqtm{N}{N'}{B}$, then
$\ceqtm{\pair{M}{N}}{\pair{M'}{N'}}{\prd{A}{B}}$.

Since for any $\msubst{\psi}$,
$\wfval[\Psi']{\pair{\td{M}{\psi}}{\td{N}{\psi}}}$, 
each side has coherent aspects up to syntactic equality. 
Thus it suffices to show
$\vinper[\Psi_2]
{\pair{\td{M}{\psi_1\psi_2}}{\td{N}{\psi_1\psi_2}}}
{\pair{\td{M'}{\psi_1\psi_2}}{\td{N'}{\psi_1\psi_2}}}
{\prd{\td{A}{\psi_1\psi_2}}{\td{B}{\psi_1\psi_2}}}$.
But this is true because
$\ceqtm[\Psi_2]{\td{M}{\psi_1\psi_2}}{\td{M'}{\psi_1\psi_2}}
{\td{A}{\psi_1\psi_2}}$ and similarly for $N$.

\paragraph{Elimination}
If $\cwftype{A}$, $\cwftype{B}$, and
$\ceqtm{P}{P'}{\prd{A}{B}}$, then
$\ceqtm{\fst{P}}{\fst{P'}}{A}$ and
$\ceqtm{\snd{P}}{\snd{P'}}{B}$.

For any $\msubsts{\Psi_1}{\psi_1}{\Psi}$ and $\msubsts{\Psi_2}{\psi_2}{\Psi_1}$,
we know $\td{P}{\psi_1}\evals P_1$ and
$\vinper[\Psi_1]{P_1}{P_1}{\prd{\td{A}{\psi_1}}{\td{B}{\psi_1}}}$, so
$P_1 = \pair{M}{N}$ where
$\coftype[\Psi_1]{M}{\td{A}{\psi_1}}$ and
$\coftype[\Psi_1]{N}{\td{B}{\psi_1}}$.
Thus $\fst{\td{P}{\psi_1}}\steps^* \fst{P_1} = \fst{\pair{M}{N}} \steps M \evals
M_1$ and $\td{M_1}{\psi_2} \evals M_2$, where
$\inper[\Psi_2]{M_2}{\td{M}{\psi_2}}{A_{12}}$.

We also know
$\td{P}{\psi_1\psi_2}\evals P_{12}$ where
$\inper[\Psi_2]
{\td{P_1}{\psi_2} = \pair{\td{M}{\psi_2}}{\td{N}{\psi_2}}}
{P_{12}}
{\prd{\td{A}{\psi_1\psi_2}}{\td{B}{\psi_1\psi_2}}}$, so
$P_{12} = \pair{O}{Q}$ where
$\ceqtm[\Psi_2]{\td{M}{\psi_2}}{O}{\td{A}{\psi_1\psi_2}}$ and
$\ceqtm[\Psi_2]{\td{N}{\psi_2}}{Q}{\td{B}{\psi_1\psi_2}}$.
Then $\fst{\td{P}{\psi_1\psi_2}}\steps^* \fst{P_{12}}
= \fst{\pair{O}{Q}} \steps O$, and we want to show
$\inper[\Psi_2]{M_2}{O}{A_{12}}$.
It suffices to show $\inper[\Psi_2]{\td{M}{\psi_2}}{O}{A_{12}}$,
which follows directly from the above equality.

By a similar argument, $\fst{P'}$ also has coherent aspects. That the aspects of
$\fst{P}$ and $\fst{P'}$ are themselves $\vinper[\Psi_2]{-}{-}{A_{12}}$
follows from
$\ceqtm[\Psi_1]{M}{M'}{\td{A}{\psi_1}}$ where
$\td{P'}{\psi_1} \evals \pair{M'}{N'}$ and 
$\ceqtm[\Psi_2]{O}{O'}{\td{A}{\psi_1\psi_2}}$ where
$\td{P'}{\psi_1\psi_2} \evals \pair{O'}{Q'}$.
The argument for $\snd{-}$ is analogous.

\paragraph{Computation}
If $\cwftype{A}$, $\cwftype{B}$,
$\coftype{M}{A}$, and $\coftype{N}{B}$, then
$\ceqtm{\fst{\pair{M}{N}}}{M}{A}$ and
$\ceqtm{\snd{\pair{M}{N}}}{N}{B}$.

These follow by head expansion, since $\ceqtm{M}{M}{A}$ and for all $\psi$,
$\fst{\pair{\td{M}{\psi}}{\td{N}{\psi}}} \steps \td{M}{\psi}$, and the same for
$N$.

\paragraph{Eta}
If $\cwftype{A}$, $\cwftype{B}$, and
$\coftype{P}{\prd{A}{B}}$, then
$\ceqtm{P}{\pair{\fst{P}}{\snd{P}}}{\prd{A}{B}}$.

By the elimination and introduction rules for products, we already know that
$\coftype{\pair{\fst{P}}{\snd{P}}}{\prd{A}{B}}$. Thus by
\cref{lem:coftype-ceqtm} it suffices to show that for any $\msubst{\psi}$, 
$\inper[\Psi']{\td{P}{\psi}}{\pair{\fst{\td{P}{\psi}}}{\snd{\td{P}{\psi}}}}
{\prd{\td{A}{\psi}}{\td{B}{\psi}}}$.
By $\coftype{P}{\prd{A}{B}}$, we know that $\td{P}{\psi}\evals\pair{M}{N}$ where
$\coftype[\Psi']{M}{\td{A}{\psi}}$ and $\coftype[\Psi']{N}{\td{B}{\psi}}$.
Therefore we must show
$\vinper[\Psi']{\pair{M}{N}}{\pair{\fst{\td{P}{\psi}}}{\snd{\td{P}{\psi}}}}
{\prd{\td{A}{\psi}}{\td{B}{\psi}}}$, which requires showing that
$\ceqtm[\Psi']{M}{\fst{\td{P}{\psi}}}{\td{A}{\psi}}$ and
$\ceqtm[\Psi']{N}{\snd{\td{P}{\psi}}}{\td{B}{\psi}}$.

Again by \cref{lem:coftype-ceqtm}, it suffices to show that for any
$\msubsts{\Psi''}{\psi'}{\Psi'}$, 
$\inper[\Psi'']{\td{M}{\psi'}}{\fst{\td{P}{\psi\psi'}}}{A'}$ and
$\inper[\Psi'']{\td{N}{\psi'}}{\snd{\td{P}{\psi\psi'}}}{B'}$
where $\td{A}{\psi\psi'}\evals A'$ and $\td{B}{\psi\psi'}\evals B'$.
By $\coftype{P}{\prd{A}{B}}$, we know that
$\td{P}{\psi\psi'}\evals\pair{M'}{N'}$; by coherence of aspects,
$\inper[\Psi'']{\pair{M'}{N'}}{\pair{\td{M}{\psi'}}{\td{N}{\psi'}}}
{\prd{\td{A}{\psi\psi'}}{\td{B}{\psi\psi'}}}$, and thus
$\ceqtm[\Psi'']{M'}{\td{M}{\psi'}}{\td{A}{\psi\psi'}}$ and
$\ceqtm[\Psi'']{N'}{\td{N}{\psi'}}{\td{B}{\psi\psi'}}$.
But then 
$\fst{\td{P}{\psi\psi'}} \steps^* \fst{\pair{M'}{N'}} \steps M'$ and
$\snd{\td{P}{\psi\psi'}} \steps^* N'$, and the relations
$\inper[\Psi'']{\td{M}{\psi'}}{M'}{A'}$ and
$\inper[\Psi'']{\td{N}{\psi'}}{N'}{B'}$
follow from the corresponding $\eq$ equalities.

\paragraph{Kan}
If $\cwftype{A}$ and $\cwftype{B}$, then $\cpretype{\prd{A}{B}}$ is Kan.

The first Kan condition asserts that for any
$\msubsts{(\Psi',x)}{\psi}{\Psi}$, if
\begin{enumerate}
\item $\ceqtm[\Psi',x]{M}{O}{\prd{\td{A}{\psi}}{\td{B}{\psi}}}$,
\item $\ceqtm[\Psi',y]{\dsubst{N^\e}{\e}{x}}{\dsubst{P^\e}{\e}{x}}
{\prd{\dsubst{\td{A}{\psi}}{\e}{x}}{\dsubst{\td{B}{\psi}}{\e}{x}}}$ for
$\e=0,1$, and
\item $\ceqtm[\Psi']{\dsubst{\dsubst{N^\e}{r}{y}}{\e}{x}}{\dsubst{M}{\e}{x}}
{\prd{\dsubst{\td{A}{\psi}}{\e}{x}}{\dsubst{\td{B}{\psi}}{\e}{x}}}$
for $\e=0,1$,
\end{enumerate}
then $\ceqtm[\Psi',x]
{\hcomgeneric{x}{\prd{\td{A}{\psi}}{\td{B}{\psi}}}}
{\hcom{x}{\prd{\td{A}{\psi}}{\td{B}{\psi}}}{r}{r'}{O}{y.P^0,y.P^1}}
{\prd{\td{A}{\psi}}{\td{B}{\psi}}}$.

By head expansion on both sides, it suffices to show that
\begin{gather*}
\pair
{\hcom{x}{\td{A}{\psi}}{r}{r'}{\fst{M}}{y.\fst{N^0},y.\fst{N^1}}}
{\hcom{x}{\td{B}{\psi}}{r}{r'}{\snd{M}}{y.\snd{N^0},y.\snd{N^1}}},
\\
\pair
{\hcom{x}{\td{A}{\psi}}{r}{r'}{\fst{O}}{y.\fst{P^0},y.\fst{P^1}}}
{\hcom{x}{\td{B}{\psi}}{r}{r'}{\snd{O}}{y.\snd{P^0},y.\snd{P^1}}}
\end{gather*}
are $\ceqtm[\Psi',x]{-}{-}{\prd{\td{A}{\psi}}{\td{B}{\psi}}}$.
By the introduction rule for products, it suffices to show that the components
of these pairs are $\ceqtm[\Psi',x]{-}{-}{\td{A}{\psi}}$ and
$\ceqtm[\Psi',x]{-}{-}{\td{B}{\psi}}$ respectively.
But these follow from the first Kan conditions of $\cwftype{A}$ and
$\cwftype{B}$, with the elimination rules for products applied to the hypotheses
of this Kan condition (using transitivity of $\eq$ to get the adjacency
condition for $O,P^\e$).

The second Kan condition asserts that for any
$\msubsts{(\Psi',x)}{\psi}{\Psi}$, if
\begin{enumerate}
\item $\coftype[\Psi',x]{M}{\prd{\td{A}{\psi}}{\td{B}{\psi}}}$,
\item $\coftype[\Psi',y]{\dsubst{N^\e}{\e}{x}}
{\prd{\dsubst{\td{A}{\psi}}{\e}{x}}{\dsubst{\td{B}{\psi}}{\e}{x}}}$ for
$\e=0,1$, and
\item $\ceqtm[\Psi']{\dsubst{\dsubst{N^\e}{r}{y}}{\e}{x}}{\dsubst{M}{\e}{x}}
{\prd{\dsubst{\td{A}{\psi}}{\e}{x}}{\dsubst{\td{B}{\psi}}{\e}{x}}}$
for $\e=0,1$,
\end{enumerate}
then $\ceqtm[\Psi',x]{\hcom{x}{\prd{\td{A}{\psi}}{\td{B}{\psi}}}{r}{r}{M}{y.N^0,y.N^1}}{M}{\prd{\td{A}{\psi}}{\td{B}{\psi}}}$.

By head expansion, it suffices to show
\begin{gather*}
\pair{\hcom{x}{\td{A}{\psi}}{r}{r}{\fst{M}}{y.\fst{N^0},y.\fst{N^1}}}
      {\hcom{x}{\td{B}{\psi}}{r}{r}{\snd{M}}{y.\snd{N^0},y.\snd{N^1}}} \\
\ceqtm[\Psi',x]{}{M}{\prd{\td{A}{\psi}}{\td{B}{\psi}}}
\end{gather*}
By the introduction and elimination rules for products and the second 
Kan conditions of $\cwftype{A}$ and $\cwftype{B}$, the pair above is
$\ceqtm[\Psi',x]{-}{\pair{\fst{M}}{\snd{M}}}{\prd{\td{A}{\psi}}{\td{B}{\psi}}}$.
The result follows from the eta rule for products.

The third Kan condition asserts that for any $\msubst{\psi}$, if
\begin{enumerate}
\item \coftype[\Psi']{M}{\prd{\td{A}{\psi}}{\td{B}{\psi}}},
\item \coftype[\Psi',y]{N^\e}{\prd{\td{A}{\psi}}{\td{B}{\psi}}}, and
\item
\ceqtm[\Psi']{\dsubst{N^\e}{r}{y}}{M}{\prd{\td{A}{\psi}}{\td{B}{\psi}}},
\end{enumerate}
then $\ceqtm[\Psi']{\hcomgeneric{\e}{\prd{\td{A}{\psi}}{\td{B}{\psi}}}}{\dsubst{N^\e}{r'}{y}}{\prd{\td{A}{\psi}}{\td{B}{\psi}}}$.

The proof is the same as for the second Kan condition, above, appealing instead
to the third Kan conditions of $\cwftype{A}$ and $\cwftype{B}$.

The fourth Kan condition asserts that for any
$\msubsts{(\Psi',x)}{\psi}{\Psi}$, if
$\ceqtm[\Psi']{M}{N}
{\prd{\dsubst{\td{A}{\psi}}{r}{x}}{\dsubst{\td{B}{\psi}}{r}{x}}}$, then
$\ceqtm[\Psi']{{\coe{x.\prd{\td{A}{\psi}}{\td{B}{\psi}}}{r}{r'}{M}}}{{\coe{x.\prd{\td{A}{\psi}}{\td{B}{\psi}}}{r}{r'}{N}}}
{\prd{\dsubst{\td{A}{\psi}}{r'}{x}}{\dsubst{\td{B}{\psi}}{r'}{x}}}$.

By head expansion on both sides, it suffices to show
\begin{gather*}
\pair{\coe{x.\td{A}{\psi}}{r}{r'}{\fst{M}}}
     {\coe{x.\td{B}{\psi}}{r}{r'}{\snd{M}}} \\
\ceqtm[\Psi']{}
{\pair{\coe{x.\td{A}{\psi}}{r}{r'}{\fst{N}}}
      {\coe{x.\td{B}{\psi}}{r}{r'}{\snd{N}}}}
{\prd{\dsubst{\td{A}{\psi}}{r'}{x}}{\dsubst{\td{B}{\psi}}{r'}{x}}}
\end{gather*}
By the introduction rule, it suffices to show
the components of these pairs are 
$\ceqtm[\Psi']{-}{-}{\dsubst{\td{A}{\psi}}{r'}{x}}$ and
$\ceqtm[\Psi']{-}{-}{\dsubst{\td{B}{\psi}}{r'}{x}}$ respectively.
But these follow from the elimination rule for products and the 
fourth Kan conditions of $\cwftype{A}$ and $\cwftype{B}$.

\paragraph{Cubical}
If $\cwftype{A}$, $\cwftype{B}$, $\msubst{\psi}$, and
$\vinper[\Psi']{P}{P'}{\prd{\td{A}{\psi}}{\td{B}{\psi}}}$, then
$\ceqtm[\Psi']{P}{P'}{\prd{\td{A}{\psi}}{\td{B}{\psi}}}$.

Then $P = \pair{M}{N}$, $P' = \pair{M'}{N'}$,
$\ceqtm[\Psi']{M}{M'}{A}$ and $\ceqtm[\Psi']{N}{N'}{B}$,
and the result follows from the introduction rule for products.

\subsection{Functions}

When $\cwftype{A}$ and $\cwftype{B}$ we define $\per{\arr{A}{B}}$ as follows:
\[ \vinper{\lam{a}{M}}{\lam{a}{M'}}{\arr{A}{B}} \]
when $\eqtm{\oft{a}{A}}{M}{M'}{B}$.

\paragraph{Pretype}
If $\cwftype{A}$ and $\cwftype{B}$ then $\cpretype{\arr{A}{B}}$.

For any $\msubsts{\Psi_1}{\psi_1}{\Psi}$ and $\msubsts{\Psi_2}{\psi_2}{\Psi_1}$,
$\wfval[\Psi_1]{\arr{\td{A}{\psi_1}}{\td{B}{\psi_1}}}$ and
$\wfval[\Psi_2]{\arr{\td{A}{\psi_1\psi_2}}{\td{B}{\psi_1\psi_2}}}$. Since
$\cwftype[\Psi_2]{\td{A}{\psi_1\psi_2}}$ and
$\cwftype[\Psi_2]{\td{B}{\psi_1\psi_2}}$, we have
$\per[\Psi_2]{\arr{\td{A}{\psi_1\psi_2}}{\td{B}{\psi_1\psi_2}}}$.

\paragraph{Introduction}
If $\cwftype{A}$, $\cwftype{B}$, and $\eqtm{\oft aA}{M}{M'}{B}$, then
$\ceqtm{\lam{a}{M}}{\lam{a}{M'}}{\arr{A}{B}}$.

Each side has coherent aspects up to syntactic equality, since
$\wfval[\Psi']{\lam{a}{\td{M}{\psi}}}$ for all $\msubst{\psi}$.
Thus it suffices to show
$\vinper[\Psi_2]
{\lam{a}{\td{M}{\psi_1\psi_2}}}
{\lam{a}{\td{M'}{\psi_1\psi_2}}}
{\arr{\td{A}{\psi_1\psi_2}}{\td{B}{\psi_1\psi_2}}}$,
which holds because
$\eqtm[\Psi_2]{\oft{a}{\td{A}{\psi_1\psi_2}}}
{\td{M}{\psi_1\psi_2}}{\td{M'}{\psi_1\psi_2}}{\td{B}{\psi_1\psi_2}}$.

\paragraph{Elimination}
If $\cwftype{A}$, $\cwftype{B}$,
$\ceqtm{F}{F'}{\arr{A}{B}}$, and $\ceqtm{N}{N'}{A}$, then
$\ceqtm{\app{F}{N}}{\app{F'}{N'}}{B}$.

For any $\msubsts{\Psi_1}{\psi_1}{\Psi}$ and $\msubsts{\Psi_2}{\psi_2}{\Psi_1}$,
by $\coftype{F}{\arr{A}{B}}$ we know
$\td{F}{\psi_1}\evals F_1 = \lam{a}{M_1}$ and
$\oftype[\Psi_1]{\oft{a}{\td{A}{\psi}}}{M_1}{\td{B}{\psi}}$.
Thus $\app{\td{F}{\psi_1}}{\td{N}{\psi_1}} \steps^*
\app{\lam{a}{M_1}}{\td{N}{\psi_1}} \steps
\subst{M_1}{\td{N}{\psi_1}}{a}$. Since
$\coftype[\Psi_1]{\td{N}{\psi_1}}{\td{A}{\psi_1}}$, we have
$\coftype[\Psi_1]{\subst{M_1}{\td{N}{\psi_1}}{a}}{\td{B}{\psi_1}}$, so
$\subst{M_1}{\td{N}{\psi_1}}{a} \evals X_1$ and
$\inper[\Psi_2]{\td{X_1}{\psi_2}}
{\subst{\td{M_1}{\psi_2}}{\td{N}{\psi_1\psi_2}}{a}}{B_{12}}$.

We also know
$\td{F}{\psi_1\psi_2}\evals F_{12}$ where
$\inper[\Psi_2]
{\td{F_1}{\psi_2} = \lam{a}{\td{M_1}{\psi_2}}}
{F_{12}}
{\arr{\td{A}{\psi_1\psi_2}}{\td{B}{\psi_1\psi_2}}}$, so
$F_{12} = \lam{a}{M_{12}}$ and
$\eqtm[\Psi_2]{\oft{a}{\td{A}{\psi_1\psi_2}}}
{\td{M_1}{\psi_2}}{M_{12}}{\td{B}{\psi_1\psi_2}}$.
Then $\app{\td{F}{\psi_1\psi_2}}{\td{N}{\psi_1\psi_2}} \steps^*
\app{\lam{a}{M_{12}}}{\td{N}{\psi_1\psi_2}} \steps
\subst{M_{12}}{\td{N}{\psi_1\psi_2}}{a}$.
We want to show
$\inper[\Psi_2]{\td{X_1}{\psi_2}}
{\subst{M_{12}}{\td{N}{\psi_1\psi_2}}{a}}{B_{12}}$;
by the above $\eq$,
$\inper[\Psi_2]
{\subst{\td{M_1}{\psi_2}}{\td{N}{\psi_1\psi_2}}{a}}
{\subst{M_{12}}{\td{N}{\psi_1\psi_2}}{a}}{B_{12}}$,
so the result follows by transitivity.

By a symmetric argument, $\app{F'}{N'}$ also has coherent aspects.
To see that the aspects of $\app{F}{N}$ and $\app{F'}{N'}$ are themselves
$\vinper[\Psi_2]{-}{-}{B_{12}}$, observe that
$\td{F'}{\psi_1\psi_2} \evals \lam{a}{M'_{12}}$ such that
$\eqtm[\Psi_2]{\oft{a}{\td{A}{\psi_1\psi_2}}}
{M_{12}}{M'_{12}}{\td{B}{\psi_1\psi_2}}$, so
$\app{\td{F'}{\psi_1\psi_2}}{\td{N'}{\psi_1\psi_2}} \steps^*
\subst{M'_{12}}{\td{N'}{\psi_1\psi_2}}{a}$ and
$\ceqtm[\Psi_2]
{\subst{M_{12}}{\td{N}{\psi_1\psi_2}}{a}}
{\subst{M'_{12}}{\td{N'}{\psi_1\psi_2}}{a}}
{\td{B}{\psi_1\psi_2}}$.

\paragraph{Computation}
If $\cwftype{A}$, $\cwftype{B}$,
$\oftype{\oft aA}{M}{B}$, and $\coftype{N}{A}$, then
$\ceqtm{\app{\lam{a}{M}}{N}}{\subst{M}{N}{a}}{B}$.

That $\coftype{\subst{M}{N}{a}}{B}$ follows from the definition of
$\oftype{\oft aA}{M}{B}$, and the desired equality follows by head expansion.

\paragraph{Eta}
If $\cwftype{A}$, $\cwftype{B}$, and $\coftype{F}{\arr{A}{B}}$, then
$\ceqtm{F}{\lam{a}{\app{F}{a}}}{\arr{A}{B}}$.

We first prove that the right-hand side has this type, and then apply
\cref{lem:coftype-ceqtm}.

\begin{lemma}
If $\cwftype{A}$, $\cwftype{B}$, and $\coftype{F}{\arr{A}{B}}$, then
$\coftype{\lam{a}{\app{F}{a}}}{\arr{A}{B}}$.
\end{lemma}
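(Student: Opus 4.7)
The plan is to reduce the goal directly to the introduction rule for functions already proved. That rule says $\ceqtm{\lam{a}{M}}{\lam{a}{M'}}{\arr{A}{B}}$ whenever $\cwftype{A}$, $\cwftype{B}$, and $\eqtm{\oft aA}{M}{M'}{B}$. Specializing $M = M' = \app{F}{a}$, it is therefore enough to establish the open judgment $\oftype{\oft aA}{\app{F}{a}}{B}$, that is, $\eqtm{\oft aA}{\app{F}{a}}{\app{F}{a}}{B}$.

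Next, I would unfold this sequent using \cref{def:eqtm}. Since the context has a single hypothesis $\oft aA$ and the preceding context $\G' = \cdot$ is empty, the sequent reduces to the following closed claim: for every $\msubst{\psi}$ and every pair $\ceqtm[\Psi']{N}{N'}{\td{A}{\psi}}$, we must have $\ceqtm[\Psi']{\app{\td{F}{\psi}}{N}}{\app{\td{F}{\psi}}{N'}}{\td{B}{\psi}}$. (Here I am using that $\eqtm[\Psi']{\cdot}{N}{N'}{\td{A}{\psi}}$ is by definition the same as $\ceqtm[\Psi']{N}{N'}{\td{A}{\psi}}$, and that $\subst{(\app{F}{a})}{N}{a} = \app{F}{N}$.)

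For this it suffices to observe, first, that $\ceqtm[\Psi']{\td{F}{\psi}}{\td{F}{\psi}}{\arr{\td{A}{\psi}}{\td{B}{\psi}}}$ by closure of exact equality under dimension substitution applied to the hypothesis $\coftype{F}{\arr{A}{B}}$; and second, that the already-proven elimination rule for functions then immediately yields $\ceqtm[\Psi']{\app{\td{F}{\psi}}{N}}{\app{\td{F}{\psi}}{N'}}{\td{B}{\psi}}$ from that fact together with the hypothesis $\ceqtm[\Psi']{N}{N'}{\td{A}{\psi}}$.

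I do not anticipate any real obstacle here; the argument is essentially bookkeeping, invoking closure of the judgments under dimension substitution and the elimination rule for functions. The only step worth being careful about is the unfolding of the open judgment at context length one, to ensure that the quantification over $\psi$ and over pairs $N,N'$ is handled by exactly the instances provided by the elimination rule under $\td{F}{\psi}$.
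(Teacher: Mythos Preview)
Your proposal is correct and follows exactly the same approach as the paper: reduce to the introduction rule, unfold the open judgment to a closed claim quantified over dimension substitutions and argument pairs, and discharge that claim with the elimination rule (using closure of $\coftype{F}{\arr{A}{B}}$ under dimension substitution). The paper's proof is simply a terser version of what you wrote.
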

\begin{proof}
By the introduction rule for functions, it suffices to show that for any
$\msubst{\psi}$ and $\ceqtm[\Psi']{N}{N'}{\td{A}{\psi}}$, 
$\ceqtm[\Psi']{\app{\td{F}{\psi}}{N}}{\app{\td{F}{\psi}}{N'}}{\arr{A}{B}}$. But
this follows from the elimination rule.
\end{proof}

By \cref{lem:coftype-ceqtm}, we must show that for any $\msubst{\psi}$,
$\inper[\Psi']{\td{F}{\psi}}{\lam{a}{\app{\td{F}{\psi}}{a}}}
{\arr{\td{A}{\psi}}{\td{B}{\psi}}}$.
We know $\td{F}{\psi} \evals \lam{a}{M}$ and
$\oftype[\Psi']{\oft{a}{\td{A}{\psi}}}{M}{\td{B}{\psi}}$, and must show
$\eqtm[\Psi']{\oft{a}{\td{A}{\psi}}}{M}{\app{\td{F}{\psi}}{a}}{\td{B}{\psi}}$.
That is, for any $\msubsts{\Psi''}{\psi'}{\Psi'}$ and
$\ceqtm[\Psi'']{N}{N'}{\td{A}{\psi\psi'}}$, 
$\ceqtm[\Psi'']{\subst{\td{M}{\psi'}}{N}{a}}
{\app{\td{F}{\psi\psi'}}{N'}}{\td{B}{\psi\psi'}}$.
We know both sides have this type, so again by \cref{lem:coftype-ceqtm},
it suffices to show that for any $\msubsts{\Psi'''}{\psi''}{\Psi''}$,
$\inper[\Psi''']{\subst{\td{M}{\psi'\psi''}}{\td{N}{\psi''}}{a}}
{\app{\td{F}{\psi\psi'\psi''}}{\td{N'}{\psi''}}}{B''}$
where $\td{B}{\psi\psi'\psi''}\evals B''$.

By $\coftype{F}{\arr{A}{B}}$, 
$\td{F}{\psi'\psi''}\evals \lam{a}{M''}$ and
$\vinper[\Psi''']{\lam{a}{\td{M}{\psi'\psi''}}}{\lam{a}{M''}}
{\arr{\td{A}{\psi\psi'\psi''}}{\td{B}{\psi\psi'\psi''}}}$ such that
$\eqtm[\Psi''']{\oft{a}{\td{A}{\psi\psi'\psi''}}}{\td{M}{\psi'\psi''}}{M''}
{\td{B}{\psi\psi'\psi''}}$.
Then $\app{\td{F}{\psi\psi'\psi''}}{\td{N'}{\psi''}} \steps^*
\app{\lam{a}{M''}}{\td{N'}{\psi''}} \steps
\subst{M''}{\td{N'}{\psi''}}{a}$, and from the above $\eq$ we deduce that
$\inper[\Psi''']{\subst{\td{M}{\psi'\psi''}}{\td{N}{\psi''}}{a}}
{\subst{M''}{\td{N'}{\psi''}}{a}}{B''}$.

\paragraph{Kan}
If $\cwftype{A}$ and $\cwftype{B}$, then $\cpretype{\arr{A}{B}}$ is Kan.

The first Kan condition asserts that for any
$\msubsts{(\Psi',x)}{\psi}{\Psi}$, if
\begin{enumerate}
\item $\ceqtm[\Psi',x]{M}{O}{\arr{\td{A}{\psi}}{\td{B}{\psi}}}$,
\item $\ceqtm[\Psi',y]{\dsubst{N^\e}{\e}{x}}{\dsubst{P^\e}{\e}{x}}
{\arr{\dsubst{\td{A}{\psi}}{\e}{x}}{\dsubst{\td{B}{\psi}}{\e}{x}}}$ for
$\e=0,1$, and
\item $\ceqtm[\Psi']{\dsubst{\dsubst{N^\e}{r}{y}}{\e}{x}}{\dsubst{M}{\e}{x}}
{\arr{\dsubst{\td{A}{\psi}}{\e}{x}}{\dsubst{\td{B}{\psi}}{\e}{x}}}$
for $\e=0,1$,
\end{enumerate}
then $\ceqtm[\Psi',x]
{\hcomgeneric{x}{\arr{\td{A}{\psi}}{\td{B}{\psi}}}}
{\hcom{x}{\arr{\td{A}{\psi}}{\td{B}{\psi}}}{r}{r'}{O}{y.P^0,y.P^1}}
{\arr{\td{A}{\psi}}{\td{B}{\psi}}}$.

By head expansion on both sides, it suffices to show that
\[\begin{aligned}
& \lam{a}{\hcom{x}{\td{B}{\psi}}{r}{r'}{\app{M}{a}}
{y.\app{N^0}{a},y.\app{N^1}{a}}} \\
\ceqtmtab[\Psi',x]{}
{\lam{a}{\hcom{x}{\td{B}{\psi}}{r}{r'}{\app{O}{a}}
{y.\app{P^0}{a},y.\app{P^1}{a}}}}
{\arr{\td{A}{\psi}}{\td{B}{\psi}}}
\end{aligned}\]
By the introduction rule for functions, it suffices to show
$\eqtm[\Psi',x]{\oft{a}{\td{A}{\psi}}}{-}{-}{\td{B}{\psi}}$
for these lambdas' bodies. That is, for any
$\msubsts{\Psi''}{\psi'}{(\Psi',x)}$ and
$\ceqtm[\Psi'']{Q}{Q'}{\td{A}{\psi\psi'}}$,
\[\begin{aligned}
& \hcom{\td{x}{\psi'}}{\td{B}{\psi\psi'}}{\td{r}{\psi'}}{\td{r'}{\psi'}}
{\app{\td{M}{\psi'}}{Q}}
{y.\app{\td{N^0}{\psi'}}{Q},y.\app{\td{N^1}{\psi'}}{Q}} \\
\ceqtmtab[\Psi'']{}
{\hcom{\td{x}{\psi'}}{\td{B}{\psi\psi'}}{\td{r}{\psi'}}{\td{r'}{\psi'}}
 {\app{\td{O}{\psi'}}{Q'}}
 {y.\app{\td{P^0}{\psi'}}{Q'},y.\app{\td{P^1}{\psi'}}{Q'}}}
{\td{B}{\psi\psi'}}
\end{aligned}\]

If $\td{x}{\psi'} = x'$ then $\Psi'' = (\Psi''',x')$ and the result follows from
the elimination rule for functions and the first Kan condition of $\cwftype{B}$.
Note that $Q$ might contain $x'$ and $\td{N^\e}{\psi'}$ might not make type
sense on arguments containing $x'$, because
$\coftype[\Psi''',x']{Q}{\td{A}{\psi\psi'}}$ and
$\coftype[\Psi''']{\dsubst{\td{N^\e}{\psi'}}{\e}{x'}}
{\arr{\dsubst{\td{A}{\psi\psi'}}{\e}{x'}}{\dsubst{\td{B}{\psi\psi'}}{\e}{x'}}}$.
But we only need
$\coftype[\Psi''']
{\app{\dsubst{\td{N^\e}{\psi'}}{\e}{x'}}{\dsubst{Q}{\e}{x'}}}
{\dsubst{\td{B}{\psi\psi'}}{\e}{x'}}$,
which follows from the elimination rule.

If $\td{x}{\psi'} = \e$ then by the elimination rule for functions,
the third Kan condition of $\cwftype{B}$,
and transitivity of $\eq$, it suffices to show
$\ceqtm[\Psi'']
{\app{\dsubst{\td{N^\e}{\psi'}}{\td{r'}{\psi'}}{y}}{Q}}
{\app{\dsubst{\td{P^\e}{\psi'}}{\td{r'}{\psi'}}{y}}{Q'}}
{\td{B}{\psi\psi'}}$,
which follows from our second hypothesis and the elimination rule for functions.

The second Kan condition asserts that for any
$\msubsts{(\Psi',x)}{\psi}{\Psi}$, if
\begin{enumerate}
\item $\coftype[\Psi',x]{M}{\arr{\td{A}{\psi}}{\td{B}{\psi}}}$,
\item $\coftype[\Psi',y]{\dsubst{N^\e}{\e}{x}}
{\arr{\dsubst{\td{A}{\psi}}{\e}{x}}{\dsubst{\td{B}{\psi}}{\e}{x}}}$ for
$\e=0,1$, and
\item $\ceqtm[\Psi']{\dsubst{\dsubst{N^\e}{r}{y}}{\e}{x}}{\dsubst{M}{\e}{x}}
{\arr{\dsubst{\td{A}{\psi}}{\e}{x}}{\dsubst{\td{B}{\psi}}{\e}{x}}}$
for $\e=0,1$,
\end{enumerate}
then $\ceqtm[\Psi',x]{\hcom{x}{\arr{\td{A}{\psi}}{\td{B}{\psi}}}{r}{r}{M}{y.N^0,y.N^1}}{M}{\arr{\td{A}{\psi}}{\td{B}{\psi}}}$.

By head expansion, it suffices to show
\[
\ceqtm[\Psi',x]
{\lam{a}{\hcom{x}{\td{B}{\psi}}{r}{r}
{\app{M}{a}}{y.\app{N^0}{a},y.\app{N^1}{a}}}}
{M}{\arr{\td{A}{\psi}}{\td{B}{\psi}}}
\]
By the eta and introduction rules for functions, it suffices to show that for
any $\msubsts{\Psi''}{\psi'}{\Psi'}$ and
$\ceqtm[\Psi'']{O}{O'}{\td{A}{\psi\psi'}}$,
\[
\ceqtm[\Psi'']
{\hcom{\td{x}{\psi'}}{\td{B}{\psi\psi'}}{\td{r}{\psi'}}{\td{r}{\psi'}}
{\app{\td{M}{\psi'}}{O}}{y.\app{\td{N^0}{\psi'}}{O},y.\app{\td{N^1}{\psi'}}{O}}}
{\app{\td{M}{\psi'}}{O'}}{\td{B}{\psi\psi'}}
\]
If $\td{x}{\psi'} = x'$ then this follows from the second Kan
condition of $\cwftype{B}$ and the elimination rule for functions.
If $\td{x}{\psi'} = \e$ then by the third Kan condition of $\cwftype{B}$, 
$\ceqtm[\Psi'']
{\hcomsym}
{\app{\dsubst{\td{N^\e}{\psi'}}{\td{r}{\psi'}}{y}}{O}}
{\td{B}{\psi\psi'}}$.
The result follows from the elimination rule and the fact that 
$\ceqtm[\Psi'']{\td{\dsubst{N^\e}{r}{y}}{\psi'}}{\td{M}{\psi'}}
{\arr{\td{A}{\psi\psi'}}{\td{B}{\psi\psi'}}}$.

The third Kan condition asserts that for any $\msubst{\psi}$, if
\begin{enumerate}
\item \coftype[\Psi']{M}{\arr{\td{A}{\psi}}{\td{B}{\psi}}},
\item \coftype[\Psi',y]{N^\e}{\arr{\td{A}{\psi}}{\td{B}{\psi}}}, and
\item
\ceqtm[\Psi']{\dsubst{N^\e}{r}{y}}{M}{\arr{\td{A}{\psi}}{\td{B}{\psi}}},
\end{enumerate}
then $\ceqtm[\Psi']{\hcomgeneric{\e}{\arr{\td{A}{\psi}}{\td{B}{\psi}}}}{\dsubst{N^\e}{r'}{y}}{\arr{\td{A}{\psi}}{\td{B}{\psi}}}$.

Again, by head expansion and the eta and introduction rules for functions, it
suffices to show that for any $\msubsts{\Psi''}{\psi'}{\Psi'}$ and
$\ceqtm[\Psi'']{O}{O'}{\td{A}{\psi\psi'}}$,
\[\begin{aligned}
&\hcom{\e}{\td{B}{\psi\psi'}}{\td{r}{\psi'}}{\td{r'}{\psi'}}
{\app{\td{M}{\psi'}}{O}}{y.\app{\td{N^0}{\psi'}}{O},y.\app{\td{N^1}{\psi'}}{O}}
\\
\ceqtmtab[\Psi'']{}
{\app{\td{\dsubst{N^\e}{r'}{y}}{\psi'}}{O'}}{\td{B}{\psi\psi'}}
\end{aligned}\]
This follows from the third Kan condition of $\cwftype{B}$ and the
elimination rule for functions.

The fourth Kan condition asserts that for any
$\msubsts{(\Psi',x)}{\psi}{\Psi}$, if
$\ceqtm[\Psi']{F}{F'}
{\arr{\dsubst{\td{A}{\psi}}{r}{x}}{\dsubst{\td{B}{\psi}}{r}{x}}}$, then
$\ceqtm[\Psi']{{\coe{x.\arr{\td{A}{\psi}}{\td{B}{\psi}}}{r}{r'}{F}}}{{\coe{x.\arr{\td{A}{\psi}}{\td{B}{\psi}}}{r}{r'}{F'}}}
{\arr{\dsubst{\td{A}{\psi}}{r'}{x}}{\dsubst{\td{B}{\psi}}{r'}{x}}}$.

By head expansion on both sides and the introduction rule for functions, it
suffices to show that for any $\msubsts{\Psi''}{\psi'}{\Psi'}$ and
$\ceqtm[\Psi'']{N}{N'}{\td{\dsubst{\td{A}{\psi}}{r'}{x}}{\psi'}}$,
\[
\ceqtm[\Psi']
{\coe{x.\td{B}{\psi\psi'}}{\td{r}{\psi'}}{\td{r'}{\psi'}}
 {\app{\td{F}{\psi'}}{\coe{x.\td{A}{\psi\psi'}}{\td{r'}{\psi'}}{\td{r}{\psi'}}{N}}}}
{\coe{x.\td{B}{\psi\psi'}}{\td{r}{\psi'}}{\td{r'}{\psi'}}
 {\app{\td{F'}{\psi'}}{\coe{x.\td{A}{\psi\psi'}}{\td{r'}{\psi'}}{\td{r}{\psi'}}{N'}}}}
{\td{\dsubst{\td{B}{\psi}}{r'}{x}}{\psi'}}
\]
This follows from the fourth Kan condition of $\cwftype{B}$ and the
elimination rule for functions.

\paragraph{Cubical}
If $\cwftype{A}$, $\cwftype{B}$, $\msubst{\psi}$, and
$\vinper[\Psi']{F}{F'}{\arr{\td{A}{\psi}}{\td{B}{\psi}}}$, then
$\ceqtm[\Psi']{F}{F'}{\arr{\td{A}{\psi}}{\td{B}{\psi}}}$.

Then $F = \lam{a}{M}$, $F' = \lam{a}{M'}$, and
$\eqtm[\Psi']{\oft{a}{\td{A}{\psi}}}{M}{M'}{\td{B}{\psi}}$,
and the result follows from the introduction rule for functions.

\subsection{Not}

We define $\per[\Psi,x]{\notb{x}}$ as follows:
\[ \vinper[\Psi,x]{\notel{x}{M}}{\notel{x}{N}}{\notb{x}} \]
when $\ceqtm[\Psi,x]{M}{N}{\bool}$.

This type is somewhat unusual because it exists primarily to be coerced along
($\coe{x.\notb{x}}{r}{r'}{M}$), rather than to be introduced or eliminated in
the manner of function and product types. Accordingly, the bulk of this section
is dedicated to proving that $\cpretype[\Psi,x]{\notb{x}}$ is Kan.

The results in this section depend heavily on the following lemmas:
\begin{lemma}
If $\coftype{M}{\bool}$ then $\ceqtm{\notf{\notf{M}}}{M}{\bool}$.
\end{lemma}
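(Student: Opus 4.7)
The strategy is to invoke \cref{lem:coftype-ceqtm}. First, $\coftype{\notf{\notf{M}}}{\bool}$ follows from two applications of the elimination rule for $\bool$ (already proven) with $A=\bool$, $T=\false$, $F=\true$, starting from $\coftype{M}{\bool}$. Combined with the hypothesis, this reduces the goal to establishing $\inper[\Psi']{\td{\notf{\notf{M}}}{\psi}}{\td{M}{\psi}}{\bool}$ for every $\msubst{\psi}$.

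Fix $\psi$; by $\coftype{M}{\bool}$, we have $\td{M}{\psi}\evals M_0$ with $\vinper[\Psi']{M_0}{M_0}{\bool}$, and I proceed by induction on the derivation of this inductively-defined relation. When $M_0=\true$, the chain $\td{\notf{\notf{M}}}{\psi}\steps^*\ifb{\bool}{\ifb{\bool}{\true}{\false}{\true}}{\false}{\true}\steps^*\ifb{\bool}{\false}{\false}{\true}\steps\true$ shows both sides evaluate to $\true$; the $\false$ case is symmetric.

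The main case is $M_0=\hcom{x}{\bool}{r}{r'}{M'}{y.N^0,y.N^1}$ with $r\neq r'$. Since $M_0$ is a value and the intermediate $\ifsym$-redex itself produces an $\hcomsym$ value, the $\ifsym$-on-$\hcomsym$ reduction rule fires twice, yielding
\[
\td{\notf{\notf{M}}}{\psi}\evals V' := \hcom{x}{\bool}{r}{r'}{\notf{\notf{M'}}}{y.\notf{\notf{N^0}},y.\notf{\notf{N^1}}}.
\]
To conclude $\vinper[\Psi']{M_0}{V'}{\bool}$ via the $\hcomsym$ clause of the PER, I must verify $\ceqtm[\Psi',x]{M'}{\notf{\notf{M'}}}{\bool}$, the tube equation $\ceqtm[\Psi',y]{\dsubst{N^\e}{\e}{x}}{\dsubst{\notf{\notf{N^\e}}}{\e}{x}}{\bool}$ for $\e=0,1$, and the adjacency condition. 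The adjacency condition $\ceqtm[\Psi']{\dsubst{\dsubst{N^\e}{r}{y}}{\e}{x}}{\dsubst{M'}{\e}{x}}{\bool}$ is already a premise of the original $\vinper$-derivation for $M_0$. The first two equations follow from the inductive hypothesis applied to the typing premises $\coftype[\Psi',x]{M'}{\bool}$ and $\coftype[\Psi',y]{\dsubst{N^\e}{\e}{x}}{\bool}$ of the PER-derivation, using that $\notf{\notf{-}}$ commutes with dimension substitution (since it binds no dimension names).

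The principal obstacle is simply the bookkeeping: confirming that the twice-expanded $\ifsym$ produces the displayed $\hcomsym$ with components $\notf{\notf{-}}$ of the originals, and justifying the well-founded recursion on the $\vinper$-derivation (whose $\hcomsym$ clause embeds $\ceqtm$ premises on strictly subcomponents). The algebraic content is otherwise minimal, as the PER's own adjacency premise carries the only nontrivial side-condition through unchanged.
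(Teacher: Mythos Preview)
Your proof is correct and follows essentially the same route as the paper: reduce via \cref{lem:coftype-ceqtm} to an $\inper$ goal at each $\psi$, then induct on the $\vinper$-derivation of the evaluated term, with the $\hcomsym$ case discharged by the inductive hypothesis on the cap and tube premises. The only cosmetic difference is that in the $\hcomsym$ case you invoke the PER's $\hcomsym$ clause directly (so the adjacency condition is literally the original premise), whereas the paper phrases it via the already-established first Kan condition of $\bool$ and obtains adjacency for the $\notf{\notf{-}}$ side by two applications of the boolean elimination rule; both orientations are valid since $\ceqtm$ is symmetric.
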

\begin{proof}
Recalling that $\notf{M}$ is notation for $\ifb{\bool}{M}{\false}{\true}$,
we conclude from the introduction and elimination rules for booleans that
$\coftype{\notf{\notf{M}}}{\bool}$. Therefore each side has coherent aspects,
and it suffices to show that
$\inper[\Psi_2]{\notf{\notf{\td{M}{\psi_1\psi_2}}}}{\td{M}{\psi_1\psi_2}}
{\bool}$. We case on
$\inper[\Psi_2]{\td{M}{\psi_1\psi_2}}{\td{M}{\psi_1\psi_2}}{\bool}$:
\begin{enumerate}
\item $\td{M}{\psi_1\psi_2} \evals \true$.

Then $\notf{\notf{\td{M}{\psi_1\psi_2}}} \steps^*
\notf{\notf{\true}} \steps \notf{\false} \steps \true$, and
$\vinper[\Psi_2]{\true}{\true}{\bool}$.

\item $\td{M}{\psi_1\psi_2} \evals \false$.

Then $\notf{\notf{\td{M}{\psi_1\psi_2}}} \steps^*
\notf{\notf{\false}} \steps \notf{\true} \steps \false$, and
$\vinper[\Psi_2]{\false}{\false}{\bool}$.

\item $\td{M}{\psi_1\psi_2} \evals \hcom{x}{\bool}{r}{r'}{M'}{y.N^0,y.N^1}$
where $\Psi_2 = (\Psi',x)$, $r\neq r'$,
$\coftype[\Psi',x]{M'}{\bool}$,
$\coftype[\Psi',y]{\dsubst{N^\e}{\e}{x}}{\bool}$ for $\e=0,1$, and
$\ceqtm[\Psi']{\dsubst{\dsubst{N^\e}{r}{y}}{\e}{x}}{\dsubst{M'}{\e}{x}}{\bool}$
for $\e=0,1$.

Then
\[\begin{aligned}
\notf{\notf{\td{M}{\psi_1\psi_2}}} &\steps^*
\notf{\hcom{x}{\bool}{r}{r'}{\notf{M'}}{y.\notf{N^0},y.\notf{N^1}}}
\\ &\steps\hphantom{{}^*}
\hcom{x}{\bool}{r}{r'}{\notf{\notf{M'}}}
{y.\notf{\notf{N^0}},y.\notf{\notf{N^1}}}
\end{aligned}\]
which by the first Kan condition of $\bool$ is
$\inper[\Psi',x]{-}{\td{M}{\psi_1\psi_2}}{\bool}$ when
$\ceqtm[\Psi',x]{M'}{\notf{\notf{M'}}}{\bool}$ and
$\ceqtm[\Psi',y]{\dsubst{N^\e}{\e}{x}}
{\notf{\notf{\dsubst{N^\e}{\e}{x}}}}{\bool}$ for $\e=0,1$.
(The adjacency conditions for this $\hcomsym$ follow from the elimination rule
for booleans.) But these $\eq$ follow from the inductive hypothesis.
\qedhere
\end{enumerate}
\end{proof}

\begin{lemma}\label{lem:inper-not-swap}
If $\coftype{M}{\bool}$, $\coftype{N}{\bool}$, and for all $\msubst{\psi}$,
$\inper[\Psi']{\notf{\td{M}{\psi}}}{\td{N}{\psi}}{\bool}$, then
$\ceqtm{M}{\notf{N}}{\bool}$ (and in particular, $\inper{M}{\notf{N}}{\bool}$).
\end{lemma}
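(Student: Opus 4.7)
The plan is to leverage the immediately preceding double-negation lemma together with the already-established elimination rule for $\bool$, and in particular the fact that $\notf{-}$ is a congruence with respect to exact equality on $\bool$. First I would observe that $\coftype{\notf{M}}{\bool}$ and $\coftype{\notf{N}}{\bool}$ follow from the elimination rule applied to $\coftype{M}{\bool}$ and $\coftype{N}{\bool}$ respectively, using $\cwftype{\bool}$, $\coftype{\false}{\bool}$, and $\coftype{\true}{\bool}$ as the other premises.

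The key step is to convert the pointwise $\inpersym$ hypothesis into an exact equality. Applying \cref{lem:coftype-ceqtm} to $\coftype{\notf{M}}{\bool}$, $\coftype{N}{\bool}$, and the hypothesis $\inper[\Psi']{\notf{\td{M}{\psi}}}{\td{N}{\psi}}{\bool}$ for every $\msubst{\psi}$ yields $\ceqtm{\notf{M}}{N}{\bool}$. Feeding this back into the elimination rule for $\bool$ (with $T=\false$, $F=\true$) gives $\ceqtm{\notf{\notf{M}}}{\notf{N}}{\bool}$. Combining this with $\ceqtm{\notf{\notf{M}}}{M}{\bool}$ from the preceding lemma, symmetry and transitivity of $\eq$ conclude $\ceqtm{M}{\notf{N}}{\bool}$; the parenthetical $\inper{M}{\notf{N}}{\bool}$ then follows by unfolding.

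I do not expect a real obstacle. The crucial insight is simply to reapply $\notf$ to the hypothesis and cancel the resulting double negation via the preceding lemma; all remaining work is bookkeeping to verify the presuppositions of \cref{lem:coftype-ceqtm} and of the elimination rule for $\bool$, each of which has been independently established earlier in the section.
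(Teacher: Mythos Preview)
Your proposal is correct and follows essentially the same route as the paper's proof: apply \cref{lem:coftype-ceqtm} to obtain $\ceqtm{\notf{M}}{N}{\bool}$, push $\notf{-}$ through via the elimination rule to get $\ceqtm{\notf{\notf{M}}}{\notf{N}}{\bool}$, and then cancel the double negation with the preceding lemma. The paper's version is simply more terse, leaving the verification of presuppositions implicit.
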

\begin{proof}
By \cref{lem:coftype-ceqtm}, $\ceqtm{\notf{M}}{N}{\bool}$. Then
$\ceqtm{\notf{\notf{M}}}{\notf{N}}{\bool}$, so $\ceqtm{M}{\notf{N}}{\bool}$.
\end{proof}

\paragraph{Pretype}
$\cpretype[\Psi,x]{\notb{x}}$ and $\ceqpretype{\notb{\e}}{\bool}$.

For the first part, there are three cases to consider.
For any $\msubsts{\Psi_1}{\psi_1}{\Psi}$ and $\msubsts{\Psi_2}{\psi_2}{\Psi_1}$,
\begin{enumerate}
\item If $\td{x}{\psi_1} = \e$ then
$\notb{\td{x}{\psi_1}}\evals\bool$,
$\td{\bool}{\psi_2}\evals\bool$, and
$\notb{\td{x}{\psi_1\psi_2}}\evals\bool$;

\item If $\td{x}{\psi_1} = x'$ and $\td{x'}{\psi_2} = \e$ then
$\notb{\td{x}{\psi_1}}\evals\notb{x'}$,
$\notb{\td{x'}{\psi_2}}\evals\bool$, and
$\notb{\td{x}{\psi_1\psi_2}}\evals\bool$; and

\item If $\td{x}{\psi_1} = x'$ and $\td{x'}{\psi_2} = x''$ then
$\notb{\td{x}{\psi_1}}\evals\notb{x'}$,
$\notb{\td{x'}{\psi_2}}\evals\notb{x''}$, and
$\notb{\td{x}{\psi_1\psi_2}}\evals\notb{x''}$.
\end{enumerate}
But $\per[\Psi_2]{\bool}$ and $\per[\Psi_2',x'']{\notb{x''}}$ where $\Psi_2 =
(\Psi_2',x'')$.

For the second part, $\notb{\td{\e}{\psi}} \evals \bool$ and
$\per[\Psi']{\bool}$ for any $\msubst{\psi}$.

\paragraph{Introduction}
If $\ceqtm{M}{N}{\bool}$, then $\ceqtm{\notel{r}{M}}{\notel{r}{N}}{\notb{r}}$.

Let $\msubsts{\Psi_1}{\psi_1}{\Psi}$ and $\msubsts{\Psi_2}{\psi_2}{\Psi_1}$.
\begin{enumerate}
\item If $\td{r}{\psi_1} = 0$ then
$\notel{0}{\td{M}{\psi_1}} \steps \notf{\td{M}{\psi_1}}$ and
$\notel{0}{\td{M}{\psi_1\psi_2}} \steps \notf{\td{M}{\psi_1\psi_2}}$. By
$\ceqtm[\Psi_1]{\notf{\td{M}{\psi_1}}}{\notf{\td{N}{\psi_1}}}{\bool}$ we know
$\notf{\td{M}{\psi_1}} \evals M_1'$, $\notf{\td{N}{\psi_1}} \evals N_1'$, and
$\inperfour[\Psi_2]
{\td{M_1'}{\psi_2}}{\td{N_1'}{\psi_2}}
{\notf{\td{M}{\psi_1\psi_2}}}{\notf{\td{N}{\psi_1\psi_2}}}{\bool}$, which is
what we wanted to show.

\item If $\td{r}{\psi_1} = 1$ then
$\notel{1}{\td{M}{\psi_1}} \steps \td{M}{\psi_1}$ and
$\notel{1}{\td{M}{\psi_1\psi_2}} \steps \td{M}{\psi_1\psi_2}$. Our assumption
directly implies $\td{M}{\psi_1} \evals M_1$, $\td{N}{\psi_1} \evals N_1$, and
$\inperfour[\Psi_2]
{\td{M_1}{\psi_2}}{\td{N_1}{\psi_2}}
{\td{M}{\psi_1\psi_2}}{\td{N}{\psi_1\psi_2}}{\bool}$.

\item If $\td{r}{\psi_1} = x$ and $\td{x}{\psi_2} = 0$ then
$\isval{\notel{x}{\td{M}{\psi_1}}}$ and
$\notel{0}{\td{M}{\psi_1\psi_2}} \steps \notf{\td{M}{\psi_1\psi_2}}$.
Then by
$\ceqtm[\Psi_2]{\notf{\td{M}{\psi_1\psi_2}}}{\notf{\td{N}{\psi_1\psi_2}}}{\bool}$
we conclude
$\inperfour[\Psi_2]
{\notf{\td{M}{\psi_1\psi_2}}}{\notf{\td{N}{\psi_1\psi_2}}}
{\notf{\td{M}{\psi_1\psi_2}}}{\notf{\td{N}{\psi_1\psi_2}}}{\bool}$.

\item If $\td{r}{\psi_1} = x$ and $\td{x}{\psi_2} = 1$ then
$\isval{\notel{x}{\td{M}{\psi_1}}}$ and
$\notel{1}{\td{M}{\psi_1\psi_2}} \steps \td{M}{\psi_1\psi_2}$.
By our assumption,
$\inperfour[\Psi_2]
{\td{M}{\psi_1\psi_2}}{\td{N}{\psi_1\psi_2}}
{\td{M}{\psi_1\psi_2}}{\td{N}{\psi_1\psi_2}}{\bool}$.

\item If $\td{r}{\psi_1} = x$ and $\td{x}{\psi_2} = x'$ then
$\isval{\notel{x}{\td{M}{\psi_1}}}$ and
$\isval{\notel{x'}{\td{M}{\psi_1\psi_2}}}$.
By 
$\ceqtm[\Psi_2]{\td{M}{\psi_1\psi_2}}{\td{N}{\psi_1\psi_2}}{\bool}$,
$\vinperfour[\Psi_2]
{\notel{x'}{\td{M}{\psi_1\psi_2}}}{\notel{x'}{\td{N}{\psi_1\psi_2}}}
{\notel{x'}{\td{M}{\psi_1\psi_2}}}{\notel{x'}{\td{N}{\psi_1\psi_2}}}
{\notb{x'}}$.
\end{enumerate}

\paragraph{Computation}
If $\coftype{M}{\bool}$, then
$\ceqtm{\coe{x.\notb{x}}{\e}{\e}{M}}{M}{\bool}$ and
$\ceqtm{\coe{x.\notb{x}}{\e}{\eb}{M}}{\notf{M}}{\bool}$.

These are immediate by head expansion.

\paragraph{Kan}
$\cpretype[\Psi,x]{\notb{x}}$ is Kan.

The operational semantics for $\hcomsym$ at $\notb{x}$ involve $\coesym$, so we
start by proving the fourth Kan condition, which asserts that for
any $\msubsts{(\Psi',x')}{\psi}{(\Psi,x)}$, if
$\ceqtm[\Psi']{M}{N}{\notb{\dsubst{\td{x}{\psi}}{r}{x'}}}$, then
$\ceqtm[\Psi']
{\coe{x'.\notb{\td{x}{\psi}}}{r}{r'}{M}}
{\coe{x'.\notb{\td{x}{\psi}}}{r}{r'}{N}}
{\notb{\dsubst{\td{x}{\psi}}{r'}{x'}}}$.

If $\td{x}{\psi} = \e$ then by head expansion and \cref{lem:ceqpretype-ceqtm}
it suffices to show
$\ceqtm[\Psi']{\coe{x'.\bool}{r}{r'}{M}}{\coe{x'.\bool}{r}{r'}{N}}{\bool}$,
which is the fourth Kan condition of $\bool$.
If $\td{x}{\psi} = y \neq x'$ then by head expansion it suffices to show
$\ceqtm[\Psi']{M}{N}{\notb{y}}$ when $\ceqtm[\Psi']{M}{N}{\notb{y}}$.
Otherwise, $\td{x}{\psi} = x'$, and we must show that if
$\ceqtm[\Psi']{M}{N}{\notb{r}}$ then
$\ceqtm[\Psi']
{\coe{x'.\notb{x'}}{r}{r'}{M}}
{\coe{x'.\notb{x'}}{r}{r'}{N}}
{\notb{r'}}$.
Establishing this requires a large case split; we focus on the unary version
because the binary one follows easily. Let
$\msubsts{\Psi_1}{\psi_1}{\Psi'}$ and $\msubsts{\Psi_2}{\psi_2}{\Psi_1}$.

\begin{enumerate}
\item
If $\td{r}{\psi_1} = \e$ and $\td{r'}{\psi_1} = \e$
then $\coe{x.\notb{x}}{\e}{\e}{\td{M}{\psi_1}} \steps
\td{M}{\psi_1} \evals M_1$,
and $\coe{x.\notb{x}}{\e}{\e}{\td{M}{\psi_1\psi_2}} \steps 
\td{M}{\psi_1\psi_2}$ where
$\inper[\Psi_2]{\td{M_1}{\psi_2}}{\td{M}{\psi_1\psi_2}}{\bool}$.

\item
If $\td{r}{\psi_1} = \e$ and $\td{r'}{\psi_1} = \eb$ then
$\coe{x.\notb{x}}{\e}{\eb}{\td{M}{\psi_1}} \steps
\notf{\td{M}{\psi_1}} \evals X_1$,
and $\coe{x.\notb{x}}{\e}{\eb}{\td{M}{\psi_1\psi_2}} \steps 
\notf{\td{M}{\psi_1\psi_2}}$. 
By \cref{lem:ceqpretype-ceqtm} we know $\coftype[\Psi_1]{\td{M}{\psi_1}}{\bool}$
so $\coftype[\Psi_1]{\notf{\td{M}{\psi_1}}}{\bool}$ and therefore
$\inper[\Psi_2]{\td{X_1}{\psi_2}}{\notf{\td{M}{\psi_1\psi_2}}}{\bool}$.

\item
If $\td{r}{\psi_1} = 1$ and $\td{r'}{\psi_1} = x$ then
$\coe{x.\notb{x}}{1}{x}{\td{M}{\psi_1}} \steps \notel{x}{\td{M}{\psi_1}}$.

\begin{enumerate}
\item
If $\td{x}{\psi_2} = 1$ then
$\notel{1}{\td{M}{\psi_1\psi_2}} \steps \td{M}{\psi_1\psi_2}$ and
$\coe{x.\notb{x}}{1}{1}{\td{M}{\psi_1\psi_2}} \steps \td{M}{\psi_1\psi_2}$,
where $\inper[\Psi_2]{\td{M}{\psi_1\psi_2}}{\td{M}{\psi_1\psi_2}}{\bool}$.

\item
If $\td{x}{\psi_2} = 0$ then
$\notel{0}{\td{M}{\psi_1\psi_2}} \steps \notf{\td{M}{\psi_1\psi_2}}$,
$\coe{x.\notb{x}}{1}{0}{\td{M}{\psi_1\psi_2}} \steps
\notf{\td{M}{\psi_1\psi_2}}$, and the result follows from
$\coftype[\Psi_2]{\td{M}{\psi_1\psi_2}}{\bool}$.

\item
If $\td{x}{\psi_2} = x'$ then
$\isval{\notel{x'}{\td{M}{\psi_1\psi_2}}}$ and
$\coe{x.\notb{x}}{1}{x'}{\td{M}{\psi_1\psi_2}} \steps
\notel{x'}{\td{M}{\psi_1\psi_2}}$, and so
$\vinper[\Psi_2]
{\notel{x'}{\td{M}{\psi_1\psi_2}}}
{\notel{x'}{\td{M}{\psi_1\psi_2}}}
{\notb{x'}}$
because $\coftype[\Psi_2]{\td{M}{\psi_1\psi_2}}{\bool}$.
\end{enumerate}

\item
If $\td{r}{\psi_1} = 0$ and $\td{r'}{\psi_1} = x$ then
$\coe{x.\notb{x}}{0}{x}{\td{M}{\psi_1}} \steps
\notel{x}{\notf{\td{M}{\psi_1}}}$.

\begin{enumerate}
\item
If $\td{x}{\psi_2} = 0$ then
$\notel{0}{\notf{\td{M}{\psi_1\psi_2}}} \steps
\notf{\notf{\td{M}{\psi_1\psi_2}}}$ and
$\coe{x.\notb{x}}{0}{0}{\td{M}{\psi_1\psi_2}} \steps \td{M}{\psi_1\psi_2}$.
By $\coftype[\Psi_2]{\td{M}{\psi_1\psi_2}}{\bool}$ we have
$\ceqtm[\Psi_2]
{\notf{\notf{\td{M}{\psi_1\psi_2}}}}
{\td{M}{\psi_1\psi_2}}{\bool}$ and in particular
$\inper[\Psi_2]
{\notf{\notf{\td{M}{\psi_1\psi_2}}}}
{\td{M}{\psi_1\psi_2}}{\bool}$.

\item
If $\td{x}{\psi_2} = 1$ then
$\notel{1}{\notf{\td{M}{\psi_1\psi_2}}} \steps
\notf{\td{M}{\psi_1\psi_2}}$,
$\coe{x.\notb{x}}{0}{1}{\td{M}{\psi_1\psi_2}} \steps
\notf{\td{M}{\psi_1\psi_2}}$, and the result follows from
$\coftype[\Psi_2]{\notf{\td{M}{\psi_1\psi_2}}}{\bool}$.

\item
If $\td{x}{\psi_2} = x'$ then
$\isval{\notel{x'}{\notf{\td{M}{\psi_1\psi_2}}}}$,
$\coe{x.\notb{x}}{0}{x'}{\td{M}{\psi_1\psi_2}} \steps
\notel{x'}{\notf{\td{M}{\psi_1\psi_2}}}$, and the result follows from
$\coftype[\Psi_2]{\notf{\td{M}{\psi_1\psi_2}}}{\bool}$.
\end{enumerate}

\item
If $\td{r}{\psi_1} = x$ and $\td{r'}{\psi_1} = 1$ then
$\coftype[\Psi_1]{\td{M}{\psi_1}}{\notb{x}}$ and so
$\td{M}{\psi_1} \evals \notel{x}{N}$ where $\coftype[\Psi_1]{N}{\bool}$.
Therefore
$\coe{x.\notb{x}}{x}{1}{\td{M}{\psi_1}} \steps^*
\coe{x.\notb{x}}{x}{1}{\notel{x}{N}} \steps^* N \evals N_0$.

\begin{enumerate}
\item
If $\td{x}{\psi_2} = 1$ then
$\coe{x.\notb{x}}{1}{1}{\td{M}{\psi_1\psi_2}} \steps \td{M}{\psi_1\psi_2}$.
By $\coftype[\Psi_1]{\td{M}{\psi_1}}{\notb{x}}$ we know that
$\inper[\Psi_2]{\notel{1}{\td{N}{\psi_2}} \steps
\td{N}{\psi_2}}{\td{M}{\psi_1\psi_2}}{\bool}$ and by
$\coftype[\Psi_1]{N}{\bool}$ we know
$\inper[\Psi_2]{\td{N}{\psi_2}}{\td{N_0}{\psi_2}}{\bool}$.
We conclude 
$\inper[\Psi_2]{\td{M}{\psi_1\psi_2}}{\td{N_0}{\psi_2}}{\bool}$ as desired.

\item
If $\td{x}{\psi_2} = 0$ then
$\coe{x.\notb{x}}{0}{1}{\td{M}{\psi_1\psi_2}} \steps
\notf{\td{M}{\psi_1\psi_2}}$, and we must show
$\inper[\Psi_2]{\notf{\td{M}{\psi_1\psi_2}}}{\td{N_0}{\psi_2}}{\bool}$.
By $\coftype[\Psi_1]{N}{\bool}$, it suffices to show
$\inper[\Psi_2]{\notf{\td{M}{\psi_1\psi_2}}}{\td{N}{\psi_2}}{\bool}$.
By $\coftype[\Psi_1]{\td{M}{\psi_1}}{\notb{x}}$
we know that for any $\msubsts{\Psi'}{\psi}{\Psi_2}$,
$\inper[\Psi']{\notel{0}{\td{N}{\psi_2\psi}} \steps
\notf{\td{N}{\psi_2\psi}}}{\td{M}{\psi_1\psi_2\psi}}{\bool}$,
and the result follows by \cref{lem:inper-not-swap}.

\item
If $\td{x}{\psi_2} = x'$ then
$\vinper[\Psi_2]
{\td{M}{\psi_1\psi_2} \evals \notel{x'}{N'}}
{\notel{x'}{\td{N}{\psi_2}}}{\notb{x'}}$ where
$\ceqtm[\Psi_2]{N'}{\td{N}{\psi_2}}{\bool}$, and
$\coe{x.\notb{x}}{x'}{1}{\td{M}{\psi_1\psi_2}} \steps^*
\coe{x.\notb{x}}{x'}{1}{\notel{x'}{N'}} \steps^* N'$. Then
$\inper[\Psi_2]{\td{N}{\psi_2}}{\td{N_0}{\psi_2}}{\bool}$ and
$\inper[\Psi_2]{N'}{\td{N}{\psi_2}}{\bool}$ so we have
$\inper[\Psi_2]{N'}{\td{N_0}{\psi_2}}{\bool}$.
\end{enumerate}

\item
If $\td{r}{\psi_1} = x$ and $\td{r'}{\psi_1} = 0$ then
$\coftype[\Psi_1]{\td{M}{\psi_1}}{\notb{x}}$ and so
$\td{M}{\psi_1} \evals \notel{x}{N}$ where $\coftype[\Psi_1]{N}{\bool}$.
Therefore
$\coe{x.\notb{x}}{x}{0}{\td{M}{\psi_1}} \steps^*
\coe{x.\notb{x}}{x}{0}{\notel{x}{N}} \steps^* \notf{N} \evals X_1$.

\begin{enumerate}
\item
If $\td{x}{\psi_2} = 0$ then
$\coe{x.\notb{x}}{0}{0}{\td{M}{\psi_1\psi_2}} \steps \td{M}{\psi_1\psi_2}$.
By $\coftype[\Psi_1]{\td{M}{\psi_1}}{\notb{x}}$ we know that
$\inper[\Psi_2]{\notel{0}{\td{N}{\psi_2}} \steps \notf{\td{N}{\psi_2}}}
{\td{M}{\psi_1\psi_2}}{\bool}$, and by
$\coftype[\Psi_1]{\notf{N}}{\bool}$ we know
$\inper[\Psi_2]{\td{X_1}{\psi_2}}{\notf{\td{N}{\psi_2}}}{\bool}$. Thus
$\inper[\Psi_2]{\td{M}{\psi_1\psi_2}}{\td{X_1}{\psi_2}}{\bool}$.

\item
If $\td{x}{\psi_2} = 1$ then
$\coe{x.\notb{x}}{1}{0}{\td{M}{\psi_1\psi_2}} \steps
\notf{\td{M}{\psi_1\psi_2}}$.
By $\coftype[\Psi_1]{\td{M}{\psi_1}}{\notb{x}}$ we know that for any
$\msubsts{\Psi'}{\psi}{\Psi_2}$,
$\inper[\Psi']{\notel{1}{\td{N}{\psi_2\psi}} \steps \td{N}{\psi_2\psi}}
{\td{M}{\psi_1\psi_2\psi}}{\bool}$, so by \cref{lem:coftype-ceqtm},
$\ceqtm[\Psi_2]{\td{N}{\psi_2}}{\td{M}{\psi_1\psi_2}}{\bool}$, and thus
$\inper[\Psi_2]{\notf{\td{N}{\psi_2}}}{\notf{\td{M}{\psi_1\psi_2}}}{\bool}$.
By $\coftype[\Psi_1]{\notf{N}}{\bool}$, we know
$\inper[\Psi_2]{\td{X_1}{\psi_2}}{\notf{\td{N}{\psi_2}}}{\bool}$. Therefore
$\inper[\Psi_2]{\td{X_1}{\psi_2}}{\notf{\td{M}{\psi_1\psi_2}}}{\bool}$.

\item
If $\td{x}{\psi_2} = x'$ then
$\vinper[\Psi_2]
{\td{M}{\psi_1\psi_2} \evals \notel{x'}{N'}}
{\notel{x'}{\td{N}{\psi_2}}}{\notb{x'}}$ where
$\ceqtm[\Psi_2]{N'}{\td{N}{\psi_2}}{\bool}$, and
$\coe{x.\notb{x}}{x'}{0}{\td{M}{\psi_1\psi_2}} \steps^*
\coe{x.\notb{x}}{x'}{0}{\notel{x'}{N'}} \steps^* \notf{N'}$. Then
$\inper[\Psi_2]{\notf{N'}}{\notf{\td{N}{\psi_2}}}{\bool}$, and by
$\coftype[\Psi_1]{\notf{N}}{\bool}$,
$\inper[\Psi_2]{\td{X_1}{\psi_2}}{\notf{\td{N}{\psi_2}}}{\bool}$, so
$\inper[\Psi_2]{\notf{N'}}{\td{X_1}{\psi_2}}{\bool}$.
\end{enumerate}

\item
If $\td{r}{\psi_1} = x$ and $\td{r'}{\psi_1} = y$ then
$\coftype[\Psi_1]{\td{M}{\psi_1}}{\notb{x}}$ and so
$\td{M}{\psi_1} \evals \notel{x}{N}$ where $\coftype[\Psi_1]{N}{\bool}$.
Therefore
$\coe{x.\notb{x}}{x}{y}{\td{M}{\psi_1}} \steps^*
\coe{x.\notb{x}}{x}{y}{\notel{x}{N}} \steps^* \notel{y}{N}$.

\begin{enumerate}
\item
If $\td{x}{\psi_2} = \e$ and $\td{y}{\psi_2} = \e$ then
$\coe{x.\notb{x}}{\e}{\e}{\td{M}{\psi_1\psi_2}} \steps
\td{M}{\psi_1\psi_2}$. By
$\coftype[\Psi_1]{\td{M}{\psi_1}}{\notb{x}}$, we have
$\inper[\Psi_2]{\notel{\e}{\td{N}{\psi_2}}}{\td{M}{\psi_1\psi_2}}{\bool}$ as
desired.

\item
If $\td{x}{\psi_2} = 0$ and $\td{y}{\psi_2} = 1$ then
$\coe{x.\notb{x}}{0}{1}{\td{M}{\psi_1\psi_2}} \steps
\notf{\td{M}{\psi_1\psi_2}}$, and
$\notel{1}{\td{N}{\psi_2}} \steps \td{N}{\psi_2}$. By
$\coftype[\Psi_1]{\td{M}{\psi_1}}{\notb{x}}$, for any
$\msubsts{\Psi'}{\psi}{\Psi_2}$,
$\inper[\Psi']{\notel{0}{\td{N}{\psi_2\psi}} \steps \notf{\td{N}{\psi_2\psi}}}
{\td{M}{\psi_1\psi_2\psi}}{\bool}$, so by \cref{lem:inper-not-swap} we have
$\inper[\Psi_2]{\td{N}{\psi_2}}{\notf{\td{M}{\psi_1\psi_2}}}{\bool}$ as desired.

\item
If $\td{x}{\psi_2} = 1$ and $\td{y}{\psi_2} = 0$ then
$\coe{x.\notb{x}}{1}{0}{\td{M}{\psi_1\psi_2}} \steps
\notf{\td{M}{\psi_1\psi_2}}$, and
$\notel{0}{\td{N}{\psi_2}} \steps \notf{\td{N}{\psi_2}}$. By
$\coftype[\Psi_1]{\td{M}{\psi_1}}{\notb{x}}$, for any
$\msubsts{\Psi'}{\psi}{\Psi_2}$,
$\inper[\Psi']{\notel{1}{\td{N}{\psi_2\psi}} \steps \td{N}{\psi_2\psi}}
{\td{M}{\psi_1\psi_2\psi}}{\bool}$, so by \cref{lem:coftype-ceqtm}
$\ceqtm[\Psi_2]{\td{N}{\psi_2}}{\td{M}{\psi_1\psi_2}}{\bool}$ and so
$\inper[\Psi_2]{\notf{\td{N}{\psi_2}}}{\notf{\td{M}{\psi_1\psi_2}}}{\bool}$.

\item
If $\td{x}{\psi_2} = 1$ and $\td{y}{\psi_2} = y'$ then
$\coe{x.\notb{x}}{1}{y'}{\td{M}{\psi_1\psi_2}} \steps
\notel{y'}{\td{M}{\psi_1\psi_2}}$, and $\isval{\notel{y'}{\td{N}{\psi_2}}}$. By
$\coftype[\Psi_1]{\td{M}{\psi_1}}{\notb{x}}$, for any
$\msubsts{\Psi'}{\psi}{\Psi_2}$,
$\inper[\Psi']{\notel{1}{\td{N}{\psi_2\psi}} \steps \td{N}{\psi_2\psi}}
{\td{M}{\psi_1\psi_2\psi}}{\bool}$, so by \cref{lem:coftype-ceqtm},
$\ceqtm[\Psi_2]{\td{N}{\psi_2}}{\td{M}{\psi_1\psi_2}}{\bool}$ and so
$\vinper[\Psi_2]{\notel{y'}{\td{M}{\psi_1\psi_2}}}{\notel{y'}{\td{N}{\psi_2}}}{\notb{y'}}$.

\item
If $\td{x}{\psi_2} = 0$ and $\td{y}{\psi_2} = y'$ then
$\coe{x.\notb{x}}{0}{y'}{\td{M}{\psi_1\psi_2}} \steps
\notel{y'}{\notf{\td{M}{\psi_1\psi_2}}}$, and
$\isval{\notel{y'}{\td{N}{\psi_2}}}$. By
$\coftype[\Psi_1]{\td{M}{\psi_1}}{\notb{x}}$, for any
$\msubsts{\Psi'}{\psi}{\Psi_2}$,
$\inper[\Psi']{\notel{0}{\td{N}{\psi_2\psi}} \steps \notf{\td{N}{\psi_2\psi}}}
{\td{M}{\psi_1\psi_2\psi}}{\bool}$, so by \cref{lem:inper-not-swap},
$\ceqtm[\Psi_2]{\td{N}{\psi_2}}{\notf{\td{M}{\psi_1\psi_2}}}{\bool}$ and so
$\vinper[\Psi_2]
{\notel{y'}{\notf{\td{M}{\psi_1\psi_2}}}}
{\notel{y'}{\td{N}{\psi_2}}}{\notb{y'}}$.

\item
If $\td{x}{\psi_2} = x'$ and $\td{y}{\psi_2} = 1$ then
$\vinper[\Psi_2]
{\td{M}{\psi_1\psi_2} \evals \notel{x'}{N'}}
{\notel{x'}{\td{N}{\psi_2}}}{\notb{x'}}$ where
$\ceqtm[\Psi_2]{N'}{\td{N}{\psi_2}}{\bool}$. Moreover,
$\coe{x.\notb{x}}{x'}{1}{\td{M}{\psi_1\psi_2}} \steps^*
\coe{x.\notb{x}}{x'}{1}{\notel{x'}{N'}} \steps^* N'$, and
$\notel{1}{\td{N}{\psi_2}} \steps \td{N}{\psi_2}$. Then
$\inper[\Psi_2]{N'}{\td{N}{\psi_2}}{\bool}$ as desired.

\item
If $\td{x}{\psi_2} = x'$ and $\td{y}{\psi_2} = 0$ then
$\vinper[\Psi_2]
{\td{M}{\psi_1\psi_2} \evals \notel{x'}{N'}}
{\notel{x'}{\td{N}{\psi_2}}}{\notb{x'}}$ where
$\ceqtm[\Psi_2]{N'}{\td{N}{\psi_2}}{\bool}$. Moreover,
$\coe{x.\notb{x}}{x'}{0}{\td{M}{\psi_1\psi_2}} \steps^*
\coe{x.\notb{x}}{x'}{0}{\notel{x'}{N'}} \steps^* \notf{N'}$, and
$\notel{0}{\td{N}{\psi_2}} \steps \notf{\td{N}{\psi_2}}$. Then
$\inper[\Psi_2]{\notf{N'}}{\notf{\td{N}{\psi_2}}}{\bool}$ as desired.

\item
If $\td{x}{\psi_2} = x'$ and $\td{y}{\psi_2} = y'$ then
$\vinper[\Psi_2]
{\td{M}{\psi_1\psi_2} \evals \notel{x'}{N'}}
{\notel{x'}{\td{N}{\psi_2}}}{\notb{x'}}$ where
$\ceqtm[\Psi_2]{N'}{\td{N}{\psi_2}}{\bool}$. Moreover,
$\coe{x.\notb{x}}{x'}{y'}{\td{M}{\psi_1\psi_2}} \steps^*
\coe{x.\notb{x}}{x'}{y'}{\notel{x'}{N'}} \steps \notel{y'}{N'}$,
$\isval{\notel{y'}{\td{N}{\psi_2}}}$, and
$\vinper[\Psi_2]{\notel{y'}{N'}}{\notel{y'}{\td{N}{\psi_2}}}{\bool}$ as desired.
\end{enumerate}
\end{enumerate}

This concludes the proof of the fourth Kan condition.
The proofs of the first three Kan conditions (regarding $\hcomsym$) rely on
that result, as well as two additional lemmas:

\begin{lemma}
$\cpretype{\notb{\e}}$ is Kan.
\end{lemma}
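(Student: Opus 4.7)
The plan is to leverage two facts we have already established: that $\ceqpretype{\notb{\e}}{\bool}$ (from the Pretype paragraph of this subsection), and that $\cpretype{\bool}$ is Kan (proved in the Booleans section). Together with \cref{lem:ceqpretype-ceqtm} and head expansion, these should let us reduce each of the four Kan conditions for $\notb{\e}$ to the corresponding condition for $\bool$.

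The crucial operational observation is that for any $\msubst{\psi}$, $\td{\notb{\e}}{\psi} = \notb{\e} \steps \bool$, so by the head-reduction rules of $\hcomsym$ and $\coesym$ on their type subscripts,
\[
\td{(\hcomgeneric{r_1}{\notb{\e}})}{\psi} \steps \td{(\hcomgeneric{r_1}{\bool})}{\psi}
\qquad\text{and}\qquad
\td{(\coegeneric{x.\notb{\e}})}{\psi} \steps \td{(\coegeneric{x.\bool})}{\psi}.
\]
For each Kan condition, I will first invoke \cref{lem:ceqpretype-ceqtm} to rewrite each hypothesis $\ceqtm{-}{-}{\notb{\e}}$ (or instances thereof) as $\ceqtm{-}{-}{\bool}$; then apply the corresponding Kan condition of $\bool$; then move the resulting conclusion back across $\ceqpretype{\notb{\e}}{\bool}$; and finally use head expansion via the reductions above to replace the type subscripts $\bool$ in the $\hcomsym$ or $\coesym$ terms by $\notb{\e}$, yielding the equation required by the Kan condition for $\notb{\e}$.

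Concretely: for the first Kan condition, the hypotheses (a)--(c) become the same statements with $\bool$ replacing $\notb{\e}$, yielding by Kan-ness of $\bool$ that $\ceqtm[\Psi',x]{\hcomgeneric{x}{\bool}}{\hcom{x}{\bool}{r}{r'}{O}{y.P^0,y.P^1}}{\bool}$; \cref{lem:ceqpretype-ceqtm} promotes this to an equality in $\notb{\e}$, and head expansion on both sides (using that $\td{\hcomgeneric{x}{\notb{\e}}}{\psi} \steps \td{\hcomgeneric{x}{\bool}}{\psi}$) yields the desired equation. The second and third Kan conditions proceed analogously, applying the corresponding Kan conditions of $\bool$. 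For the fourth Kan condition, the chain $\coe{x.\notb{\e}}{r}{r'}{M} \steps \coe{x.\bool}{r}{r'}{M} \steps M$ gives head expansion directly from the hypothesis $\ceqtm[\Psi']{M}{N}{\notb{\e}}$ to the conclusion.

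I do not expect a main obstacle here; everything should follow by the transfer pattern above once one is careful about which side each step is applied to (type vs.\ term equality, \cref{lem:ceqpretype-ceqtm} vs.\ head expansion). The only thing worth double-checking is that head expansion is applicable on both sides of each equation, which requires that for every $\psi$ the displayed reductions of $\hcomsym$ and $\coesym$ at $\notb{\e}$ actually fire---and they do, because $\e$ is a dimension constant (invariant under $\psi$) and $\notb{\e}\steps\bool$ unconditionally, so the premise $A \steps A'$ of the head-reduction rules for $\hcomsym$ and $\coesym$ is always satisfied.
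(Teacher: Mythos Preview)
Your proposal is correct and takes essentially the same approach as the paper: reduce each Kan condition for $\notb{\e}$ to the corresponding one for $\bool$ via the step $\notb{\e}\steps\bool$, using head expansion on the $\hcomsym$/$\coesym$ terms and \cref{lem:ceqpretype-ceqtm} to transfer typing between $\notb{\e}$ and $\bool$. The paper's proof is simply a one-sentence summary of exactly this argument.
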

\begin{proof}
This follows directly from head expansion and the fact that $\cpretype{\bool}$
is Kan, because $\hcomsym$ and $\coesym$ first evaluate their type argument and
$\notb{\e} \steps \bool$.
\end{proof}

\begin{lemma}\label{lem:ceqtm-notel-coe}
If $\coftype{M}{\notb{r}}$, then
$\ceqtm{\notel{r}{\coe{x.\notb{x}}{r}{1}{M}}}{M}{\notb{r}}$.
\end{lemma}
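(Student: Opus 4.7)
The plan is to first establish that $\coftype{\notel{r}{\coe{x.\notb{x}}{r}{1}{M}}}{\notb{r}}$, and then compare it with $M$ via \cref{lem:coftype-ceqtm}. Typing follows by composing the fourth Kan condition for $\notb{-}$ (already proven above), which gives $\coftype{\coe{x.\notb{x}}{r}{1}{M}}{\notb{1}}$, with $\ceqpretype{\notb{1}}{\bool}$ and \cref{lem:ceqpretype-ceqtm} to promote the type to $\bool$, and finally with the introduction rule for $\notel$ to obtain a term of type $\notb{r}$. Applying \cref{lem:coftype-ceqtm}, the desired equality reduces to proving, for every $\msubst{\psi}$, that $\inper[\Psi']{\td{(\notel{r}{\coe{x.\notb{x}}{r}{1}{M}})}{\psi}}{\td{M}{\psi}}{A'}$ where $\td{\notb{r}}{\psi}\evals A'$. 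I would then case-split on $\td{r}{\psi}$.

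For the endpoint cases, when $\td{r}{\psi}=1$ the LHS reduces as $\notel{1}{\coe{x.\notb{x}}{1}{1}{\td{M}{\psi}}}\steps\coe{x.\notb{x}}{1}{1}{\td{M}{\psi}}\steps\td{M}{\psi}$, so both sides share a value in $\bool=\td{\notb{r}}{\psi}$; when $\td{r}{\psi}=0$ the LHS reduces to $\notf{\notf{\td{M}{\psi}}}$ (via $\notel{0}{-}\steps\notf{-}$ and $\coe{x.\notb{x}}{0}{1}{-}\steps\notf{-}$), and the required relation in $\bool$ then follows from the $\notf\notf$-cancellation lemma proved at the start of this subsection.

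The substantive case is $\td{r}{\psi}=y$ a dimension name. Here the LHS $\notel{y}{\coe{x.\notb{x}}{y}{1}{\td{M}{\psi}}}$ is already a value; meanwhile $\coftype[\Psi']{\td{M}{\psi}}{\notb{y}}$ forces $\td{M}{\psi}\evals\notel{y}{N'}$ with $\coftype[\Psi']{N'}{\bool}$, since this is the only way to inhabit $\per[\Psi']{\notb{y}}$. Unfolding the PER for $\notb{y}$, the required $\vinper$ reduces to proving $\ceqtm[\Psi']{\coe{x.\notb{x}}{y}{1}{\td{M}{\psi}}}{N'}{\bool}$. I would apply \cref{lem:coftype-ceqtm} a second time and case on $\td{y}{\psi'}$: if $\td{y}{\psi'}=1$ the coercion steps to $\td{\td{M}{\psi}}{\psi'}$, whose value coheres with $\td{(\notel{y}{N'})}{\psi'}\steps\td{N'}{\psi'}$; if $\td{y}{\psi'}=0$ the coercion steps to $\notf{\td{\td{M}{\psi}}{\psi'}}$, and coherence combined with one more application of $\notf\notf$-cancellation closes the case; and if $\td{y}{\psi'}=y'$ remains a name, coherence of aspects forces $\td{\td{M}{\psi}}{\psi'}\evals\notel{y'}{N''}$, at which point the operational rule (up to alpha-renaming the coercion binder to $y'$) gives $\coe{x.\notb{x}}{y'}{1}{\notel{y'}{N''}}\steps^* N''$ and reduces the goal to comparing $N''$ with $\td{N'}{\psi'}$, once again by coherence. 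The main obstacle is the bookkeeping in this last case, but the sub-case split and the supporting uses of coherence mirror the fourth Kan condition's analysis already carried out above, so it should proceed uniformly.
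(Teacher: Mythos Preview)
Your proposal is correct and follows essentially the same route as the paper: establish typing via the fourth Kan condition and the $\notel{}{}$ introduction rule, apply \cref{lem:coftype-ceqtm}, case on $\td{r}{\psi}$, and in the dimension-name case reduce to a $\ceqtm$ in $\bool$ handled by a second application of \cref{lem:coftype-ceqtm} with a further case split on $\td{y}{\psi'}$. The only cosmetic difference is in the $\td{y}{\psi'}=0$ subcase: the paper invokes \cref{lem:inper-not-swap} (which packages the step ``from $\notf{A}\sim B$ at all instances conclude $A\sim\notf{B}$''), whereas you describe it as ``coherence plus $\notf\notf$-cancellation''---that is exactly how \cref{lem:inper-not-swap} is proved, so the arguments coincide.
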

\begin{proof}
The introduction rule and fourth Kan condition of $\notb{x}$ imply
$\coftype{\notel{r}{\coe{x.\notb{x}}{r}{1}{M}}}{\notb{r}}$. Therefore by
\cref{lem:coftype-ceqtm} it suffices to show that for any $\msubst{\psi}$,
$\inper[\Psi']
{\notel{\td{r}{\psi}}{\coe{x.\notb{x}}{\td{r}{\psi}}{1}{\td{M}{\psi}}}}
{\td{M}{\psi}}
{A_0}$ where $\notb{\td{r}{\psi}} \evals A_0$.

\begin{enumerate}
\item If $\td{r}{\psi} = 0$ then
$\notel{0}{\coe{x.\notb{x}}{0}{1}{\td{M}{\psi}}} \steps
\notf{\coe{x.\notb{x}}{0}{1}{\td{M}{\psi}}} \steps
\notf{\notf{\td{M}{\psi}}}$.
By \cref{lem:ceqpretype-ceqtm},
$\coftype[\Psi']{\td{M}{\psi}}{\bool}$, so 
$\ceqtm[\Psi']{\notf{\notf{\td{M}{\psi}}}}{\td{M}{\psi}}{\bool}$ and therefore
$\inper[\Psi']{\notf{\notf{\td{M}{\psi}}}}{\td{M}{\psi}}{\bool}$.

\item If $\td{r}{\psi} = 1$ then
$\notel{1}{\coe{x.\notb{x}}{1}{1}{\td{M}{\psi}}} \steps
\coe{x.\notb{x}}{1}{1}{\td{M}{\psi}} \steps \td{M}{\psi}$, and
$\inper[\Psi']{\td{M}{\psi}}{\td{M}{\psi}}{\bool}$.

\item If $\td{r}{\psi} = x$ then
$\isval{\notel{x}{\coe{x.\notb{x}}{x}{1}{\td{M}{\psi}}}}$. We know
$\inper[\Psi']{\td{M}{\psi}}{\td{M}{\psi}}{\notb{x}}$, so
$\td{M}{\psi} \evals \notel{x}{N}$ where
$\coftype[\Psi']{N}{\bool}$. To show
$\vinper[\Psi']
{\notel{x}{\coe{x.\notb{x}}{x}{1}{\td{M}{\psi}}}}
{\notel{x}{N}}{\notb{x}}$, we must show
$\ceqtm[\Psi']{\coe{x.\notb{x}}{x}{1}{\td{M}{\psi}}}{N}{\bool}$.
Again, by \cref{lem:coftype-ceqtm} it suffices to show that for any
$\msubsts{\Psi''}{\psi'}{\Psi'}$,
$\inper[\Psi'']
{\coe{x.\notb{x}}{\td{x}{\psi'}}{1}{\td{M}{\psi\psi'}}}
{\td{N}{\psi'}}{\bool}$.
\begin{enumerate}
\item If $\td{x}{\psi'} = 0$ then
$\coe{x.\notb{x}}{0}{1}{\td{M}{\psi\psi'}} \steps
\notf{\td{M}{\psi\psi'}}$. Because
$\coftype[\Psi']{\td{M}{\psi}}{\notb{x}}$ and
$\td{M}{\psi} \evals \notel{x}{N}$,
we know that for any $\msubsts{\Psi'''}{\psi''}{\Psi''}$,
$\inper[\Psi''']
{\notel{0}{\td{N}{\psi'\psi''}} \steps \notf{\td{N}{\psi'\psi''}}}
{\td{M}{\psi\psi'\psi''}}{\bool}$.
Therefore by \cref{lem:inper-not-swap} we have
$\inper[\Psi'']{\notf{\td{M}{\psi\psi'}}}{\td{N}{\psi'}}{\bool}$, which is what
we needed.

\item If $\td{x}{\psi'} = 1$ then
$\coe{x.\notb{x}}{1}{1}{\td{M}{\psi\psi'}} \steps \td{M}{\psi\psi'}$. By
$\coftype[\Psi']{\td{M}{\psi}}{\notb{x}}$ and
$\td{M}{\psi} \evals \notel{x}{N}$, we have
$\inper[\Psi'']
{\notel{1}{\td{N}{\psi'}} \steps \td{N}{\psi'}}
{\td{M}{\psi\psi'}}{\bool}$, which is what we needed.

\item If $\td{x}{\psi'} = x'$ then by
$\coftype[\Psi']{\td{M}{\psi}}{\notb{x}}$ and
$\td{M}{\psi} \evals \notel{x}{N}$, we have
$\inper[\Psi'']{\notel{x'}{\td{N}{\psi'}}}{\td{M}{\psi\psi'}}{\notb{x'}}$,
so $\td{M}{\psi\psi'} \evals \notel{x'}{N'}$ and
$\ceqtm[\Psi'']{\td{N}{\psi'}}{N'}{\bool}$. Therefore
$\coe{x.\notb{x}}{x'}{1}{\td{M}{\psi\psi'}} \steps^*
\coe{x.\notb{x}}{x'}{1}{\notel{x'}{N'}} \steps
\notel{1}{N'} \steps N'$ and we must show
$\inper[\Psi'']{N'}{\td{N}{\psi'}}{\notb{x'}}$, which follows from
$\ceqtm[\Psi'']{\td{N}{\psi'}}{N'}{\bool}$.
\qedhere
\end{enumerate}
\end{enumerate}
\end{proof}
In particular, if $\coftype{M}{\bool}$, then
$\ceqtm{\notf{\coe{x.\notb{x}}{0}{1}{M}}}{M}{\bool}$.

The first Kan condition asserts that
for any $\msubsts{(\Psi',x)}{\psi}{(\Psi,w)}$, if
\begin{enumerate}
\item \ceqtm[\Psi',x]{M}{O}{\notb{\td{w}{\psi}}},
\item \ceqtm[\Psi',y]{\dsubst{N^\e}{\e}{x}}{\dsubst{P^\e}{\e}{x}}{\notb{\dsubst{\td{w}{\psi}}{\e}{x}}} for
$\e=0,1$, and
\item \ceqtm[\Psi']{\dsubst{\dsubst{N^\e}{r}{y}}{\e}{x}}{\dsubst{M}{\e}{x}}{\notb{\dsubst{\td{w}{\psi}}{\e}{x}}} for $\e=0,1$,
\end{enumerate}
then $\ceqtm[\Psi',x]{\hcomgeneric{x}{\notb{\td{w}{\psi}}}}{\hcom{x}{\notb{\td{w}{\psi}}}{r}{r'}{O}{y.P^0,y.P^1}}{\notb{\td{w}{\psi}}}$.

Let $\msubsts{\Psi_1}{\psi_1}{(\Psi',x)}$ and
$\msubsts{\Psi_2}{\psi_2}{\Psi_1}$. We will again focus on the unary case.

\begin{enumerate}
\item If $\td{w}{\psi\psi_1} = \e$ then
$\coftype[\Psi_1]
{\hcom{\td{x}{\psi_1}}{\notb{\e}}{\td{r}{\psi_1}}{\td{r'}{\psi_1}}
{\td{M}{\psi_1}}{y.\td{N^0}{\psi_1},y.\td{N^0}{\psi_1}}}
{\notb{\e}}$
by the first Kan condition of $\notb{\e}$, or the third Kan condition if
$\td{x}{\psi_1} = \e'$. Therefore $\td{\hcomsym}{\psi_1} \evals X_1$ and
$\inper[\Psi_2]{\td{X_1}{\psi_2}}{\td{\hcomsym}{\psi_1\psi_2}}{\bool}$.

\item If $\td{w}{\psi\psi_1} = w'$ then $\td{\hcomsym}{\psi_1} \evals$
\[ 
\notel{w'}{
  \hcom{\td{x}{\psi_1}}{\bool}{\td{r}{\psi_1}}{\td{r'}{\psi_1}}
  {\coe{x.\notb{x}}{w'}{1}{\td{M}{\psi_1}}}
  {y.\coe{x.\notb{x}}{w'}{1}{\td{N^0}{\psi_1}},
   y.\coe{x.\notb{x}}{w'}{1}{\td{N^1}{\psi_1}}}}
\]
Let $H$ be the argument of the above $\notel{w'}{-}$. We must show
$\inper[\Psi_2]{\td{\hcomsym}{\psi_1\psi_2}}{\notel{\td{w'}{\psi_2}}{\td{H}{\psi_2}}}{A_{12}}$
where $\notb{\td{w'}{\psi_2}} \evals A_{12}$.

For all $\psi_2$,
$\coftype[\Psi_2]{\td{M}{\psi_1\psi_2}}{\notb{\td{w'}{\psi_2}}}$,
so by the fourth Kan condition of $\notb{x}$,
$\coftype[\Psi_2]{\coe{x.\notb{x}}{\td{w'}{\psi_2}}{1}{\td{M}{\psi_1\psi_2}}}{\bool}$.
If $\td{x}{\psi_1\psi_2} = \e$ then
$\td{\dsubst{}{\e}{x}}{\psi_1\psi_2} = \td{}{\psi_1\psi_2}$, so
$\coftype[\Psi_2,y]{\td{N^\e}{\psi_1\psi_2}}{\notb{\td{w'}{\psi_2}}}$ and thus
$\coftype[\Psi_2,y]{\coe{x.\notb{x}}{\td{w'}{\psi_2}}{1}{\td{N^\e}{\psi_1\psi_2}}}{\bool}$
and the adjacency condition holds as well.
In this case, by the third Kan condition of $\bool$,
$\ceqtm[\Psi_2]{\td{H}{\psi_2}}
{\coe{x.\notb{x}}{\td{w'}{\psi_2}}{1}{\td{\dsubst{N^\e}{r'}{y}}{\psi_1\psi_2}}}
{\bool}$.

If instead $\td{x}{\psi_1\psi_2} = x'$ then $\Psi_2 = (\Psi_2',x')$ and 
$\td{\dsubst{}{\e}{x}}{\psi_1\psi_2} = \dsubst{\td{}{\psi_1\psi_2}}{\e}{x'}$. Then
$\coftype[\Psi_2',y]{\dsubst{\td{N^\e}{\psi_1\psi_2}}{\e}{x'}}{\notb{\dsubst{\td{w'}{\psi_2}}{\e}{x'}}}$,
so $\coftype[\Psi_2',y]
{\coe{x.\notb{x}}{\dsubst{\td{w'}{\psi_2}}{\e}{x'}}{1}{\dsubst{\td{N^\e}{\psi_1\psi_2}}{\e}{x'}}}
{\notb{\dsubst{\td{w'}{\psi_2}}{\e}{x'}}}$
and the adjacency condition holds as well.
In this case, by the first Kan condition of $\bool$,
$\coftype[\Psi_2]{\td{H}{\psi_2}}{\bool}$.

\begin{enumerate}
\item If $\td{w'}{\psi_2} = 0$ then each side steps once to
\[
\inper[\Psi_2]
{\hcom{\td{x}{\psi_1\psi_2}}{\bool}{\td{r}{\psi_1\psi_2}}{\td{r'}{\psi_1\psi_2}}
{\td{M}{\psi_1\psi_2}}{y.\td{N^0}{\psi_1\psi_2},y.\td{N^1}{\psi_1\psi_2}}}
{\notf{\td{H}{\psi_2}}}{\bool}
\]
If $\td{x}{\psi_1\psi_2} = \e$ then 
by the third Kan condition of $\bool$, it
suffices to show
$\inper[\Psi_2]
{\td{\dsubst{N^\e}{r'}{y}}{\psi_1\psi_2}}
{\notf{\coe{x.\notb{x}}{0}{1}{\td{\dsubst{N^\e}{r'}{y}}{\psi_1\psi_2}}}}
{\bool}$,
which follows from \cref{lem:ceqtm-notel-coe}.

If $\td{x}{\psi_1\psi_2} = x'$ and $\td{r}{\psi_1\psi_2}=\td{r'}{\psi_1\psi_2}$
then by the second Kan condition of $\bool$, it suffices to show
$\inper[\Psi_2]
{\td{M}{\psi_1\psi_2}}
{\notf{\coe{x.\notb{x}}{0}{1}{\td{M}{\psi_1\psi_2}}}}
{\bool}$,
which follows from \cref{lem:ceqtm-notel-coe}.

Otherwise $\td{x}{\psi_1\psi_2} = x'$ and $\td{r}{\psi_1\psi_2} \neq
\td{r'}{\psi_1\psi_2}$, and the right-hand side steps twice to
\[
\hcom{\td{x}{\psi_1\psi_2}}{\bool}{\td{r}{\psi_1\psi_2}}{\td{r'}{\psi_1\psi_2}}
{\notf{\coe{x.\notb{x}}{0}{1}{\td{M}{\psi_1\psi_2}}}}
{y.\notf{\coe{x.\notb{x}}{0}{1}{\td{N^0}{\psi_1\psi_2}}},\dots}
\]
which $\eq$ the left-hand side by \cref{lem:ceqtm-notel-coe} and the first Kan
condition of $\bool$.

\item If $\td{w'}{\psi_2} = 1$ then each side steps once to
\[
\inper[\Psi_2]
{\hcom{\td{x}{\psi_1\psi_2}}{\bool}{\td{r}{\psi_1\psi_2}}{\td{r'}{\psi_1\psi_2}}
{\td{M}{\psi_1\psi_2}}{y.\td{N^0}{\psi_1\psi_2},y.\td{N^1}{\psi_1\psi_2}}}
{\td{H}{\psi_2}}{\bool}
\]
If $\td{x}{\psi_1\psi_2} = \e$ then by the third Kan condition of $\bool$, it
suffices to show
$\inper[\Psi_2]{\td{M}{\psi_1\psi_2}}{\coe{x.\notb{x}}{1}{1}{\td{M}{\psi_1\psi_2}}}{\bool}$,
which follows from the computation rule for $\notb{x}$.

If $\td{x}{\psi_1\psi_2} = x'$ then the result follows from the first Kan
condition of $\bool$ and the computation rule for $\notb{x}$.

\item If $\td{w'}{\psi_2} = w''$ then each side steps once to
\[
\vinper[\Psi_2]{\notel{w''}{\td{H}{\psi_2}}}{\notel{w''}{\td{H}{\psi_2}}}{\notb{w''}}
\]
which follows from $\coftype[\Psi_2]{\td{H}{\psi_2}}{\bool}$.
\end{enumerate}
\end{enumerate}

The second Kan condition asserts that for any
$\msubsts{(\Psi',x)}{\psi}{(\Psi,w)}$, if
\begin{enumerate}
\item \coftype[\Psi',x]{M}{\notb{\td{w}{\psi}}},
\item \coftype[\Psi',y]{\dsubst{N^\e}{\e}{x}}{\notb{\dsubst{\td{w}{\psi}}{\e}{x}}} for $\e=0,1$, and
\item \ceqtm[\Psi']{\dsubst{\dsubst{N^\e}{r}{y}}{\e}{x}}{\dsubst{M}{\e}{x}}{\notb{\dsubst{\td{w}{\psi}}{\e}{x}}} for $\e=0,1$,
\end{enumerate}
then
$\ceqtm[\Psi',x]{\hcom{x}{\notb{\td{w}{\psi}}}{r}{r}{M}{y.N^0,y.N^1}}{M}{\notb{\td{w}{\psi}}}$.

The first Kan condition implies that the left-hand side is
$\coftype[\Psi',x]{-}{\notb{\td{w}{\psi}}}$, so by \cref{lem:coftype-ceqtm}, it
suffices to show that for all $\msubsts{\Psi''}{\psi'}{(\Psi',x)}$,
\[
\inper[\Psi'']
{\hcom{\td{x}{\psi'}}{\notb{\td{w}{\psi\psi'}}}{\td{r}{\psi'}}{\td{r}{\psi'}}{\td{M}{\psi'}}{y.\td{N^0}{\psi'},y.\td{N^1}{\psi'}}}
{\td{M}{\psi'}}
{A_{12}}
\]
where $\notb{\td{w}{\psi\psi'}} \evals A_{12}$.
\begin{enumerate}
\item If $\td{w}{\psi\psi'} = \e$ then
if $\td{x}{\psi'} = x'$ the second Kan condition of $\notb{\e}$ implies that
the left-hand side is $\ceqtm[\Psi'']{-}{\td{M}{\psi'}}{\notb{\e}}$,
and therefore also $\inper[\Psi'']{-}{\td{M}{\psi'}}{\bool}$.
Otherwise $\td{x}{\psi'} = \e'$ and by the third Kan condition of $\notb{\e}$,
the left-hand side is
$\ceqtm[\Psi'']{-}{\td{\dsubst{N^{\e'}}{r}{y}}{\psi'}}{\notb{\e}}$. 
The result follows because $\td{\dsubst{}{\e'}{x}}{\psi'} = \td{}{\psi'}$ and so
$\ceqtm[\Psi'']{\td{\dsubst{N^{\e'}}{r}{y}}{\psi'}}{\td{M}{\psi'}}{\notb{\e}}$.

\item If $\td{w}{\psi\psi'} = w'$ then the left side steps to
\[ 
\notel{w'}{
  \hcom{\td{x}{\psi'}}{\bool}{\td{r}{\psi'}}{\td{r}{\psi'}}
  {\coe{x.\notb{x}}{w'}{1}{\td{M}{\psi'}}}
  {y.\coe{x.\notb{x}}{w'}{1}{\td{N^0}{\psi'}},
   y.\coe{x.\notb{x}}{w'}{1}{\td{N^1}{\psi'}}}}
\]
If $\td{x}{\psi'} = x'$ the second Kan condition of $\bool$ (along with the
introduction rule and fourth Kan condition of $\notb{x}$) implies this is
$\ceqtm[\Psi'']{-}
{\notel{w'}{\coe{x.\notb{x}}{w'}{1}{\td{M}{\psi'}}}}
{\notb{w'}}$
which by \cref{lem:ceqtm-notel-coe} is
$\ceqtm[\Psi'']{-}{\td{M}{\psi'}}{\notb{w'}}$.
Otherwise $\td{x}{\psi'} = \e'$ and by the third Kan condition of $\bool$ (and
the introduction rule and fourth Kan condition of $\notb{x}$) this is
$\ceqtm[\Psi'']{-}
{\notel{w'}{\coe{x.\notb{x}}{w'}{1}{\td{\dsubst{N^{\e'}}{r}{y}}{\psi'}}}}
{\notb{w'}}$, which by
\cref{lem:ceqtm-notel-coe} is
$\ceqtm[\Psi'']{-}{\td{\dsubst{N^{\e'}}{r}{y}}{\psi'}}{\notb{w'}}$ and by
$\td{\dsubst{}{\e'}{x}}{\psi'} = \td{}{\psi'}$ is
$\ceqtm[\Psi'']{-}{\td{M}{\psi'}}{\notb{w'}}$.
\end{enumerate}

The third Kan condition asserts that for any
$\msubsts{\Psi'}{\psi}{(\Psi,w)}$, if
\begin{enumerate}
\item \coftype[\Psi']{M}{\notb{\td{w}{\psi}}},
\item \coftype[\Psi',y]{N^\e}{\notb{\td{w}{\psi}}}, and
\item
\ceqtm[\Psi']{\dsubst{N^\e}{r}{y}}{M}{\notb{\td{w}{\psi}}},
\end{enumerate}
then $\ceqtm[\Psi']{\hcomgeneric{\e}{\notb{\td{w}{\psi}}}}{\dsubst{N^\e}{r'}{y}}{\notb{\td{w}{\psi}}}$.

Let $\msubsts{\Psi_1}{\psi_1}{\Psi'}$ and
$\msubsts{\Psi_2}{\psi_2}{\Psi_1}$. We must show
$\td{\hcomsym}{\psi_1} \evals X_1$, 
$\inper[\Psi_2]{\td{X_1}{\psi_2}}
{\td{\dsubst{N^\e}{r'}{y}}{\psi_1\psi_2}}{A_{12}}$, and
$\inper[\Psi_2]{\td{\hcomsym}{\psi_1\psi_2}}
{\td{\dsubst{N^\e}{r'}{y}}{\psi_1\psi_2}}{A_{12}}$ where
$\notb{\td{w}{\psi\psi_1\psi_2}} \evals A_{12}$.

\begin{enumerate}
\item If $\td{w}{\psi\psi_1} = \e'$ then
$\ceqtm[\Psi_1]{\td{\hcomsym}{\psi_1}}
{\td{\dsubst{N^\e}{r'}{y}}{\psi_1}}
{\notb{\e'}}$ by the third Kan condition of $\notb{\e'}$, and the
result follows.

\item If $\td{w}{\psi\psi_1} = w'$ then
$\td{\hcomsym}{\psi_1} \steps$
\[ 
\notel{w'}{
  \hcom{\e}{\bool}{\td{r}{\psi_1}}{\td{r'}{\psi_1}}
  {\coe{x.\notb{x}}{w'}{1}{\td{M}{\psi_1}}}
  {y.\coe{x.\notb{x}}{w'}{1}{\td{N^0}{\psi_1}},
   y.\coe{x.\notb{x}}{w'}{1}{\td{N^1}{\psi_1}}}}
\]
By the third Kan condition of $\bool$, this is
$\ceqtm[\Psi_1]{-}
{\notel{w'}{\coe{x.\notb{x}}{w'}{1}{\td{\dsubst{N^\e}{r'}{y}}{\psi_1}}}}
{\notb{w'}}$, which by \cref{lem:ceqtm-notel-coe} is
$\ceqtm[\Psi_1]{-}{\td{\dsubst{N^\e}{r'}{y}}{\psi_1}}{\notb{w'}}$.
Therefore $\td{\hcomsym}{\psi_1} \evals X_1$ and
$\inper[\Psi_2]{\td{X_1}{\psi_2}}
{\td{\dsubst{N^\e}{r'}{y}}{\psi_1\psi_2}}{A_{12}}$ where
$\notb{\td{w'}{\psi_2}} \evals A_{12}$.
\begin{enumerate}
\item If $\td{w'}{\psi_2} = \e'$ then
$\ceqtm[\Psi_2]{\td{\hcomsym}{\psi_1\psi_2}}
{\td{\dsubst{N^\e}{r'}{y}}{\psi_1\psi_2}}
{\notb{\e'}}$ by the third Kan condition of $\notb{\e'}$, so
$\inper[\Psi_2]{\td{\hcomsym}{\psi_1\psi_2}}
{\td{\dsubst{N^\e}{r'}{y}}{\psi_1\psi_2}}{\bool}$.

\item If $\td{w'}{\psi_2} = w''$ then
$\td{\hcomsym}{\psi_1\psi_2} \steps$
\[ 
\notel{w''}{
  \hcom{\e}{\bool}{\td{r}{\psi_1\psi_2}}{\td{r'}{\psi_1\psi_2}}
  {\coe{x.\notb{x}}{w''}{1}{\td{M}{\psi_1\psi_2}}}
  {y.\coe{x.\notb{x}}{w''}{1}{\td{N^0}{\psi_1\psi_2}},\dots}}
\]
By the third Kan condition of $\bool$, this is
$\ceqtm[\Psi_2]{-}
{\notel{w''}{\coe{x.\notb{x}}{w''}{1}{\td{\dsubst{N^\e}{r'}{y}}{\psi_1\psi_2}}}}
{\notb{w''}}$, which by \cref{lem:ceqtm-notel-coe} is
$\ceqtm[\Psi_2]{-}{\td{\dsubst{N^\e}{r'}{y}}{\psi_1\psi_2}}{\notb{w''}}$, so
$\inper[\Psi_2]{\td{\hcomsym}{\psi_1\psi_2}}
{\td{\dsubst{N^\e}{r'}{y}}{\psi_1\psi_2}}{\notb{w''}}$.
\end{enumerate}
\end{enumerate}

\paragraph{Cubical}
Show for any $\msubsts{\Psi'}{\psi}{(\Psi,x)}$ and
$\vinper[\Psi']{M}{N}{A_0}$ (where $\notb{\td{x}{\psi}}\evals A_0$) then
$\ceqtm[\Psi']{M}{N}{\notb{\td{x}{\psi}}}$.

If $\td{x}{\psi} = \e$ then $A_0 = \bool$ and we have
$\ceqtm[\Psi']{M}{N}{\bool}$ because $\cpretype[\Psi']{\bool}$ is cubical. By
\cref{lem:ceqpretype-ceqtm}, this implies $\ceqtm[\Psi']{M}{N}{\notb{\e}}$.
If $\td{x}{\psi} = x'$ then this follows from the introduction rule for
$\notb{x'}$.

\newpage
\section{Summary}

In this section we summarize the results of \cref{sec:types} in rule notation.
These rules are not intended to define a conventional proof theory.
However, if one were to inductively define a proof theory with these rules (and
structural rules such as hypothesis, weakening, etc.), the result would indeed
be sound for our computational semantics, in the sense that the conclusion of
each rule is true given that the premises are true. From this perspective, our
computational semantics are a model of higher type theory validating the
following \emph{canonicity theorem}:

\begin{theorem}[Canonicity]
If $\oftype[\cdot]{\cdot}{M}{\bool}$ then either
$M \evals \true$ or $M \evals \false$.
\end{theorem}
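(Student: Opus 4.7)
The plan is to derive canonicity directly by unfolding the definitions of the open and closed equality judgments until we reach a statement about membership in the $\bool$ PER at the empty dimension context, and then exploit the fact that only $\true$ and $\false$ can inhabit that PER at $\Psi = \cdot$.

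First, I would reduce to the closed case. Since $\G = \cdot$, Definition~\ref{def:eqtm} tells us that $\oftype[\cdot]{\cdot}{M}{\bool}$ is just $\coftype[\cdot]{M}{\bool}$, which in turn is $\ceqtm[\cdot]{M}{M}{\bool}$. Now I would instantiate Definition~\ref{def:ceqtm} at the only dimension substitutions available from the empty context, namely $\psi_1 = \psi_2 = \id[\cdot]$. Since $\td{M}{\id[\cdot]} = M$, this yields a value $M_1$ with $M\evals M_1$ and $\vinper[\cdot]{M_1}{M_1}{\bool}$.

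The crucial step is then to inspect the inductive definition of $\vinper[-]{-}{-}{\bool}$. Its three clauses produce $\true$, $\false$, and $\hcomsym$-values, respectively. The third clause is explicitly stated at dimension context $\Psi,x$ and the values it produces, $\hcomgeneric{x}{\bool}$ and $\hcom{x}{\bool}{r}{r'}{O}{y.P^0,y.P^1}$, contain the dimension name $x$ free. Consistent with this, in the operational semantics $\hcomgeneric{r_1}{\bool}$ is a value only when $r_1$ is a dimension name, and a closed dimension name cannot exist in the empty context. So when $\Psi=\cdot$, the least relation closed under the three clauses contains only the pairs $(\true,\true)$ and $(\false,\false)$, forcing $M_1 \in \{\true,\false\}$ and hence $M\evals\true$ or $M\evals\false$.

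There is no real obstacle here; the theorem is essentially a corollary of the design choices made for the $\bool$ PER, which deliberately places all non-canonical ``higher'' values behind the presence of a free dimension name. The only thing worth being careful about is that the inductive definition of $\vinper[-]{-}{-}{\bool}$ is treated as the least family of relations indexed by $\Psi$ closed under the given clauses, so that the absence of applicable clauses at $\Psi=\cdot$ genuinely rules out any other canonical inhabitant rather than merely failing to produce one.
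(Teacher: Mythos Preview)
Your proof is correct and follows essentially the same approach as the paper's own proof, which simply unfolds $\coftype[\cdot]{M}{\bool}$ to obtain $M\evals M_0$ with $\vinper[\cdot]{M_0}{M_0}{\bool}$ and observes that this forces $M_0\in\{\true,\false\}$. You have merely spelled out in more detail why the $\hcomsym$ clause of the $\bool$ PER cannot fire at $\Psi=\cdot$, which the paper leaves implicit.
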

\begin{proof}
By the definition of $\coftype{M}{A}$, it follows that
$M \evals M_0$ such that $\vinper[\cdot]{M_0}{M_0}{\bool}$, which implies that
$M_0 = \true$ or $M_0 = \false$.
\end{proof}

Our choice of rules is inspired by the formal cubical type theories given
by~\citet{cohen2016cubical,licata2014cubical} so as to make clear that
our computational semantics are a valid interpretation of those rules.
However, these semantics may be used to justify concepts,
such as strict types, that are not currently considered in the formal
setting.  Moreover, there is no strong reason to limit consideration
to inductively defined proof theories.  The role of a proof theory is
to provide access to the truth, in particular to support
mechanization.  But there are methods of accessing the truth, such as
decision procedures for arithmetic, that do not fit into the
conventional setup for proof theory.  (This point was stressed for the
the NuPRL type theory~\citep{constableetalnuprl}; we are merely
reiterating it here.)

\begin{remark}
Some complexities in the rules below have been suppressed for the sake of
clarity. The introduction and elimination rules omit respect for equality. Also,
all of the rules for function and product types should contain the hypotheses
$\cwftype{A}$ and $\cwftype{B}$.
\end{remark}

\begin{remark}
The theorems in \cref{sec:types} are stated only for closed terms. The
corresponding generalizations to open-term sequents, below, follow from
\cref{lem:eqtm-tsubst}, the fact that the introduction and elimination rules
respect equality (proven in \cref{sec:types}), and the fact that all
substitutions commute with term formers.
\end{remark}

\paragraph{Function types}

\[
\infer
  {\cwftype{\arr{A}{B}}}
  {\cwftype{A} & \cwftype{B}}
\]

\[
\infer
  {\oftype\G{\lam{a}{M}}{\arr{A}{B}}}
  {\oftype{\Gamma,\oft aA}{M}{B}}
\]

\[
\infer
  {\oftype\G{\app{M}{N}}{B}}
  {\oftype\G{M}{\arr{A}{B}} &
   \oftype\G{N}{A}}
\]

\[
\infer
  {\eqtm\G{\app{\lam{a}{M}}{N}}{\subst{M}{N}{a}}{B}}
  {\oftype{\Gamma,\oft aA}{M}{B} &
   \oftype\G{N}{A}}
\]

\[
\infer
  {\eqtm\G{M}{\lam{a}{\app{M}{a}}}{\arr{A}{B}}}
  {\oftype\G{M}{\arr{A}{B}}}
\]

\paragraph{Product types}

\[
\infer
  {\cwftype{\prd{A}{B}}}
  {\cwftype{A} & \cwftype{B}}
\]

\[
\infer
  {\oftype\G{\pair{M}{N}}{\prd{A}{B}}}
  {\oftype\G{M}{A} &
   \oftype\G{N}{B}}
\]

\[
\infer
  {\oftype\G{\fst{P}}{A}}
  {\oftype\G{P}{\prd{A}{B}}}
\qquad
\infer
  {\oftype\G{\snd{P}}{B}}
  {\oftype\G{P}{\prd{A}{B}}}
\]

\[
\infer
  {\eqtm\G{\fst{\pair{M}{N}}}{M}{A}}
  {\oftype\G{M}{A} &
   \oftype\G{N}{B}}
\qquad
\infer
  {\eqtm\G{\snd{\pair{M}{N}}}{N}{B}}
  {\oftype\G{M}{A} &
   \oftype\G{N}{B}}
\]

\[
\infer
  {\eqtm{\G}{P}{\pair{\fst{P}}{\snd{P}}}{\prd{A}{B}}}
  {\oftype{\G}{P}{\prd{A}{B}}}
\]

\paragraph{Booleans}

\[
\infer{\cwftype{\bool}}{}
\]

\[
\infer{\oftype\G{\true}{\bool}}{}
\qquad
\infer{\oftype\G{\false}{\bool}}{}
\]

\[
\infer
  {\oftype\G{\ifb{A}{M}{T}{F}}{A}}
  {\cwftype{A} &
   \oftype\G{M}{\bool} &
   \oftype\G{T}{A} &
   \oftype\G{F}{A}}
\]

\[
\infer
  {\eqtm\G{\ifb{A}{\true}{T}{F}}{T}{A}}
  {\cwftype{A} &
   \oftype\G{T}{A} &
   \oftype\G{F}{A}}
\]

\[
\infer
  {\eqtm\G{\ifb{A}{\false}{T}{F}}{F}{A}}
  {\cwftype{A} &
   \oftype\G{T}{A} &
   \oftype\G{F}{A}}
\]

\paragraph{Circle}

\[
\infer{\cwftype{\C}}{}
\]

\[
\infer
  {\oftype\G{\base}{\C}}
  {}
\qquad
\infer
  {\oftype\G{\lp{r}}{\C}}
  {\wfdim{r}}
\qquad
\infer
  {\eqtm\G{\lp{\e}}{\base}{\C}}
  {}
\]

\[
\infer
  {\oftype\G{\Celim{A}{M}{P}{x.L}}{A}}
  {\cwftype{A} &
   \oftype\G{M}{\C} &
   \oftype[\Psi,x]\G{L}{A} &
   (\forall\e)\ \eqtm\G{\dsubst{L}{\e}{x}}{P}{A}}
\]

\[
\infer
  {\eqtm\G{\Celim{A}{\base}{P}{x.L}}{P}{A}}
  {\cwftype{A} &
   \oftype[\Psi,x]\G{L}{A} &
   (\forall\e)\ \eqtm\G{\dsubst{L}{\e}{x}}{P}{A}}
\]

\[
\infer
  {\eqtm\G{\Celim{A}{\lp{r}}{P}{x.L}}{\dsubst{L}{r}{x}}{A}}
  {\cwftype{A} &
   \wfdim{r} &
   \oftype[\Psi,x]\G{L}{A} &
   (\forall\e)\ \eqtm\G{\dsubst{L}{\e}{x}}{P}{A}}
\]

\paragraph{Hcom}

\[
\infer
  {\oftype[\Psi,x]\G{\hcomgeneric{x}{A}}{A}}
  {\begin{array}{ll}
   &\cwftype[\Psi,x]{A} \\
   &\wfdim[\Psi,x]{r,r'} \\
   &\oftype[\Psi,x]\G{M}{A} \\
   (\forall\e) &\oftype[\Psi,y]{\dsubst{\G}{\e}{x}}{\dsubst{N^\e}{\e}{x}}{\dsubst{A}{\e}{x}} \\
   (\forall\e) &\eqtm{\dsubst{\G}{\e}{x}}{\dsubst{\dsubst{N^\e}{r}{y}}{\e}{x}}{\dsubst{M}{\e}{x}}{\dsubst{A}{\e}{x}}
   \end{array}}
\]

\[
\infer
  {\eqtm[\Psi,x]\G{\hcom{x}{A}{r}{r}{M}{y.N^0,y.N^1}}{M}{A}}
  {\begin{array}{ll}
   &\cwftype[\Psi,x]{A} \\
   &\wfdim{r} \\
   &\oftype[\Psi,x]\G{M}{A} \\
   (\forall\e) &\oftype[\Psi,y]{\dsubst{\G}{\e}{x}}{\dsubst{N^\e}{\e}{x}}{\dsubst{A}{\e}{x}} \\
   (\forall\e) &\eqtm{\dsubst{\G}{\e}{x}}{\dsubst{\dsubst{N^\e}{r}{y}}{\e}{x}}{\dsubst{M}{\e}{x}}{\dsubst{A}{\e}{x}}
   \end{array}}
\]

\[
\infer
  {\eqtm\G{\hcomgeneric{\e}{A}}{\dsubst{N^\e}{r'}{y}}{A}}
  {\begin{array}{l}
   \cwftype{A} \\
   \wfdim{r,r'} \\
   \oftype\G{M}{A} \\
   \oftype[\Psi,y]\G{N^\e}{A} \\
   \eqtm\G{\dsubst{N^\e}{r}{y}}{M}{A}
   \end{array}}
\]

\paragraph{Coe}

\[
\infer
  {\oftype\G{\coegeneric{x.A}}{\dsubst{A}{r'}{x}}}
  {\cwftype[\Psi,x]{A} &
   \wfdim{r,r'} &
   \oftype\G{M}{\dsubst{A}{r}{x}}}
\]

\paragraph{Not}

\[
\infer{\cwftype[\Psi,x]{\notb{x}}}{}
\qquad
\infer{\ceqtype{\notb{\e}}{\bool}}{}
\]

\[
\infer
  {\eqtm\G{\coe{x.\notb{x}}{\e}{\e}{M}}{M}{\bool}}
  {\oftype\G{M}{\bool}}
\qquad
\infer
  {\eqtm\G{\coe{x.\notb{x}}{\e}{\eb}{M}}{\notf{M}}{\bool}}
  {\oftype\G{M}{\bool}}
\]


\newpage

\bibliographystyle{plainnat}
\bibliography{heo}

\end{document}
